\let\JFP@linecountL\relax
\renewcommand{\cite}{\citep}
\crefname{fact}{fact}{facts}
\Crefname{fact}{Fact}{Facts}
\newcommand{\paul}[1]{}
\newcommand{\zena}[1]{}
\begin{document}

\journaltitle{arXiv}
\cpr{}
\doival{}
\def\doitext{}

\lefttitle{P.~Downen and Z.M.~Ariola}
\righttitle{Classical (Co)Recursion: Mechanics}

\newcommand{\hi}[2][black!15]{{\setlength{\fboxsep}{1pt}\kern-1pt{\colorbox{#1}{$#2$}}}}

\totalpg{\pageref{lastpage01}}
\jnlDoiYr{2021}

\title{Classical (Co)Recursion: Mechanics}

\begin{authgrp}
\author{Paul Downen} and~ \author{Zena M. Ariola}
\affiliation{University of Oregon \\
        (\email{\{pdownen,ariola\}@cs.uoregon.edu})}
\end{authgrp}


\begin{abstract}
  Primitive recursion is a mature, well-understood topic in the theory and
  practice of programming.  Yet its dual, primitive co\-recursion, is
  underappreciated and still seen as exotic.  We aim to put them both on equal
  footing by giving a foundation for primitive co\-recursion based on
  computation, giving a terminating calculus analogous to the original
  computational foundation of recursion.  We show how the implementation details
  in an abstract machine strengthens their connection, syntactically deriving
  co\-recursion from recursion via logical duality.  We also observe the impact
  of evaluation strategy on the computational complexity of primitive
  (co\-)recursive combinators: call-by-name allows for more efficient recursion,
  but call-by-value allows for more efficient co\-recursion.
\end{abstract}

\maketitle

\section{Introduction}
\label{sec:introduction}

Primitive recursion has a solid foundation in a variety of different fields.  In
the categorical setting, it can be seen in the structures of algebras.  In the
logical setting, it corresponds to proofs by induction.  And in the
computational setting, it can be phrased in terms of languages and type theories
with terminating loops, like G\"odel's System T \cite{SystemT}.  The latter
viewpoint of computation reveals a fine-grained lens with which we can study
the subtle impact of the primitive combinators that capture different forms
of recursion.  For example, the recursive combinators given by
\citet{Mendler87,Mendler88} yield a computational complexity for certain
programs when compared to encodings in System F
\cite{Bhm1985AutomaticSO,ProofsAndTypes}.  Recursive combinators have desirable
properties---like the fact that they always terminate---which make them useful for
the design of well-behaved programs \cite{MFP91,OrigamiProgramming}, also for
optimizations made possible by applying those properties and theorems
\cite{Malcom90}.

The current treatment of the dual of primitive recursion---\emph{primitive
  co\-recursion}---is not so fortunate.  Being the much less understood of the
two, co\-recursion is usually only viewed in light of this duality.
Consequently, co\-recursion tends to be relegated to a notion of co\-algebras
\cite{RuttenMethodofCoalgebra}, because only the language of category theory
speaks clearly enough about their duality.  This can be seen in the study of
co\-recursion schemes, where co\-algebraic ``anamorphisms'' \cite{MFP91} and
``apomorphisms'' \cite{Vene98functionalprogramming} are the dual counterparts to
algebraic ``paramorphisms'' \cite{Meertens92} and ``catamorphisms''
\cite{Hinze13}.  Yet the logical and computational status of co\-recursion is not
so clear.  For example, the introduction of stream objects is sometimes
described as the ``dual'' to the elimination of natural numbers
\cite{SangiorgiIntroCoinduction,RoyOPLSSCoinduction}, but how is this so?

The goal of this paper is to provide a purely computational and logical
foundation for primitive co\-recursion based on classical logic.  Specifically, we
will express different principles of co\-recursion in a small core calculus,
analogous to the canonical computational presentation of recursion
\cite{SystemT}.  Instead of the (co)\-algebraic approach, we derive the symmetry
between recursion and co\-recursion through the mechanics of programming
language implementations, formalized in terms of an abstract machine.  This
symmetry is encapsulated by the duality \cite{HaginoCodata,StructuralRecursion}
between \emph{data types}---defined by the \emph{structure} of objects---and
\emph{co\-data types}---defined by the \emph{behavior} of objects.

We begin in \cref{sec:rec-fun} with a review of the formalization of primitive
recursion in terms of a foundational calculus: System T \cite{SystemT}.  We
point out the impact of evaluation strategy on different primitive recursion
combinators, namely the \emph{recursor} and the \emph{iterator}:
\begin{itemize}[leftmargin=2em]
\item In call-by-value, the recursor is just as (in)efficient as the iterator.
\item In call-by-name, the recursor may end early; an asymptotic complexity
  improvement.
\end{itemize}
\Cref{sec:rec-mach} presents an abstract machine for both call-by-value and
-by-name evaluation, and unifies both into a single presentation
\cite{SequentMachines,SequentTutorial}. The lower level nature of the abstract
machine explicitly expresses how the recursor of inductive types, like numbers,
accumulates a continuation during evaluation, maintaining the progress of
recursion. This is implicit in the operational model of System T.  The machine
is shown correct, in the sense that a well typed program will always terminate
and produce an observable value (\cref{thm:t-type-safety,thm:t-termination}),
which in our case is a number.

\Cref{sec:corec-mach} continues by extending the abstract machine with the
primitive co\-recursor for streams.  The novelty is that this machine is derived
by applying syntactic duality to the previous one, naturally leading us to a
classical co\-recursive combinator with multiple outputs, modeled as multiple
continuations.  From duality in the machine, we can see that the co\-recursor
relies on a value accumulator; this is logically dual to the recursor's return
continuation.  Like recursion versus iteration, in \cref{sec:corec-vs-coiter} we
compare co\-recursion versus co\-iteration: co\-recursion can be more efficient
than co\-iteration by letting co\-recursive processes stop early.  Since
co\-recursion is dual to recursion, and call-by-value is dual to call-by-name
\cite{DualityOfComputation,CBVDualToCBN}, this improvement in algorithmic
complexity is only seen in call-by-value co\-recursion.  Namely:
\begin{itemize}[leftmargin=2em]
\item In call-by-name, the co\-recursor is just as (in)efficient as the
  co\-iterator.
\item In call-by-value, the co\-recursor may end early; an asymptotic complexity
  improvement.
\end{itemize}
Yet, even though we have added infinite streams, we don't want to ruin System
T's desirable properties.  So in \cref{sec:safety&termination}, we give an
interpretation of the type system which extends previous models of finite types
\cite{ClassicalStrongNormalization,DualityOfIntersectonUnionTypes} with the
(co)\-recursive types of numbers and streams.  The novel key step in reasoning
about (co)\-recursive types is in reconciling two well-known fixed point
constructions---Kleene's and Knaster-Tarski's---which is non-trivial for
classical programs with control effects.  This lets us show that, even with
infinite streams, our abstract machine is terminating and type safe
(\cref{thm:mach-corec-type-safety,thm:mach-corec-termination}).

Proofs to all theorems that follow are given in the appendix.


\section{Recursion on Natural Numbers: System T}
\label{sec:rec-fun}


\label{sec:system-t}


We start with G\"odel's System T \cite{SystemT}, a core calculus which allows us to define functions by  structural recursion.
Its syntax
is given in \cref{fig:t-syntax}.  It is a canonical extension of the
simply-typed $\lambda$-calculus, whose focus is on functions of type $A \to B$,
with ways to construct natural numbers of type $\Nat$.
The $\Nat$ type comes equipped with two
constructors $\Zero$ and $\Succ$, and 
a built-in recursor, which we write as
$\Rec M \As{} \{\Zero \to N \mid \Succ x \to y.N'\}$.
 This $\Rec$-expression analyzes $M$ to determine if it has the shape
$\Zero$ or $\Succ x$, and the matching branch is returned.  In addition to
binding the predecessor of $M$ to $x$ in the $\Succ x$ branch, the
\emph{recursive result}---calculated by replacing $M$ with its predecessor
$x$---is bound to $y$.

\begin{figure}
\begin{alignat*}{2}
  \<Type> &\ni{}&
  A, B
  &::= A \to B
  \Alt \Nat
  \\
  \<Term> &\ni{}&
  M, N
  &::= x
  \Alt \lambda x. M
  \Alt M ~ N
  \Alt \Zero
  \Alt \Succ M
  \Alt \Rec M \As \  \{\Zero \to N \mid \Succ x \to y. M\}
\end{alignat*}
\caption[System T]{System T: $\lambda$-calculus with numbers and recursion.}
\label{fig:t-syntax}
\end{figure}

\begin{figure}
\begin{minipage}{1.0\linewidth}
\begin{gather*}
  \axiom[\<Var>]
  {\Gamma, x \givestype A \entails x \givestype A}
  \\[\ruleskip]
  \infer[{\to}I]
  {\Gamma \entails \lambda x. M \givestype A \to B}
  {\Gamma, x \givestype A \entails M \givestype B}
  \quad
  \infer[{\to}E]
  {\Gamma \entails M ~ N \givestype B}
  {
    \Gamma \entails M \givestype A \to B
    &
    \Gamma \entails N \givestype A
  }
  \\[\ruleskip]
  \axiom[{\Nat}I_{\Zero}]
  {\Gamma \entails \Zero \givestype \Nat}
  \qquad
  \infer[{\Nat}I_{\Succ}]
  {\Gamma \entails \Succ M \givestype \Nat}
  {\Gamma \entails M \givestype \Nat}
  \\[\ruleskip]
  \infer[{\Nat}E]
  {
    \Gamma \entails
    \Rec M \As{}  \{\Zero \to N \mid \Succ x \to y.N'\} \givestype A
  }
  {
    \Gamma \entails M \givestype \Nat
    &
    \Gamma \entails N \givestype A
    &
    \Gamma, x \givestype \Nat, y \givestype A \entails N' \givestype A
  }
\end{gather*}
\end{minipage}
\caption{Type system of System T.}
\label{fig:t-type-system}
\end{figure}

\begin{figure}
\emph{Call-by-name values ($V$) and evaluation contexts ($E$)}:
\begin{align*}
  \<Value> \ni V, W &::= M
  &
  \<EvalCxt> \ni E
  &::= \hole
  \Alt E ~ N
  \Alt \Rec E \As \ \{ \Zero \to N \mid \Succ x \to y. N' \}
\end{align*}

\emph{Call-by-value values ($V$) and evaluation contexts ($E$)}:
\begin{alignat*}{3}
  \<Value> &\ni{}&
  V, W
  &::= x
  \Alt \lambda x. M
  \Alt \Zero
  \Alt \Succ V
  \\
  \<EvalCxt> &\ni{}&
  E
  &::= \hole
  \Alt E ~ N
  \Alt V ~ E
  \Alt \Succ E
  \Alt \Rec E \As \ \{ \Zero \to N \mid \Succ x \to y. N' \}
\end{alignat*}

\emph{Operational rules}
\begin{align*}
  (\beta_\to)&&
  (\lambda x. M) ~ V &\srd M\subst{x}{V}
  \\
  (\beta_{\Zero})&&
  \begin{aligned}
    \Rec& \Zero \As
    \\[-1ex]
    \{& \Zero \to N
    \\[-1ex]
    \mid& \Succ x \to y. N' \}    
  \end{aligned}
  &\srd
  N
  \\
  (\beta_{\Succ})&&
  \begin{aligned}
    \Rec& \Succ V \As
    \\[-1ex]
    \{& \Zero \to N
    \\[-1ex]
    \mid& \Succ x \to y. N' \}    
  \end{aligned}
  &\srd
  (\fn y N'\subst{x}{V})
  ~
  \left(
    \begin{aligned}
      \Rec ~& V \As
      \\[-1ex]
      \{& \Zero \to N
      \\[-1ex]
      \mid& \Succ x \to y. N' \}    
    \end{aligned}
  \right)
\end{align*}
\caption{Call-by-name and Call-by-value Operational semantics of System T.}
\label{fig:t-operation}
\end{figure}

The type system of System T is given in \cref{fig:t-type-system}.  The $\<Var>$,
${\to}I$ and ${\to}E$ typing rules are from the simply typed $\lambda$-calculus.
The two ${\Nat}I$ introduction rules give the types of the constructors of
$\Nat$, and the
${\Nat}E$ elimination rule types the $\Nat$ recursor.

System T's call-by-name and -value operational semantics are given in
\cref{fig:t-operation}.  Both of these evaluation strategies share operational
rules of the same form, with $\beta_\to$ being the well-known $\beta$ rule of
the $\lambda$-calculus, and $\beta_{\Zero}$ and $\beta_{\Succ}$ defining
recursion on the two $\Nat$ constructors.  The only difference between
call-by-value and -name evaluation lies in their notion of \emph{values} $V$
(\ie those terms which can be substituted for variables) and \emph{evaluation
  contexts} (\ie the location of the next reduction step to perform).  Note that
we take this notion seriously, and \emph{never} substitute a non-value for a
variable.  As such, the $\beta_{\Succ}$ rule does not substitute the recursive
computation $\Rec V \As \ \{ \Zero \to N \mid \Succ x \to y. N' \}$ for $y$,
since it might not be a value (in call-by-value).
The next reduction step depends on the evaluation strategy.  In call-by-name,
this next step is indeed to substitute
$\Rec V \As \ \{ \Zero \to N \mid \Succ x \to y. N' \}$ for $y$, and so we have:
\begin{spacing}
\begin{align*}
  \begin{aligned}
    \Rec& \Succ M \As
    \\
    \{& \Zero \to N
    \\
    \mid& \Succ x \to y. N' \}    
  \end{aligned}
  &\srds
  N'
  \bigsubs{
    \asub{x}{M},
    \abigsub{y}
    {
      \left(
      \begin{aligned}
        \Rec& ~M \As
        \\
        \{& \Zero \to N
        \\
        \mid& \Succ x \to y. N' \}    
      \end{aligned}
      \right)
    }
  }
\end{align*}
\end{spacing}
So call-by-name recursion is computed starting with the current (largest) number
first and ending with the smallest number needed (possibly the base case for
$\Zero$).  If a recursive result is not needed then it is not computed at all,
allowing for an early end of the recursion.  In contrast, call-by-value must
evaluate the recursive result first before it can be substituted for $y$.  As
such, call-by-value recursion \emph{always} starts by computing the base case
for $\Zero$ (whether or not it is needed), and the intermediate results are
propagated backwards until the case for the initial number is reached.  So
call-by-value allows for no opportunity to end the computation of $\Rec$ early.

\begin{example}
\label{ex:t-arith}

The common arithmetic functions $\<plus>$, $\<times>$, $\<pred>$, and $\<fact>$
can be written in System T as follows:
\begin{align*}
  \<plus>
  &=
  \lambda x. \lambda y.
  \Rec x \As \
  \{
  \Zero \to y
  \mid
  \Succ \blank \to z. \Succ z
  \}
  \\
  \<times>
  &=
  \lambda x. \lambda y.
  \Rec x \As \
  \{
  \Zero \to \Zero
  \mid
  \Succ \blank \to z. \<plus>~y~z
  \}
  \\
  \<pred>
  &= \lambda x.
  \Rec x \As \ \{ \Zero \to \Zero \mid \Succ x \to z.x \}
  \\
  \<fact>
  &=
  \lambda x.
  \Rec x \As \
  \{
  \Zero \to \Succ\Zero
  \mid
  \Succ y \to z. \<times>~(\Succ y)~z
  \}
\end{align*}
Executing $pred~(\Succ(\Succ\Zero))$ in call-by-name proceeds like so:
\begin{spacing}
\begin{align*}
  &
  \<pred>~(\Succ(\Succ\Zero))
  \\
  &\mapsto 
  \Rec \Succ(\Succ\Zero) \As \
  \{ \Zero \to \Zero \mid \Succ x \to z.x \}
  &&(\beta_\to)
  \\
  &\mapsto
  (\lambda z. \Succ\Zero)
  ~
  (\Rec \Succ\Zero \As \
  \{ \Zero \to \Zero \mid \Succ x \to z.x \})
  &&(\beta_{\Succ})
  \\
  &\mapsto
  \Succ\Zero
  &&(\beta_\to)
\end{align*}
\end{spacing}
Whereas, in call-by-value, the predecessor of both $\Succ \Zero$ and $\Zero$ is
computed even though these intermediate results are not needed in the end:
\begin{spacing}
\begin{align*}
  &
  \<pred>~(\Succ(\Succ\Zero))
  \\
  &\mapsto
  \Rec \Succ(\Succ\Zero) \As \
  \{ \Zero \to \Zero \mid \Succ x \to z.x \}
  &&(\beta_\to)
  \\
  &\mapsto
  (\lambda z. \Succ\Zero)
  ~
  (\Rec \Succ\Zero  \As \ \{ \Zero \to \Zero \mid \Succ x \to z.x \})
  &&(\beta_{\Succ})
  \\
  &\mapsto
  (\lambda z. \Succ\Zero)
  ~
  ((\lambda z.\Zero)
  ~
  (\Rec \Zero \As \
  \{ \Zero \to \Zero \mid \Succ x \to z.x \}))
  &&(\beta_{\Succ})
  \\
  &\mapsto
  (\lambda z. \Succ\Zero)~ ((\lambda z.\Zero) ~ \Zero)
  &&(\beta_{\Zero})
  \\
  &\mapsto
  (\lambda z. \Succ\Zero) ~ \Zero
  &&(\beta_\to)
  \\
  &\mapsto
  \Succ\Zero
  &&(\beta_\to)
\end{align*}
\end{spacing}

In general, $\<pred>$ is a constant time ($O(1)$) function over the size of its
argument when following the call-by-name semantics, which computes the
predecessor of any natural number in a fixed number of steps.  In contrast,
$\<pred>$ is a linear time ($O(n)$) function when following the call-by-value
semantics, where $\<pred>~(\Succ^n\Zero)$ executes with a number of steps
proportional to the size $n$ of its argument because it requires at least $n$
applications of the $\beta_{\Succ}$ rule before an answer can be returned.
\end{example}



\section{Recursion in an Abstract Machine}
\label{sec:rec-mach}

In order to explore the lower-level performance details of recursion, we can use
an \emph{abstract machine} for modeling an implementation of System T.  Unlike
the operational semantics given in \cref{fig:t-operation}, which has to search
arbitrarily deep into an expression for the next redex at every step, an
abstract machine explicitly includes this search in the computation itself.  As
such, every step of the machine can be applied by matching only on the top-level
form of the machine state, which more closely models a real implementation in a
machine where each step is performed sequentially in a fixed amount of time. Thus,
in an abstract machine instead of working with terms one works with
configurations of the form:
$$\cut{M}{E}$$
where $M$ is a term also called a producer, and $E$ is a continuation or evaluation context, also
called a consumer. A state, also called a command, puts together a producer and a consumer, so that the
output of $M$ is given as the input to $E$. 
We first present distinct abstract machines for call-by-name and -value,
we then smooth out the differences in the uniform
abstract machine. 

\subsection{Call-by-Name Abstract Machine}
 The call-by-name abstract machine 
 for System T is based on the Krivine
machine \cite{KrivineMachine}:%
\footnote{Our primary interest in abstract machines here is in the accumulation
  and use of continuations.  For simplicity, we leave out other common details
  sometimes specified by abstract machines, such as modeling a concrete
  representation of substitution and environments.}
\begin{align*}
  \cut{M~N}{E}
  &\srd
  \cut{M}{\app N E}
  \\
  \cut{\Rec M \As \{\dots\} }{E}
  &\srd
  \cut{M}{\Rec \{\dots\}  \With E}
  \\\\
  \cut{\fn x M}{\app N E}
  &\srd
  \cut{M\subst{x}{N}}{E}
  \\
  \bigcut
  {\Zero}
  {
    \begin{aligned}
      &\Rec
      \begin{aligned}[t]
        \{&
        \Zero \to N
        \\[-1ex]
        \mid&
        \Succ x \to y.N'
        \}
      \end{aligned}
      \\[-1ex]
      &\With E
    \end{aligned}
  }
  &\srd
  \cut{N}{E}
  \\
  \bigcut
  {\Succ M}
  {
    \begin{aligned}
      &\Rec
      \begin{aligned}[t]
        \{&
        \Zero \to N
        \\[-1ex]
        \mid&
        \Succ x \to y.N'
        \}
      \end{aligned}
      \\[-1ex]
      &\With E
    \end{aligned}
  }
  &\srd
  \bigcut
  {
    N'
    \bigsubs
    {
      \abigsub{x}{M},
      \abigsub{y}
      {
        \begin{aligned}
          &\Rec M \As
          \begin{aligned}[t]
            \{&
            \Zero \to N
            \\[-1ex]
            \mid&
            \Succ x \to y.N'
            \}
          \end{aligned}
        \end{aligned}
      }
    }
  }
  {E}
\end{align*}
The first two rules are \emph{refocusing} rules that move the attention of the
machine closer to the next reduction building a larger continuation: 
$\app N E$ corresponds to $E[\hole~N]$, and 
$\Rec \{\Zero \to N \mid \Succ x \to y.N'\} \With E$ corresponds to
$E[\Rec \hole \As \{\Zero \to N \mid \Succ x \to y.N'\} ]$. The latter three
rules are \emph{reduction} rules which correspond to steps of the operational
semantics in \cref{fig:t-operation}.  

\subsection{Call-by-Value Abstract Machine}
A CEK-style \cite{CEK}, call-by-value abstract machine for
System T -- which evaluates applications $M_1~M_2~\dots~M_n$ left-to-right to
match the call-by-value semantics in \cref{fig:t-operation}:
\begin{align*}
  \cut{M~N}{E}
  &\srd
  \cut{M}{\app N E}
  \\
  \cut{V}{\app R E}
  &\srd
  \cut{R}{V \comp E}
  \\
  \cut{V}{V' \comp E}
  &\srd
  \cut{V'}{\app V E}
  \\
  \cut{\Rec M \As \{\dots\}}{E}
  &\srd
  \cut{M}{\Rec \{\dots\} \With E}
  \\
  \cut{\Succ R}{E}
  &\srd
  \cut{R}{\Succ \comp E}
  \\
  \cut{V}{\Succ \comp E}
  &\srd
  \cut{\Succ V}{E}
  \\\\
  \cut{\fn x M}{\app V E}
  &\srd
  \cut{M\subst{x}{V}}{E}
  \\
  \bigcut
  {\Zero}
  {
    \begin{aligned}
      &\Rec
      \begin{aligned}[t]
        \{&
        \Zero \to N
        \\[-1ex]
        \mid&
        \Succ x \to y.N'
        \}
      \end{aligned}
      \\[-1ex]
      &\With E
    \end{aligned}
  }
  &\srd
  \cut{N}{E}
  \\
  \bigcut
  {\Succ V}
  {
    \begin{aligned}
      &\Rec
      \begin{aligned}[t]
        \{&
        \Zero \to N
        \\[-1ex]
        \mid&
        \Succ x \to y.N'
        \}
      \end{aligned}
      \\[-1ex]
      &\With E
    \end{aligned}
  }
  &\srd
  \bigcut
  {V}
  {
    \begin{aligned}
      &\Rec
      \begin{aligned}[t]
        \{&
        \Zero \to N
        \\[-1ex]
        \mid&
        \Succ x \to y.N'
        \}
      \end{aligned}
      \\[-1ex]
      &\With{} ((\fn y N') \comp E)
    \end{aligned}
  }
\end{align*}
where $R$ stands for a \emph{non-value} term.  Since the call-by-value
operational semantics has more forms of evaluation contexts, this machine has
additional refocusing rules for accumulating more forms of continuations
including applications of functions ($V \comp E$ corresponding to $E[V~\hole]$)
and the successor constructor ($\Succ \comp E$ corresponding to
$E[\Succ~\hole]$).  Also note that the final reduction rule for the $\Succ$ case
of recursion is different, accounting for the fact that recursion in
call-by-value follows a different order than in call-by-name.  
Indeed, the recursor
must explicitly accumulate and build upon a continuation, ``adding to'' the
place it returns to with every recursive call.
But otherwise,
the reduction rules are the same.

\subsection{Uniform Abstract Machine}
\label{sec:uniform_abstract_machine}

We now unite 
 both evaluation strategies with a common abstract machine, shown in
\cref{fig:t-mach-syntax}. As before, machine
configurations are of the form $$\cut{v}{e}$$
 which  put together a term $v$ and a continuation $e$ (often
referred  to as  a co\-term). However, both
 terms and continuations are more general than before. 
 Our uniform abstract machine is based on the sequent
calculus, a symmetric language reflecting many dualities of classical logic
 \cite{DualityOfComputation,CBVDualToCBN}.

Unlike the previous machines, continuations go beyond evaluation contexts
and include $\inp x \cut{v}{e}$,
which is a continuation
that binds its input value to $x$ and then steps to the machine state
$\cut{v}{e}$.  This new form allows us to express the additional
call-by-value evaluation contexts:
$V \comp E$ becomes 
$\inp x \cut{V}{\app x E}$, and
$\Succ \comp E$ is $\inp x \cut{\Succ x}{E}$.
Evaluation contexts are also more restrictive than before; only values can be pushed
on the calling stack. 
We  represent the application continuation $\app R E$
with a non-value argument $R$ by naming its partner---the generic value $V$---with $y$:
$ \inp y \cut{R}{\inp x \cut{y}{\app x E}}$.

The refocusing rules can be subsumed all together by extending  terms with
a dual form of $\tmu$-binding. The $\mu$-abstraction expression $\outp\alpha \cut{v}{e}$ 
binds its continuation to $\alpha$ and then steps to the machine state
$\cut{v}{e}$.  With $\mu$-bindings, all the refocusing rules for
call-by-name and -value  can be encoded in terms of $\mu$
and $\tmu$.
It is also not necessary to do these steps at run-time, but can all be done before execution
through a compilation step. Indeed,
 all of System T terms can be translated to a
smaller language, as shown in \cref{fig:t-compile} where $R$ stands for a non-value.
The target language of this compilation step becomes the language of the uniform abstract machine. This language does not include anymore 
applications  $M~N$
and the recursive term $\Rec M \As \{\dots\}$.  Also, unlike System T,
the syntax of terms and co\-terms depends on the
definition of values and co\-values;
$\Succ V$, $\Rec \{\dots\} \With E$, and call stacks $\app V E$ 
are valid in both call-by-name and -value, just with different definitions of
$V$ and $E$.
 General terms also include the $\mu$- and
$\tmu$-binders described above: $\outp \alpha c$ 
 is not a value in call-by-value, and $\inp x c$ is not a co\-value in
call-by-name.
So for example, $\Succ (\outp \alpha c)$ is not a legal term
in call-by-value, similarly $\app {(\outp \alpha c)} {\beta}$
is not a legal call stack in call-by-name. 

As with System T,
the notions of values and co-values drive the reduction rules, as shown in
 \cref{fig:t-mach-operation}.
In
particular, the $\mu$ and $\tmu$ rules will only substitute a value for a
variable or a co\-value for a co\-variable, respectively.  Likewise, the
$\beta_\to$ rule implements function calls, but taking the next argument value
off of a call stack and plugging it into the function.
The only remaining rules are $\beta_{\Zero}$ and $\beta_{\Succ}$ for reducing a
recursor when given a number constructed by $\Zero$ or $\Succ$.  While the
$\beta_{\Zero}$ is exactly the same as it was previously in both specialized
machine, notice how $\beta_{\Succ}$ is different.  Rather than committing to one
evaluation order or the other, $\beta_{\Succ}$ is neutral: the recursive
predecessor (expressed as the term
$\outp\alpha\cut{V}{\Rec\{\dots\}\With\alpha}$ on the right-hand side) is
neither given precedence (at the top of the command) nor delayed (by
substituting it for $y$).  Instead, this recursive predecessor is bound to $y$
with a $\tmu$-abstraction.  This way, the correct evaluation order can be
decided in the next step by either an application of $\mu$ or $\tmu$ reduction.

\begin{figure}
\emph{Commands ($c$), general terms ($v$), and general co\-terms ($e$)}:
\begin{gather*}
\begin{aligned}
  \<Command> \ni
  c &::= \cut{v}{e}
  &
  \<Term> \ni
  v,w &::= \outp \alpha c \Alt V
  &
  \<CoTerm> \ni
  e,f &::= \inp x c \Alt E
\end{aligned}
\end{gather*}

\emph{Call-by-name values ($V$) and evaluation contexts ($E$)}:
\begin{alignat*}{3}
  \<Value> &\ni{}&
  V,W
  &::= \outp \alpha c
  \Alt x
  \Alt \fn x v
  \Alt \Zero
  \Alt \Succ V 
  \\
  \<CoValue> &\ni{}&
  E,F
  &::= \alpha
  \Alt \app V E 
  \Alt \Rec \{\Zero \to v \mid \Succ x \to y.w\}  \With E
\end{alignat*}

\emph{Call-by-value values ($V$) and evaluation contexts ($E$)}:
\begin{alignat*}{3}
  \<Value> &\ni{}&
  V,W
  &::= x
  \Alt \fn x v
  \Alt \Zero
  \Alt \Succ V
  \\
  \<CoValue> &\ni{}&
  E,F
  &::= \inp x c
  \Alt \alpha
  \Alt \app V E
  \Alt \Rec \{\Zero \to v \mid \Succ x \to y.w\}    \With E
\end{alignat*}

\newcommand{\murule}{\mu}
\newcommand{\tmurule}{\tmu}
\emph{Operational reduction rules}:
\begin{align*}
  (\murule)&&
  \cut{\outp \alpha c}{E}
  &\srd
  c\subst{\alpha}{E}
  \\
  (\tmurule)&&
  \cut{V}{\inp x c}
  &\srd
  c\subst{x}{V}
  \\
  (\beta_\to)&&
  \cut{\fn x v}{\app V E}
  &\srd
  \cut{v\subst{x}{V}}{E}
  \\
  (\beta_{\Zero})&&
  \BIgcut
  {\Zero}
  {
    \begin{aligned}
      &\Rec
      \begin{aligned}[t]
        \{&
        \Zero \to v
        \\[-0.5ex]
        \mid&
        \Succ x \to y.w
        \}
      \end{aligned}
      \\[-0.5ex]
      &\With E
    \end{aligned}
  }
  &\srd
  \cut{v}{E}
  \\
  (\beta_{\Succ})&&
  \BIgcut
  {\Succ V}
  {
    \begin{aligned}
      &\Rec
      \begin{aligned}[t]
        \{&
        \Zero \to v
        \\[-0.5ex]
        \mid&
        \Succ x \to y.w
        \}
      \end{aligned}
      \\[-0.5ex]
      &\With E
    \end{aligned}
  }
  &\srd
  \BIgcut
  {\outp \alpha
    \BIgcut
    {V}
    {
      \begin{aligned}
        &\Rec
        \begin{aligned}[t]
          \{&
          \Zero \to v
          \\[-0.5ex]
          \mid&
          \Succ x \to y.w
          \}
        \end{aligned}
        \\[-0.5ex]
        &\With \alpha
      \end{aligned}
    }
  }
  {\inp y \cut{w\subst{x}{V}}{E}}
\end{align*}
\caption{Uniform, recursive abstract machine for System T.}
\label{fig:t-mach-syntax}
\label{fig:t-mach-operation}
\end{figure}

\begin{figure}
\small
\begin{gather*}
\begin{aligned}
  \trans{x} &\defeq x
  \\
  \trans{\fn x M} &\defeq \fn x \trans{M}
  \\
  \trans{\Zero} &\defeq \Zero
  \\
  \trans{\Succ V}
  &\defeq
  \Succ \trans{V}
  \\
  \trans{\Succ R}
  &\defeq
  \outp\alpha
  \cut{\trans{R}}{\inp x \cut{\Succ x}{\alpha}}
  \\
  \trans{M ~ N}
  &\defeq
  \outp\alpha
  \cut{\trans{M}}{\inp x \cut{\trans{N}}{\inp y \cut{x}{\app y \alpha}}}
  \\
  \trans{\Rec R \As \ \{\Zero {\to} M \mid \Succ x {\to} y. N\}}
  &\defeq
  \outp\alpha
  \cut
  {\trans{R}}
  {\Rec \{\Zero {\to} \trans{M} \mid \Succ x {\to} y. \trans{N}\} \With \alpha}
\end{aligned}
\end{gather*}
\caption{The translation from System T to the uniform abstract
  machine.}
\label{fig:t-compile}
\end{figure}

\begin{intermezzo}
We can now summarize how some basic concepts of recursion are directly modeled in our syntactic framework:
\begin{itemize}[-]
\item {\em With inductive data types, values are constructed and the
    consumer is a recursive process that uses the data}.  
 Natural numbers are terms or producers, and their use is a process which is triggered when the term becomes a value.
\item {\em Construction of data is finite and its consumption is (potentially)
    infinite, in the sense that there must be no limit to the size of the data
    that a consumer can process}.  We can only build values from a finite number
  of constructor applications.  However, the consumer does not know how big of
  an input it will be given, so it has to be ready to handle data structures of
  any size.  In the end, termination is preserved because only finite values are
  consumed.
\item {\em Recursion uses the data, rather than producing it}.  $\Rec$ is a
  co\-term, not a term.
\item {\em Recursion starts big, and potentially reduces down to a base case}.  As shown in
  the reduction rules, the recursor brakes down the data structure and
might end when
  the base case is reached.
\item {\em The values of a data structures are all independent from each other
    but the results of the recursion potentially depend on each other}.  In the
  reduction rule for the successor case, the result at a number $n$ might depend on
  the result at $n\mbox{-}1$.
\end{itemize}
\end{intermezzo}

\subsection{Examples of Recursion}
\label{sec:rec-examples}

By being restricted to capturing only (co\-)values, the $\mu$ and $\tmu$ rules
effectively implement the chosen evaluation strategy.  For example, consider the
application $(\lambda z. \Succ\Zero)~((\lambda x. x)~\Zero)$.  Call-by-name
evaluation will reduce the outer application first and return $\Succ\Zero$ right
away, whereas call-by-value evaluation will first reduce the inner application
$((\lambda x.x)~\Zero)$.  How is this different order of evaluation made explicit in the
abstract machine, which uses the same set of rules in both cases?  First,
consider the translation of
$\trans{(\lambda z. x)~((\lambda x. x)~y)}$:
\begin{align*}
  \trans{(\lambda z. x)~((\lambda x. x)~y)}
  &\defeq
  \outp\alpha
  \cut{\lambda z. x}
  {
    \inp f
    \cut
    {
      \outp\beta
      \cut
      {\lambda x. x}
      {\inp g \cut{y}{\inp y \cut{g}{\app y \beta}}}
    }
    {\inp z \cut{f}{\app{z}{\alpha}}}
  }
\end{align*}
To execute it, we need to put it in interaction with an actual context.  In our
case, we can simply use a co\-variable $\alpha$.  Call-by-name execution then
proceeds as:
\begin{spacing}
\begin{align*}
  &
  \cut
  {
    \outp\alpha
    \cut{\lambda z. x}
    {
      \inp f
      \cut
      {
        \outp\beta
        \cut
        {\lambda x. x}
        {\inp g \cut{y}{\inp y \cut{g}{\app y \beta}}}
      }
      {\inp z \cut{f}{\app{z}{\alpha}}}
    }
  }
  {\alpha}
  \\
  &\mapsto
  \cut{\lambda z. x}
  {
    \inp f
    \cut
    {
      \outp\beta
      \cut
      {\lambda x. x}
      {\inp g \cut{y}{\inp y \cut{g}{\app y \beta}}}
    }
    {\inp z \cut{f}{\app{z}{\alpha}}}
  }
  &&(\mu)
  \\
  &\mapsto
  \cut
  {
    \outp\beta
    \cut
    {\lambda x. x}
    {\inp g \cut{y}{\inp y \cut{g}{\app y \beta}}}
  }
  {\inp z \cut{\lambda z. x}{\app{z}{\alpha}}}
  &&(\tmu)
  \\
  &\mapsto
  \cut
  {\lambda z. x}
  {
    \app
    {
      \outp\beta
      \cut
      {\lambda x. x}
      {\inp g \cut{y}{\inp y \cut{g}{\app y \beta}}}
    }
    {\alpha}
  }
  &&(\tmu*)
  \\
  &\mapsto
  \cut{x}{\alpha}
  &&(\beta_\to)
\end{align*}
\end{spacing}
Whereas call-by-value execution proceeds as:
\begin{spacing}
\begin{align*}
  &
  \cut
  {
    \outp\alpha
    \cut{\lambda z. x}
    {
      \inp f
      \cut
      {
        \outp\beta
        \cut
        {\lambda x. x}
        {\inp g \cut{y}{\inp y \cut{g}{\app y \beta}}}
      }
      {\inp z \cut{f}{\app{z}{\alpha}}}
    }
  }
  {\alpha}
  \\
  &\mapsto
  \cut{\lambda z. x}
  {
    \inp f
    \cut
    {
      \outp\beta
      \cut
      {\lambda x. x}
      {\inp g \cut{y}{\inp y \cut{g}{\app y \beta}}}
    }
    {\inp z \cut{f}{\app{z}{\alpha}}}
  }
  &&(\mu)
  \\
  &\mapsto
  \cut
  {
    \outp\beta
    \cut
    {\lambda x. x}
    {\inp g \cut{y}{\inp y \cut{g}{\app y \beta}}}
  }
  {\inp z \cut{\lambda z. x}{\app{z}{\alpha}}}
  &&(\tmu)
  \\
  &\mapsto
  \cut
  {\lambda x. x}
  {
    \inp g
    \cut
    {y}
    {
      \inp y
      \cut
      {g}
      {\app{y}{\inp z \cut{\lambda z. x}{\app{z}{\alpha}}}}
    }
  }
  &&(\mu*)
  \\
  &\mapsto
  \cut
  {y}
  {
    \inp y
    \cut
    {\lambda x. x}
    {\app{y}{\inp z \cut{\lambda z. x}{\app{z}{\alpha}}}}
  }
  &&(\tmu)
  \\
  &\mapsto
  \cut
  {\lambda x. x}
  {\app{y}{\inp z \cut{\lambda z. x}{\app{z}{\alpha}}}}
  &&(\tmu)
  \\
  &\mapsto
  \cut
  {y}
  {\inp z \cut{\lambda z. x}{\app{z}{\alpha}}}
  &&(\beta_\to)
  \\
  &\mapsto
  \cut{\lambda z. x}{\app{y}{\alpha}}
  &&(\tmu)
  \\
  &\mapsto
  \cut{x}{\alpha}
  &&(\beta_\to)
\end{align*}
\end{spacing}
The first two steps are the same for either evaluation strategy.  Where the two
begin to diverge is in the third step (marked by a $*$), which is an interaction
between a $\mu$- and a $\tmu$-binder.  In call-by-name, the $\tmu$ rule takes
precedence (because a $\tmu$-co\-term is not a co\-value), which leads to the
next step which throws away the first argument, unevaluated.  In call-by-value,
the $\mu$ rule takes precedence (because a $\mu$-term is not a value), which
leads to the next step which evaluates the first argument.


Consider the System T definition of $\<plus>$ from \cref{ex:t-arith},
which is expressed by the machine term
\begin{align*}
  \<plus>
  &=
  \lambda x.\lambda y.\outp\beta
  \cut
  {x}
  {\Rec\{\Zero \to y \mid \Succ\blank \to z. \Succ z\}\With\beta}
\end{align*}

The application $\<plus>~2~3$ is then expressed as
$ \outp \alpha   \cut   {\<plus>}   {\app{2}{\app{3}{\alpha}}}$,
which is obtained by reducing some $\mu$- and $\tmu$-bindings in
advance.  Putting this term in the context $\alpha$, in call-by-value it
executes (eliding the branches of the $\Rec$-continuation, which are the same in
every following step) like so:
\begin{spacing}
\begin{align*}
  &
  \cut
  {
    \outp \alpha
    \cut
    {\<plus>}
    {\app{2}{\app{3}{\alpha}}}
  }
  {\alpha}
  \\
  &\mapsto
  \cut
  {\<plus>}
  {\app{2}{\app{3}{\alpha}}}
  \\
  &\mapsto
  \cut
  {
    \lambda y.\outp\beta
    \cut
    {2}
    {\Rec\{\Zero \to y \mid \Succ\blank \to z. \Succ z\}\With\beta}
  }
  {\app{3}{\alpha}}
  &&(\beta_\to)
  \\
  &\mapsto
  \cut
  {
    \outp\beta
    \cut
    {2}
    {\Rec\{\Zero \to 3 \mid \Succ\blank \to z. \Succ z\}\With\beta}
  }
  {\alpha}
  &&(\beta_\to)
  \\
  &\mapsto
  \cut
  {\Succ(\Succ\Zero)}
  {
    \Rec\{\Zero \to 3 \mid \Succ\blank \to z. \Succ z\}
    \With{\color{blue}\alpha}
  }
  &&(\mu)
  \\
  &\mapsto
  \cut
  {
    \outp\beta
    \cut
    {\Succ\Zero}
    {\Rec\{\dots\}\With\beta}
  }
  {
    \inp z \cut{\Succ z}{\color{blue}\alpha}
  }
  &&(\beta_{\Succ})
  \\
  &\mapsto
  \cut
  {\Succ\Zero}
  {
    \Rec\{\dots\}
    \With{\color{blue}\inp z \cut{\Succ z}{\alpha}}
  }
  &&(\mu)
  \\
  &\mapsto
  \cut
  {
    \outp\beta
    \cut
    {\Zero}
    {\Rec\{\dots\}\With\beta}
  }
  {
    \inp{z'}\cut{\Succ z'}{\color{blue}\inp z\cut{\Succ z}{\alpha}}
  }
  &&(\beta_{\Succ})
  \\
  &\mapsto
  \cut
  {\Zero}
  {
    \Rec\{\dots\}
    \With{\color{blue}\inp{z'}\cut{\Succ z'}{\inp z \cut{\Succ z}{\alpha}}}
  }
  &&(\mu)
  \\
  &\mapsto
  \cut
  {3}
  {\color{blue}\inp{z'}\cut{\Succ z'}{\inp z \cut{\Succ z}{\alpha}}}
  &&(\beta_{\Zero})
  \\
  &\mapsto
  \cut
  {\Succ 3}
  {\inp z \cut{\Succ z}{\alpha}}
  &&(\tmu)
  \\
  &\mapsto
  \cut{\Succ(\Succ 3)}{\alpha}
  &&(\tmu)
\end{align*}
\end{spacing}
Notice how this execution shows how, during the recursive traversal of the data
structure, the return continuation of the recursor is updated (in blue) to keep
track of the growing context of pending operations, which must be fully
processed before the final value of 5 ($\Succ(\Succ 3)$) can be returned to the
original caller ($\alpha$). This update is implicit in the
$\lambda$-calculus-based System T, but becomes explicit in the abstract machine.
In contrast, call-by-name only computes numbers as far as they are needed,
otherwise stopping at the outermost constructor.  The call-by-name execution of
the above command proceeds as follows, after fast-forwarding to the first application of $\beta_{\Succ}$:
\begin{spacing}
\begin{align*}
&
  \cut
  {
    \outp \alpha
    \cut
    {\<plus>}
    {\app{2}{\app{3}{\alpha}}}
  }
  {\alpha}
\\
& \dmapsto 
  \cut
  {\Succ(\Succ\Zero)}
  {
    \Rec\{\Zero \to 3 \mid \Succ\blank \to z. \Succ z\}
    \With{\alpha}
  }
  \\
  &\mapsto
  \cut
  {
    \outp\beta
    \cut
    {\Succ\Zero}
    {\Rec\{\Zero \to 3 \mid \Succ\blank \to z. \Succ z\}\With\beta}
  }
  {
    \inp z \cut{\Succ z}{\alpha}
  }
  &&(\beta_{\Succ})
  \\
  &\mapsto
  \cut
  {
    \Succ
    (
    \outp\beta
    \cut
    {\Succ\Zero}
    {\Rec\{\Zero \to 3 \mid \Succ\blank \to z. \Succ z\}\With\beta}
    )
  }
  {\alpha}
  &&(\tmu)
\end{align*}
\end{spacing}
Unless $\alpha$ demands to know something about the predecessor of this number,
the term $\outp\beta\cut{\Succ\Zero}{\Rec\{\dots\}\With\beta}$ will not be
computed. \\

Now consider $\<pred>~(\Succ(\Succ\Zero))$, which can be expressed in the
machine as:
\begin{align*}
  &
  \outp \alpha
  \cut
  {\<pred>}
  {\app{\Succ(\Succ\Zero)}{\alpha}}
  \\
  \<pred>
  &=
  \lambda x.
  \outp\beta\cut{x}{\Rec\{\Zero\to\Zero\mid\Succ x\to z.x\}\With\beta}
\end{align*}
In call-by-name it executes with respect to $\alpha$ like so:
\begin{spacing}
\begin{align*}
  &
  \cut
  {
    \outp \alpha
    \cut
    {\<pred>}
    {\app{\Succ(\Succ\Zero)}{\alpha}}
  }
  {\alpha}
  \\
  &\mapsto
    \cut
    {\<pred>}
    {\app{\Succ(\Succ\Zero)}{\alpha}}
  &&(\mu)
  \\
  &\mapsto
  \cut
  {
    \outp\beta
    \cut
    {\Succ(\Succ\Zero)}
    {\Rec\{\Zero\to\Zero\mid\Succ x\to z.x\}\With\beta}
  }
  {\alpha}
  &&(\beta_\to)
  \\
  &\mapsto
  \cut
  {\Succ(\Succ\Zero)}
  {\Rec\{\Zero\to\Zero\mid\Succ x\to z.x\}\With\alpha}
  &&(\mu)
  \\
  &\mapsto
  \cut
  {
    \outp\beta
    \cut
    {\Succ\Zero}
    {\Rec\{\Zero\to\Zero\mid\Succ x\to z.x\}\With\beta}
  }
  {\inp z \cut{\Succ\Zero}{\alpha}}
  &&(\beta_{\Succ})
  \\
  &\mapsto
  \cut{\Succ\Zero}{\alpha}
  &&(\tmu)
\end{align*}
\end{spacing}
Notice how, after the first application of $\beta_{\Succ}$, the computation
finishes in just one $\tmu$ step, even though we began recursing on the number
2.  In call-by-value instead, we have to continue with the recursion even though
its result is not needed.  Fast-forwarding to the first application of the
$\beta_{\Succ}$ rule, we have:
\begin{spacing}
\begin{align*}
  &
  \cut
  {\Succ(\Succ\Zero)}
  {
    \Rec \{\Zero\to\Zero\mid\Succ x\to z.x\}
    \With {\color{blue}\alpha}
  }
  \\
  &\mapsto
  \cut
  {
    \outp\beta
    \cut
    {\Succ\Zero}
    {\Rec\{\dots\}\With\beta}
  }
  {\inp z \cut{\Succ\Zero}{\color{blue}\alpha}}
  &&(\beta_{\Succ})
  \\
  &\mapsto
  \cut
  {\Succ\Zero}
  {
    \Rec \{\dots\}
    \With {\color{blue}\inp z \cut{\Succ\Zero}{\alpha}}
  }
  &&(\mu)
  \\
  &\mapsto
  \cut
  {
    \outp\beta
    \cut
    {\Zero}
    {\Rec\{\dots\}\With\beta}
  }
  {\inp z\cut{\Zero}{\color{blue}\inp z\cut{\Succ\Zero}{\alpha}}}
  &&(\beta_{\Succ})
  \\
  &\mapsto
  \cut
  {\Zero}
  {
    \Rec\{\dots\}
    \With{\color{blue}\inp z\cut{\Succ\Zero}{\inp z\cut{\Succ\Zero}{\alpha}}}
  }
  &&(\mu)
  \\
  &\mapsto
  \cut
  {\Zero}
  {\color{blue}\inp z\cut{\Succ\Zero}{\inp z\cut{\Succ\Zero}{\alpha}}}
  &&(\beta_{\Zero})
  \\
  &\dmapsto
  \cut{\Succ\Zero}{\alpha}
  &&(\tmu)
\end{align*}
\end{spacing}

\subsection{Recursion vs Iteration: Expressiveness and Efficiency}
\label{sec:rec-vs-iter}
\label{sec:rec-vs-iter-mach}

Recall how the recursor performs two jobs at the same time: finding the
predecessor of a natural number as well as calculating the recursive result
given for the predecessor.  These two functionalities can be captured separately
by continuations that perform shallow \emph{case analysis} and the primitive
\emph{iterator}, respectively.  Rather than including them as primitives, both
can be expressed as syntactic sugar in the form macro-expansions in the language
of the abstract machine like so:
\begin{spacing}
\begin{align*}
\begin{aligned}
  &
  \Case
  \begin{alignedat}[t]{2}
    \{&
    \Zero &&\to v
    \\
    \mid&
    \Succ x &&\to w
    \}
  \end{alignedat}
  \\
  &\With E
\end{aligned}
&\defeq\!\!
\begin{aligned}
  &\Rec
  \begin{alignedat}[t]{2}
    \{&
    \Zero &&\to v
    \\
    \mid&
    \Succ x &&\to \blank\,. w
    \}
  \end{alignedat}
  \\
  &\With E
\end{aligned}
&~~
\begin{aligned}
  &
  \Iter
  \begin{alignedat}[t]{2}
    \{&
    \Zero &&\to v
    \\
    \mid&
    \Succ &&\to x.w
    \}
  \end{alignedat}
  \\
  &\With E
\end{aligned}
&\defeq\!\!
\begin{aligned}
  &\Rec
  \begin{alignedat}[t]{2}
    \{&
    \Zero &&\to v
    \\
    \mid&
    \Succ \blank &&\to x.w
    \}
  \end{alignedat}
  \\
  &\With E
\end{aligned}
\end{align*}
\end{spacing}
The only cost of this encoding of $\Case$ and $\Iter$ is an unused variable
binding, which is easily optimized away.  In practice, this encoding of
iteration will perform exactly the same as if we had taken $\Iter$ation as a
primitive.

While it is less obvious, going the other way is still possible.  It is well
known that primitive recursion can be encoded as a macro-expansion of iteration
using pairs.  The usual macro-expansion in System T is:
\begin{spacing}
\begin{align*}
  \begin{aligned}
    &
    \Rec M \As
    \\
    &\quad
    \begin{alignedat}{2}
      &\{~
      \Zero &&\to N
      \\
      &\mid
      \Succ x &&\to y. N'
      \}
    \end{alignedat}
  \end{aligned}
  &\defeq
  \begin{aligned}
    &\Snd~
    (
    \Iter M \As
    \\
    &\qquad\quad
    \begin{alignedat}{2}
      &\{~
      \Zero &&\to (\Zero, N)
      \\
      &\mid
      \Succ &&\to (x,y).\, (\Succ x, N')
      \}
      )
    \end{alignedat}
  \end{aligned}
\end{align*}
\end{spacing}
The trick to this encoding is to use $\Iter$ to compute \emph{both} a
reconstruction of the number being iterated upon (the first component of the
iterative result) alongside the desired result (the second component).  Doing
both at once gives access to the predecessor in the $\Succ$ case, which can be
extracted from the first component of the previous result (given by the variable
$x$ in the pattern-match $(x,y)$).

To express this encoding in the abstract machine, we need to extend it with
pairs, which look like \cite{CBVDualToCBN}:
\begin{align*}
  \cut{(v, w)}{\Fst e} &\srd \cut{v}{E}
  &
  \cut{(v, w)}{\Snd e} &\srd \cut{w}{E}
  &
  (\beta_\times)
\end{align*}
In the syntax of the abstract machine, the analogous encoding of a $\Rec$
continuation as a macro-expansion looks like this:
\begin{spacing}
\begin{align*}
\begin{aligned}
  &\Rec
  \begin{alignedat}[t]{2}
    \{&
    \Zero &&\to v
    \\
    \mid&
    \Succ x &&\to y.w
    \}
  \end{alignedat}
  \\
  &\With E
\end{aligned}
&\defeq
\begin{aligned}
  &\Iter
  \begin{alignedat}[t]{2}
    \{&
    \Zero &&\to (\Zero, v)
    \\
    \mid&
    \Succ &&\to (x,y). (\Succ x, w)
    \}
  \end{alignedat}
  \\
  &\With{} \Snd E
\end{aligned}
\end{align*}
\end{spacing}
Since the inductive case $w$ might refer to both the predecessor $x$ and the
recursive result for the predecessor (named $y$), the two parts must be
extracted from the pair returned from iteration.  Here we express this
extraction in the form of pattern-matching, which is shorthand for:
\begin{equation*}
\label{rec_vs_iter}
  (x,y). (v_1,v_2)
  \defeq
  z.
  \outp\alpha
  \cut{z}{\Fst(\inp x \cut{z}{\Snd(\inp y \cut{(v_1,v_2)}{\alpha})})}
\end{equation*}
Note that the recursor continuation is tasked with passing its final result to
$E$ once it has finished.  In order to give this same result to $E$, the
encoding has to extract the second component of the final pair before passing it
to $E$, which is exactly what $\Snd E$ expresses.

Unfortunately, this encoding of recursion is not always as efficient as the
original.  If the recursive parameter $y$ is never used (such as in the
$\<pred>$ function), then $\Rec$ can provide an answer without computing the
recursive result.  However, when encoding $\Rec$ with $\Iter$, the result of the
recursive value must always be computed before an answer is seen, regardless of
whether or not $y$ is needed.  As such, redefining $\<pred>$ using $\Iter$ in
this way changes it from a constant time ($O(1)$) to a linear time ($O(n)$)
function.  Notice that this difference in cost is only apparent in call-by-name,
which can be asymptotically more efficient when the recursive $y$ is not needed
to compute $N'$, as in $\<pred>$.  In call-by-value, the recursor must descend
to the base case anyway before the incremental recursive steps are propagated
backward.  That is to say, the call-by-value $\Rec$ has the same asymptotic
complexity as its encoding via $\Iter$.

\subsection{Types and Correctness}

\begin{figure}
\begin{gather*}
  \infer[\<Cut>]
  {\Gamma \entails \cut{v}{e} \contra}
  {
    \Gamma \entails v \givestype A
    &
    \Gamma \entails e \takestype A
  }
  \\[\ruleskip]
  \axiom[\<VarR>]{\Gamma, x \givestype A \entails x \givestype A}
  \qquad
  \axiom[\<VarL>]{\Gamma, \alpha \takestype A \entails \alpha \takestype A}
  \\[\ruleskip]
  \infer[\<ActR>]
  {\Gamma \entails \outp\alpha c \givestype A}
  {\Gamma, \alpha \takestype A \entails c \contra}
  \qquad
  \infer[\<ActL>]
  {\Gamma \entails \inp x c \takestype A}
  {\Gamma, x \givestype A \entails c \contra}
  \\[\ruleskip]
  \infer[{\to}R]
  {\Gamma \entails \fn x v \givestype A \to B}
  {\Gamma, x \givestype A \entails v \givestype B}
  \qquad
  \infer[{\to}L]
  {\Gamma \entails \app v e \takestype A \to B}
  {
    \Gamma \entails v \givestype A
    &
    \Gamma \entails e \takestype B
  }
  \\[\ruleskip]
  \axiom[{\Nat}R_{\Zero}]
  {\Gamma \entails \Zero \givestype \Nat}
  \qquad
  \infer[{\Nat}R_{\Succ}]
  {\Gamma \entails \Succ V \givestype \Nat}
  {\Gamma \entails V \givestype \Nat}
  \\[\ruleskip]
  \infer[{\Nat}L]
  {
    \Gamma \entails
    \Rec \{\Zero \to v \mid \Succ x \to y.w\} \With E
    \takestype \Nat
  }
  {
    \Gamma \entails v \givestype A
    &
    \Gamma, x \givestype \Nat, y \givestype A \entails w \givestype A
    &
    \Gamma \entails E \takestype A
  }
\end{gather*}
\caption{Type system for the uniform, recursive abstract machine.}
\label{fig:t-mach-type-system}
\end{figure}

We can also give a type system directly for the abstract machine, as shown in
\cref{fig:t-mach-type-system}.  This system has judgments for assigning types to
terms as usual: $\Gamma \entails v \givestype A$ says $v$ produces an output of
type $A$.  In addition, there are also judgments for assigning types to
co\-terms ($\Gamma \entails e \takestype A$ says $e$ consumes an input of type
$A$) and commands ($\Gamma \entails c \contra$ says $c$ is safe to compute, and
does not produce or consume anything).

This type system ensures that the machine itself is type safe: well-typed,
executable commands don't get stuck while in the process of computing a final
state.  For our purposes, we will only execute commands $c$ with just one free
co\-variable (say $\alpha$), representing the final, top-level continuation.
Thus, well-typed executable commands will satisfy
$\alpha\takestype\Nat \entails c \contra$.  The only final states of these
executable commands have the form $\cut{\Zero}{\alpha}$, which sends 0 to the
final continuation $\alpha$, or $\cut{\Succ V}{\alpha}$, which sends the
successor of some $V$ to $\alpha$.  But the type system ensures more than just
type safety: all well-typed programs will eventually terminate.  That's because
$\Rec$-expressions, which are the only form of recursion in the language, always
decrement their input by 1 on each recursive step.  So together, every
well-typed executable command will eventually (termination) reach a valid final
state (type safety).
\begin{restatable}[Type safety \& Termination]{theorem}
  {thmtsafetytermination}
\label{thm:t-type-safety}
\label{thm:t-termination}
For any command $c$ of the recursive abstract machine, if
$\alpha\takestype\Nat \entails c \contra$ then $c \srds \cut{\Zero}{\alpha}$ or
$c \srds \cut{\Succ V}{\alpha}$ for some $V$.
\end{restatable}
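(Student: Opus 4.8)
The plan is to prove the statement by combining the standard trio of \emph{preservation}, \emph{progress}, and \emph{strong normalization}, specialized to the executable commands typed as $\alpha \takestype \Nat \entails c \contra$.

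First I would establish subject reduction (preservation): if $\Gamma \entails c \contra$ and $c \srd c'$, then $\Gamma \entails c' \contra$, together with the analogous statements for terms and co\-terms. This goes by case analysis on the five reduction rules $\mu$, $\tmu$, $\beta_\to$, $\beta_{\Zero}$, and $\beta_{\Succ}$, each resting on a substitution lemma stating that replacing a variable (resp.\ co\-variable) of type $A$ by a value (resp.\ co\-value) of type $A$ preserves typing. The $\beta_{\Succ}$ case is the most delicate, since its right-hand side manufactures a fresh $\mu$- and $\tmu$-binding; here I would check that the recursive sub\-term $\outp\alpha\cut{V}{\Rec\{\dots\}\With\alpha}$ and the binding co\-term $\inp y \cut{w\subst{x}{V}}{E}$ each inherit the correct type from the three premises of the ${\Nat}L$ rule.

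Next, progress: every well-typed command $\alpha \takestype \Nat \entails c \contra$ is either a final state or admits a reduction step. This follows from a canonical-forms analysis of a cut $\cut{v}{e}$, inspecting the possible head forms of the value and co\-value against the type at the cut (at $\Nat$ for the top level, and at function types for intermediate states). Since the only free co\-variable is $\alpha$ and there are no free term variables, a \emph{normal} command confronting $\alpha \takestype \Nat$ must oppose a closed $\Nat$-value against $\alpha$, and the only $\Nat$-values are $\Zero$ and $\Succ V$; this is exactly what pins the two final states $\cut{\Zero}{\alpha}$ and $\cut{\Succ V}{\alpha}$.

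The heart of the argument, and the main obstacle, is strong normalization: every well-typed command reaches a normal form in finitely many steps. Because the machine is classical, featuring $\mu$/$\tmu$ control and first-class continuations, a naive induction on terms fails, so I would use a Tait-style reducibility argument phrased through orthogonality. Each type $A$ is interpreted by a pair of a reducible-term set and a reducible-co\-term set, mutually saturated so that cutting a reducible term against a reducible co\-term yields a strongly normalizing command ($\bot\bot$-closure). The interpretation of $\Nat$ is generated inductively in Kleene's style ($\Zero$ is reducible, and $\Succ V$ is reducible whenever $V$ is), and the crux is to show that $\Rec\{\Zero \to v \mid \Succ x \to y.w\}\With E$ is a reducible co\-term at $\Nat$ whenever its branches and return continuation $E$ are reducible; this is an induction over the reducible $\Nat$-values that tracks how $\beta_{\Succ}$ removes exactly one $\Succ$ per step, which is what prevents the recursion from diverging. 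From these candidate spaces the fundamental lemma follows by induction on typing derivations---every well-typed term, co\-term, and command is reducible under any reducible substitution---and strong normalization of well-typed commands is then immediate. Finally I combine the three results: strong normalization gives $c \srds c'$ for some normal $c'$, preservation gives $\alpha \takestype \Nat \entails c' \contra$, and progress forces $c'$ to be a final state, which canonical forms identifies as $\cut{\Zero}{\alpha}$ or $\cut{\Succ V}{\alpha}$. I expect the reducibility clause for $\Nat$ and the $\Rec$ co\-term to be the delicate part, since that is precisely where termination of recursion must be earned.
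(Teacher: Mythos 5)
Your overall architecture is workable, but it is not the paper's. The paper never proves progress or preservation: it defines the pole $\Bot$ directly as the set of commands that reduce to a valid final state ($\cut{\Zero}{\alpha}$ or $\cut{\Succ V}{\alpha}$), so safety and termination are baked into the orthogonality model itself, and adequacy alone yields the theorem once one observes that $\alpha$ itself inhabits $\den{\Nat}$ by completeness (formally, the paper then obtains this System~T statement as a special case of the corresponding theorem for the machine extended with streams). Your decomposition---pole $=$ strongly normalizing commands, plus syntactic preservation, progress, and canonical forms to pin down the normal form---is a legitimate alternative, and those syntactic lemmas are routine for this deterministic machine; what it costs is three separate developments where the paper has one, and it leaves all the real difficulty inside your normalization argument.

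That is where the genuine gap lies, in the step you yourself call the crux. You ask the interpretation of $\Nat$ to be ``generated inductively in Kleene's style'' while also requiring every candidate to be saturated under double orthogonality (which you correctly need for the $\mu$/$\tmu$ cases of the fundamental lemma). These two demands conflict, and the conflict is exactly where classical control bites. After closure, the ``reducible $\Nat$-values'' are not just numerals: in call-by-name, $\mu$-abstractions $\outp\beta c$ are values, and when such a value meets the recursor it is the $\mu$ rule that fires, substituting $\Rec\{\dots\}\With E$ into $c$; your proposed induction ``each $\beta_{\Succ}$ removes one $\Succ$'' has no decreasing measure on such values. Symmetrically, closure under $\Succ$ (needed for the ${\Nat}R_{\Succ}$ case of the fundamental lemma) is not free once the value side has been enlarged by saturation: knowing that $e$ is orthogonal to every numeral says nothing about $\cut{\Succ V}{e}$ when $V$ is merely in the biorthogonal. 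The paper resolves this tension with two devices: stage-indexed candidates $\den{\Nat}_i$, each completed by $\PosCand$, so that membership of the recursor is proved by induction on the stage index $i$ rather than on the structure of values; and the ``Nat Choice'' lemma, proved by testing against the infinite recursor $\Rec\{\Zero\to\Zero\mid\Succ\blank\to x.x\}\With\alpha$, which shows $\bigcurlyvee_{i=0}^\infty\den{\Nat}_i = \bigvee_{i=0}^\infty\den{\Nat}_i$, i.e.\ that completion adds no values beyond the finite stages---restoring both the well-founded induction and $\Succ$-closure of the limit. Without this reconciliation of the Kleene and Knaster--Tarski constructions (which the paper flags as its key technical step), your crux lemma does not go through as stated.
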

The truth of this theorem follows directly from the latter development in
\cref{sec:safety&termination}, since it is a special case of
\cref{thm:mach-corec-type-safety,thm:mach-corec-termination}.

\begin{intermezzo}
\label{rm:two-sided-sequent}

Since our abstract machine is based on the logic of Gentzen's sequent calculus
\cite{Gentzen1935UULS1}, the type system in \cref{fig:t-mach-type-system} too
can be viewed as a term assignment for a particular sequent calculus.  In
particular, the statement $v \givestype A$ corresponds to a proof that $A$ is
true.  Dually $e \takestype A$ corresponds to a proof that $A$ is false, and
hence the notation, which can be understood as a built-in negation $-$ in
$e : -A$.  As such, the built-in negation in every $e \takestype A$ (or
$\alpha \takestype A$) can be removed by swapping between the left- and
right-hand sides of the turnstyle ($\entails$), so that $e \takestype A$ on the
right becomes $e : A$ on the left, and $\alpha \takestype A$ on the left becomes
$\alpha : A$ on the right.  Doing so gives a conventional two-sided sequent
calculus as in \cite{SequentMachines,SequentTutorial}, where the rules labeled
$L$ with conclusions of the form
$x_i \givestype B_i, \alpha_j \takestype C_j \entails e \takestype A$ correspond
to left rules of the form $x_i:B_i \mid e : A \entails \alpha_j:C_j$ in the
sequent calculus.
\end{intermezzo}





\section{Corecursion in an Abstract Machine}
\label{sec:corec-mach}


\begin{figure}
\begin{align*}
\<Type> \ni   A,B &::= A \to B \Alt \Nat
   \mid \hi{\Stream A}
\end{align*}
\begin{gather*}
  \infer[{\Stream}L_{\Head}]
  {\Gamma \entails \Head E \takestype \Stream A}
  {\Gamma \entails E \takestype A}
  \quad
  \infer[{\Stream}L_{\Head}]
  {\Gamma \entails \Tail E \takestype \Stream A}
  {\Gamma \entails E \takestype \Stream A}
  \\[\ruleskip]
  \infer[{\Stream}R]
  {
    \Gamma \entails
    \CoRec \{\Head \alpha \to e \mid \Tail \beta \to \gamma.f\} \With V
    \givestype \Stream A
  }
  {
    \Gamma, \alpha \takestype A \entails e \takestype B
    &
    \Gamma, \beta \takestype \Stream A, \gamma \takestype B
    \entails f \takestype B
    &
    \Gamma \entails V \givestype B
  }
\end{gather*}
\caption{Typing rules for streams in the uniform, (co\-)recursive abstract
  machine.}
\label{fig:stream-mach-type-system}
\end{figure}

\begin{figure}
\emph{Commands ($c$), general terms ($v$), and general co\-terms ($e$)}:
\begin{gather*}
\begin{aligned}
  \<Command> \ni
  c &::= \cut{v}{e}
  &
  \<Term> \ni
  v,w &::= \outp \alpha c \Alt V
  &
  \<CoTerm> \ni
  e,f &::= \inp x c \Alt E
\end{aligned}
\end{gather*}
\emph{Call-by-name values ($V$) and evaluation contexts ($E$)}:
\begin{alignat*}{3}
  \<Value> &\ni{}&
  V,W
  &::= \outp \alpha c
  \Alt x
  \Alt \fn x v
  \Alt \Zero
  \Alt \Succ V 
  \Alt \hi{\CoRec \{\Head\alpha \to e \mid \Tail\beta \to \gamma.f\} \With v}
  \\
  \<CoValue> &\ni{}&
  E,F &::=  \alpha
  \Alt \app V E
  \Alt \Rec \{\Zero \to v \mid \Succ x \to y.w\}  \With E
  \Alt  \hi{\Head E}
  \Alt  \hi{\Tail E}
\end{alignat*}
\emph{Call-by-value values ($V$) and evaluation contexts ($E$)}:
\begin{alignat*}{3}
  \<Value> &\ni{}&
  V,W
  &::= x
  \Alt \fn x v
  \Alt \Zero
  \Alt \Succ V
  \Alt \hi{\CoRec  \{\Head\alpha \to e \mid \Tail\beta \to \gamma.f\} \With V}
  \\
  \<CoValue> &\ni{}&
  E,F &::= \inp x c
  \Alt \alpha
  \Alt \app V E
  \Alt \Rec \{\Zero \to v \mid \Succ x \to y.w\}    \With E
  \Alt \hi{\Head e}
  \Alt \hi{\Tail e}
\end{alignat*}
\newcommand{\murule}{\mu}
\newcommand{\tmurule}{\tmu}
\emph{Operational rules:}
\begin{align*}
 (\murule)&&
  \cut{\outp \alpha c}{E}
  &\srd
  c\subst{\alpha}{E}
  \\
  (\tmurule)&&
  \cut{V}{\inp x c}
  &\srd
  c\subst{x}{V}
  \\
  (\beta_\to)&&
  \cut{\fn x v}{\app V E}
  &\srd
  \cut{v\subst{x}{V}}{E}
  \\
  (\beta_{\Zero})&&
  \BIgcut
  {\Zero}
  {
    \begin{aligned}
      &\Rec
      \begin{aligned}[t]
        \{&
        \Zero \to v
        \\[-0.5ex]
        \mid&
        \Succ x \to y.w
        \}
      \end{aligned}
      \\[-0.5ex]
      &\With E
    \end{aligned}
  }
  &\srd
  \cut{v}{E}
  \\
  (\beta_{\Succ})&&
  \BIgcut
  {\Succ V}
  {
    \begin{aligned}
      &\Rec
      \begin{aligned}[t]
        \{&
        \Zero \to v
        \\[-0.5ex]
        \mid&
        \Succ x \to y.w
        \}
      \end{aligned}
      \\[-0.5ex]
      &\With E
    \end{aligned}
  }
  &\srd
  \BIgcut
  {\outp \alpha
    \BIgcut
    {V}
    {
      \begin{aligned}
        &\Rec
        \begin{aligned}[t]
          \{&
          \Zero \to v
          \\[-0.5ex]
          \mid&
          \Succ x \to y.w
          \}
        \end{aligned}
        \\[-0.5ex]
        &\With \alpha
      \end{aligned}
    }
  }
  {\inp y \cut{w\subst{x}{V}}{E}}
\\
  (\beta_{\Head}) &&
\hi{
  \BIgcut
  {
    \small
    \begin{aligned}
      &
      \begin{aligned}[t]
        \CoRec
        \{&
        \Head\alpha \to e
        \\[-1ex]
        \mid&
        \Tail\beta \to \gamma.f\}
      \end{aligned}
      \\[-1ex]
      &\With V
    \end{aligned}
  }
  {\Head E} }
  &\srd 
  \hi{\cut{V}{e\subst{\alpha}{E}}}
  \\
  (\beta_{\Tail}) && \hi{
  \BIgcut
  {
    \small
    \begin{aligned}
      &
      \begin{aligned}[t]
        \CoRec
        \{&
        \Head\alpha \to e
        \\[-1ex]
        \mid&
        \Tail\beta \to \gamma.f\}
      \end{aligned}
      \\[-1ex]
      &\With V
    \end{aligned}
  }
  {\Tail E}
}
  &\srd
\hi{
  \BIgcut
  {\outp\gamma\cut{V}{f\subst{\beta}{E}}}
  {
    \inp x
    \BIgcut
    {
      \small
      \begin{aligned}
        &
        \begin{aligned}[t]
          \CoRec
          \{&
          \Head\alpha \to e
          \\[-1ex]
          \mid&
          \Tail\beta \to \gamma.f\}
        \end{aligned}
        \\[-1ex]
        &\With x
      \end{aligned}
    }
    {E}
  }}
\end{align*}
\caption{Uniform, (co\-)recursive abstract machine.}
\label{fig:stream-mach-syntax}
\label{fig:stream-mach-operation}
\end{figure}

Instead of coming up with an
extension of System T with corecursion, and then define an abstract machine
we start directly with the abstract machine which we obtain by
applying duality.
As a prototypical example of a co\-inductive type, we consider
infinite streams of values, chosen for
their familiarity (other co\-inductive types work just as well),
which we represent by the type $\Stream A$, as given in
\cref{fig:stream-mach-type-system}. 

%
%
The intention is that  $\Stream A$ is
roughly dual to $\Nat$, and so we will flip
the roles of terms and co\-terms belonging to streams.  In contrast with $\Nat$,
which has constructors for building values, $\Stream A$ has two
\emph{destructors} for building co\-values.  First, the co\-value $\Head E$ (the
base case dual to $\Zero$) projects out the first element of its given stream
and passes its value to $E$.  Second, the co\-value $\Tail E$ (the co\-inductive
case dual to $\Succ V$) discards the first element of the stream and passes the
remainder of the stream to $E$. The \emph{co\-recursor} is defined by dualizing
the recursor, whose general form is:
\begin{alignat*}{10}
  \Rec{}
  &\{
  \text{\it{base case}} &{}\to v
  &{}\mid{}&
  \text{\it{inductive case}} &{}\to y.w
  \}
  &{}\With E
  \\
  \CoRec{}
  &\{
  \text{\it{base case}} &{}\to e
  &{}\mid{}&
  \text{\it{co\-inductive case}} &{}\to \gamma.f
  \}
  &{}\With V
\end{alignat*}
Notice how the internal seed $V$ corresponds to the return continuation $E$.  In
the base case of the recursor, term $v$ is sent  to the current value of the
continuation $E$.  Dually, in the base case of the corecursor, the coterm $e$
receives the current value of the internal seed $V$.  In the recursor's inductive case,
$y$ receives the result of the next recursive step (\ie
the predecessor of the current one).  Whereas  in the co\-recursor's
co\-inductive case, $\gamma$ sends the
updated seed to the next co\-recursive step (\ie the tail of the current one).
The two cases of the recursor match the patterns $\Zero$ (the base case) and
$\Succ x$ (the inductive case).  Analogously, the co\-recursor matches against
the two possible co\-patterns: the base case is $\Head\alpha$, and the
co\-inductive case is $\Tail\beta$.  So the full form of the stream
co\-recursor is:
\[
  \CoRec{} \{\Head\alpha \to e \mid \Tail\beta \to \gamma.f\} \With V
\]


The uniform abstract machine is  given  in \cref{fig:stream-mach-operation},
where we   highlight the
extensions.
Note how
the co\-recursor
generates (on the fly) the values of the stream using $V$ as an incremental
accumulator or seed, saving the progress made through the stream so far.  In
particular, the base case $\Head\alpha \to e$ matching the head projection just
passes the accumulator to $e$, which (may) compute the current element and send it
to $\alpha$.  The co\-recursive case $\Tail\beta \to \gamma.f$ also passes the
accumulator to $f$, which may return an updated accumulator (through $\gamma$)
\emph{or} circumvent further co\-recursion by returning another stream directly
to the remaining projection (via $\beta$).
As with the syntax, the operational semantics is roughly symmetric to the rules for
natural numbers, where
roles of terms and co\-terms have been flipped.

\begin{intermezzo}

We review how the basic concepts of
corecursion are reflected in our syntactic framework.
Note how these observations are dual to the basic concepts
of recursion. 
\begin{itemize}[-]
\item 
{\em With coinductive data types, covalues are constructed and the
producer is a process that generates the data}.
Observers of streams are constructed via the
head and tail projections,
and their construction is a process.
\item
{\em Use of codata is finite and its creation is (potentially)
infinite in the sense that
    there must be no limit to the size of the codata
    that a producer can process}.
 We can only build covalues from a finite number
  of destructor applications.  However, the producer
does not know how big of
  a request it will be given, so it has to be ready to handle codata structures of
  any size.  In the end, termination is preserved because only finite covalues are
  consumed.
\item {\em Corecursion produces the data, rather than using it}.
 $\CoRec$ is a
  term, not a coterm.
\item {\em Corecursion starts from a seed and potentially
produces bigger and bigger internal
states};
the corecursor brakes down the codata structure and might end when
  the base case is reached.
\item {\em The values of a codata structure potentially depend on each other}. The n-th value of a stream might depend on the value at the index  $n\mbox{-}1$. 
\end{itemize}
\end{intermezzo}

\subsection{Examples of corecursion}
\label{sec:corec-examples}

We represent the streams $x, x, x, x, \dots$  and
$x, f x, f^2 x, \cdots$ as:
\begin{spacing}
\begin{align*}
\begin{aligned}[t]
  \<always> \ x  
  =
 \CoRec{}
  \{&
  \Head \alpha \to \alpha
  \\
  \mid&
  \Tail \_ \to \gamma.\gamma
  \}
  \\
  \With{}& x
\end{aligned}
\qquad
\begin{aligned}[t]
  \<repeat>  \  f \ x  
  =
 \CoRec{}
  \{&
  \Head \alpha \to \alpha
  \\
  \mid&
  \Tail  \_ \to \gamma.\inp x \cut {f}{ \app x \gamma}
  \}
  \\
  \With{}& x
\end{aligned}
\end{align*}
\end{spacing}
So when an observer asks $\<always> \ x$ ($\<repeat>\ f \ x$) for its head element (matching the
co\-pattern $\Head\alpha$), $\<always> \ x$ ($\<repeat> \ f \ x$) returns (to $\alpha$) the current
value of the seed $x$.  Otherwise, when an observer asks for its tail (matching
the co\-pattern $\Tail\beta$), $\<always> \ x$  ($\<repeat> \ f \ x$) continues co\-recursing with the
same seed $x$ ($\outp \gamma \cut{f}{\app x \gamma}$).

The infinite streams containing all zeroes, and the infinite stream of all
natural numbers counting up from 0, are then represented as:
\begin{align*}
  \<zeroes> = \outp\alpha\cut{\<always>}{\app\Zero\alpha} \\
  \<nats> =  \outp\alpha\cut{\<repeat>}{\app {succ}{ \app \Zero \alpha}}
\end{align*}
where $\<succ>$ is  defined as
$\lambda x. \outp \alpha \cut{\Succ x} \alpha$.

To better understand execution, let's trace how computation works in both call-by-name and -value.  
Asking for the third element of $\<zeroes>$ proceeds with the following
calculation in call-by-value (we let $\<always>_x$ stands for the stream with $x$ being the seed) :
\begin{spacing}
\begin{align*}
  &
  \cut{\<zeroes>}{\Tail(\Tail(\Head\alpha))}
  \\
  &\srd
  \cut{\<always>}{\app{\Zero}{\Tail(\Tail(\Head\alpha))}}
  &(\mu)
  \\
  &\srd
  \cut{\<always>_{\Zero}}{\Tail(\Tail(\Head\alpha))}
  &(\beta_\to)
  \\
  &\srd
  \cut
  {\outp\gamma \cut{\Zero}{\gamma}}
  {\inp x \cut{\<always>_x}{\Tail(\Head\alpha)}}
  &(\beta_{\Tail})
  \\
  &\srd
  \cut{\Zero}{\inp x \cut{\<always>_x}{\Tail(\Head\alpha)}}
  &(\mu)
  \\
  &\srd
  \cut{\<always>_{\Zero}}{\Tail(\Head\alpha)}
  &(\tmu)
  \\
  &\srds
  \cut{\<always>_{\Zero}}{\Head\alpha}
  &(\beta_{\Tail}\mu\tmu)
  \\
  &\srd
  \cut{\Zero}{\alpha}
  &(\beta_{\Head})
\end{align*}
\end{spacing}
In contrast, notice how the same calculation in call-by-name builds up a delayed
computation in the seed:
\begin{spacing}
\begin{align*}
  &
  \cut{\<zeroes>}{\Tail(\Tail(\Head\alpha))}
  \\
  &\srds
  \cut{\<always>_{\Zero}}{\Tail(\Tail(\Head\alpha))}
  &(\mu\beta_\to)
  \\
  &\srd
  \cut
  {\outp\gamma \cut{\Zero}{\gamma}}
  {\inp x \cut{\<always>_x}{\Tail(\Head\alpha)}}
  &(\beta_{\Tail})
  \\
  &\srd
  \cut{\<always>_{\outp\gamma\cut{\Zero}{\gamma}}}{\Tail(\Head\alpha)}
  &(\tmu)
  \\
  &\srds
  \cut
  {\<always>_{\outp{\gamma'}\cut{\outp\gamma\cut{\Zero}{\gamma}}{\gamma'}}}
  {\Head\alpha}
  &(\beta_{\Tail}\tmu)
  \\
  &\srd
  \cut
  {\outp{\gamma'}\cut{\outp\gamma\cut{\Zero}{\gamma}}{\gamma'}}
  {\alpha}
  &(\beta_{\Head})
  \\
  &\srd
  \cut
  {\outp\gamma\cut{\Zero}{\gamma}}
  {\alpha}
  &(\mu)
  \\
  &\srd
  \cut{\Zero}{\alpha}
  &(\mu)
\end{align*}
\end{spacing}

Consider the function that produces a stream that counts down from some initial
number $n$ to 0, and then staying at 0.  Informally, this function can be
understood as:
\begin{spacing}
\begin{align*}
  \<countDown> &: \Nat \to \Stream \Nat
  \\
  \<countDown>~n &= n, n-1, n-2, \dots, 3, 2, 1, 0, 0, 0, \dots
\end{align*}
\end{spacing}
It is  formally defined  like so:
\begin{spacing}
\begin{align*}
\begin{aligned}[t]
  \<countDown> \ n 
  =
 \CoRec{}
  \{&
  \Head \alpha \to \alpha
  \\
  \mid&
  \Tail  \_ \to \gamma.
  \begin{aligned}[t]
    \Rec{}
    \{&
    \Zero \to \Zero
    \\
    \mid&
    \Succ n \to n
    \}
    \With \gamma \}
  \end{aligned}
  \\
  \With{}& n
\end{aligned}
\end{align*}
\end{spacing}
This definition of $\<countDown>$ can be understood as follows:
\begin{itemize}
\item If the head of the stream is requested (matching the co\-pattern
  $\Head\alpha$), then the current value $x$ of the seed is returned (to
  $\alpha$) as-is.
\item Otherwise, if the tail is requested (matching the co\-pattern
  $\Tail\beta$), then the current value of the seed is inspected: if it is 0,
  then 0 is the value of the updated seed; otherwise,
  the seed is the successor of some $y$, in which case $y$ is
  given as the updated seed.
\end{itemize}

%
%
The previous definition of $\<countDown>$ is not very efficient; once $n$ reaches zero one can
safely returns the $zeroes$ stream thus avoiding the test for each question.
This can be avoided with the power of the corecursor as so:
\begin{spacing}
\begin{align*}
\begin{aligned}[t]
  \<countDown'> \ n 
  =
   \CoRec{}
  \{&
  \Head \alpha \to \alpha
  \\
  \mid&
  \Tail \beta \to \gamma.
  \begin{aligned}[t]
    \Rec{}
    \{&
    \Zero \to \outp \_ \cut{\<zeroes>}{\beta}
    \\
    \mid&
    \Succ n \to \gamma. n
    \} 
    \With \gamma \}
  \end{aligned}
  \\
  \With{}& n
\end{aligned}
\end{align*}
\end{spacing}
Note that in the co\-inductive step, the decision on which continuation is taken
(the observer of the tail $\beta$ or the continuation of co\-recursion $\gamma$)
depends on the current value of the seed. If the seed reaches 0
the corecursion is stopped and the streams of zero's is returned instead.

\subsection{Properly Dual (Co)Recursive Types}
\label{sec:duality}

Although we derived co\-recursion from recursion using duality, our prototypical
examples of natural numbers and streams were not perfectly dual to one another.
While the (co)\-recursive case of $\Tail E$ looks similar enough to $\Succ V$,
the base cases of $\Head E$ and $\Zero$ don't exactly line up, because $\Head$
takes a parameter but $\Zero$ does not.

One way to perfect the duality is to generalize $\Nat$ to $\Numbered A$ which represents
a value of type $A$ labeled with a natural number.  
$\Numbered A$ has two constructors: the base case
$\Zero : A \to \Numbered A$ labels an $A$ value with the number 0, and
$\Succ : \Numbered A \to \Numbered A$ increments the numeric label while leaving
the $A$ value alone, as shown by the  following rules:
\begin{gather*}
  \infer
  {\Gamma \entails \Zero V \givestype \Numbered A}
  {\Gamma \entails V \givestype A}
  \qquad
  \infer
  {\Gamma \entails \Succ V \givestype \Numbered A}
  {\Gamma \entails V \givestype \Numbered A}
\end{gather*}
In order to match the generalized $\Zero$ constructor, the base case of the
recursor needs to be likewise generalized with an extra parameter.  In System T,
this looks like
\begin{math}
  \Rec M \As{} \{\Zero x \to N \mid \Succ y \to z.N'\}
  ,
\end{math}
while in the abstract machine we get the generalized continuation
\begin{math}
  \Rec{} \{\Zero x \to v \mid \Succ y \to z.w\} \With E
  .
\end{math}
The typing rule for this continuation is:
\begin{gather*}
  \infer
  {
    \Gamma
    \entails
    \Rec{} \{\Zero x \to v \mid \Succ y \to z.w\} \With E
    \takestype
    \Numbered A
  }
  {
    \Gamma, x \givestype A \entails v \givestype B
    &
    \Gamma, y \givestype \Numbered A, z : B \entails w \givestype B
    &
    \Gamma \entails E \takestype B
  }
\end{gather*}

It turns out that $\Numbered A$ \emph{is} the proper dual to $\Stream A$.
Notice how the two constructors $\Zero V$ and $\Succ W$ exactly mirror the two
destructors $\Head E$ and $\Tail F$ when we swap the roles of values and
co\-values.  The recursor continuation
\begin{math}
  \Rec{} \{\Zero x \to v \mid \Succ y \to z.w\} \With E
\end{math}
is the perfect mirror image of the stream co\-recursor
\begin{math}
  \CoRec{} \{\Head\alpha \to e \mid \Tail\beta \to \gamma.f\} \With F
\end{math}
when we likewise swap variables with co\-variables in the (co\-)patterns.  In
more detail, we can formalize the duality relation (which we write as $\dualto$)
between values and co\-values.  Assuming that $V \dualto E$, we have the
following duality between the constructors and destructors of these two types:
\begin{align*}
  \Zero V &\dualto \Head E
  &
  \Succ V &\dualto \Tail E
\end{align*}
For (co)\-recursion, we have the following dualities, assuming $v \dualto e$
(under $x \dualto \alpha$) and $w \dualto f$ (under $y \dualto \beta$ and
$z \dualto \gamma$):
\begin{align*}
  \CoRec \{\Head \alpha \to e \mid \Tail \beta \to \gamma. f\}
  &\dualto
  \Rec \{\Zero x \to v \mid \Succ y \to z.w\}
\end{align*}

We could also express the proper duality by restricting streams instead of
generalizing numbers.  In terms of the type defined above, $\Nat$ is isomorphic
to $\Numbered \top$, where $\top$ represents the usual unit type with a single
value (often written as $()$).  Since $\top$ corresponds to logical truth, its
dual is the $\bot$ type corresponding to logical falsehood with no (closed)
values, and a single co\-value that represents an empty continuation.  With this
in mind, the type $\Nat$ is properly dual to $\Stream \bot$, \ie an infinite
stream of computations which cannot return any value to their observer\footnote{
From the point of view of polarity in programming languages
\cite{ZeilbergerPhD,MunchMaccagnoniPhD}, the $\top$ type for truth we use in
``$\Numbered \top$'' should be interpreted as a positive type (written as $1$ in
linear logic \cite{LinearLogic}).  Dually, the $\bot$ type for falsehood in
``$\Stream \bot$'' is a negative type (also called $\bot$ in linear logic).}.


\section{Corecursion vs Coiteration: Expressiveness and Efficiency}
\label{sec:corec-vs-coiter}

Similar to recursion, we can define two special cases of corecursion which only
use part of its functionality by just ignoring a parameter in the co\-recursive
branch. We derive the encodings for the creation of streams by applying the
syntactic duality to the encodings presented in \cref{sec:rec-vs-iter-mach}:
\begin{spacing}
\begin{align*}
\begin{aligned}
  &
  \CoCase
  \\
  &\quad
  \begin{alignedat}[t]{2}
    \{&
    \Head \alpha &&\to e
    \\
    \mid&
    \Tail \beta &&\to f
    \}
  \end{alignedat}
  \\
  &\With V
\end{aligned}
&\defeq\!\!
\begin{aligned}
  &\CoRec
  \\
  &\quad
  \begin{alignedat}[t]{2}
    \{&
    \Head \alpha &&\to e
    \\
    \mid&
    \Tail \beta &&\to \blank\,. f
    \}
  \end{alignedat}
  \\
  &\With V
\end{aligned}
&\quad
\begin{aligned}
  &
  \CoIter
  \\
  &\quad
  \begin{alignedat}[t]{2}
    \{&
    \Head \alpha &&\to e
    \\
    \mid&
    \Tail &&\to \gamma.f
    \}
  \end{alignedat}
  \\
  &\With V
\end{aligned}
&\defeq\!\!
\begin{aligned}
  &\CoRec
  \\
  &\quad
  \begin{alignedat}[t]{2}
    \{&
    \Head \alpha &&\to e
    \\
    \mid&
    \Tail \blank &&\to \gamma.f
    \}
  \end{alignedat}
  \\
  &\With V
\end{aligned}
\end{align*}
\end{spacing}
Specifically, $\CoCase$ simply matches on the shape of its projection, which has
the form $\Head\alpha$ or $\Tail\beta$, without co\-recursing at all.  In
contrast, $\CoIter$ \emph{always} co\-recurses by providing an updated
accumulator in the $\Tail\beta$ case without ever referring to $\beta$.  
We have seen already several examples of coiteration. Indeed, all examples
given in the previous section, except  $\<countDown'>$, are examples of
coiteration, as indicated by the absence of the rest of the continuation (named
$\_$).

Recall from \cref{sec:rec-fun} that recursion in call-by-name versus call-by-value have
different algorithmic complexities.  The same holds for co\-recursion, with the
benefit going instead to call-by-value.  Indeed, in call-by-value the simple
cocase continuation avoids the corecursive step entirely in two steps:
\begin{align*}
  &
  \cut
  {\CoCase\{\Head\alpha \to e \mid \Tail\beta \to f\} \With V}
  {\Tail E} \\
& \defeq
\cut
  {\CoRec
    \{ \Head \alpha \to e
    \mid
    \Tail \beta \to \blank\,. f
    \}
  \With V
  }
  {\Tail E}
  \\
  &\srd
  \cut
  {\outp\blank \cut{V}{f\subst{\beta}{E}}}
  {
    \inp x
    \cut
    {\CoCase\{\Head\alpha \to e \mid \Tail\beta \to f\} \With x}
    {E}
  }
  &(\beta_{\Tail})
  \\
  &\srd
  \cut{V}{f\subst{\beta}{E}}
  &(\mu)
\end{align*}
whereas, cocase continues corecursing in call-by-name, despite the fact that
this work will ultimately be thrown away once any element is requested via
$\Head$:
\begin{align*}
  &
  \cut
  {\CoCase \{\Head\alpha \to e \mid \Tail\beta \to f\} \With V}
  {\Tail E}
  \\
& \defeq
\cut
  {\CoRec
    \{ \Head \alpha \to e
    \mid
    \Tail \beta \to \blank\,. f
    \}
  \With V
  }
  {\Tail E}
  \\
  &\srd
  \cut
  {\outp\blank \cut{V}{f\subst{\beta}{E}}}
  {
    \inp x
    \cut
    {\CoCase\{\Head\alpha \to e \mid \Tail\beta \to f\} \With x}
    {E}
  }
  &(\beta_{\Tail})
  \\
  &\srd  
  \cut
  {
    \CoCase
    \{\Head\alpha \to e \mid \Tail\beta \to f\}
    \With
    \outp\blank \cut{V}{f\subst{\beta}{E}}
  }
  {E}
  &(\tmu)
\end{align*}

As an example of  cocase, consider 
the following  stream:
\begin{align*}
  \<scons>~x~(y_1,y_2,\dots) &= x, y_1, y_2, \dots
\end{align*}
$\<scons>~x~s$ appends a new element $x$ on top of the stream $s$.
This informal definition can be formalized in terms of $\CoCase$ like so:
\begin{align*}
  \<scons>
  &\defeq
  \lambda x. \lambda s.
  \CoCase
  \{
  \Head \alpha \to \alpha
  \mid
  \Tail \beta \to \inp \_ \cut{s}{\beta}
  \}
  \With x
  \end{align*}

%
Ideally, $\<scons>$ should not leave a lingering effect on the underlying
stream.  That is to say, the tail of $\<scons>~x~s$ should just be $s$.
This happens directly in call-by-value.
Consider indexing the $n+1^{th}$ element of $\<scons>$ in call-by-value, where
we write $\Tail^n E$ to mean the $n$-fold application of $\Tail$ over $E$ (\ie
$\Tail^0 E = E$ and $\Tail^{n+1} E = \Tail(\Tail^n E)$):
\begin{align*}
  &
  \cut{\<scons>}{\app{x}{\app{s}}{\Tail^{n+1}(\Head\alpha)}}
  \\
  &\srds
  \cut
  {
    \CoCase
    \{\Head\alpha \to \alpha \mid \Tail\beta \to \inp\blank\cut{s}{\beta}\}
    \With x
    \}
  }
  {\Tail(\Tail^{n}(\Head\alpha))}
  &(\beta_\to)
  \\
  &\srds
  \cut
  {x}
  {\inp\blank\cut{s}{\Tail^n(\Head\alpha)}}
  &(\beta_{\Tail}\mu)
  \\
  &\srd
  \cut{s}{\Tail^{n}(\Head\alpha)}
  &(\tmu)
\end{align*}
Notice how, after the first $\Tail$ is resolved, the computation incurred by
$\<scons>$ has completely vanished.  In contrast, the computation of $\<scons>$
continues to linger in call-by-name:
\begin{align*}
  &
  \cut{\<scons>}{\app{x}{\app{s}}{\Tail^{n+1}(\Head\alpha)}}
  \\
  &\srds
  \cut
  {
    \CoCase
    \{\Head\alpha \to \alpha \mid \Tail\beta \to \inp\blank\cut{s}{\beta}\}
    \With x
    \}
  }
  {\Tail(\Tail^{n}(\Head\alpha))}
  &(\beta_\to)
  \\
  &\srds
  \cut
  {
    \CoCase
    \{\dots\}
    \With \outp\blank \cut{x}{\inp\blank\cut{s}{\Tail^n(\Head\alpha)}}
  }
  {\Tail^n(\Head\alpha)}
  &(\beta_{\Tail}\tmu)
\end{align*}
Here, we will spend time over the next $n$ $\Tail$ projections to build up an
ever larger accumulator until the $\Head$ is reached, even though the result
will inevitably just backtrack to directly ask $s$ for $\Tail^n E$.

However, one question remains: how do we accurately measure the cost of a
stream?  The answer is more subtle than the cost of numeric loops, because
streams more closely resemble functions.  With functions, it is not good enough to
just count the steps it takes to calculate the closure.  We also need to count
the steps taken in the body of the function when it is called, \ie what happens
when the function is used.  The same issue occurs with streams, where we also
need to count what happens when the stream is used, \ie the number of steps
taken inside the stream in response to a projection.  Of course, the internal
number of steps in both cases can depend on the ``size'' of the input.  For
functions, this is the size of its argument; in the simple case of
$\Nat \to \Nat$, this size is just the value $n$ of the numeric argument
$\Succ^n\Zero$ itself.  For streams, the input is the stream projection of the
form $\Tail^n(\Head\alpha)$, whose size is proportional to the number $n$ of
$\Tail$ projections.  Therefore, the computational complexity of a stream value
$s$---perhaps defined by a $\CoRec$ term---is expressed as some $O(f(n))$, where
$n$ denotes the depth of the projection given by the number of $\Tail$s.

Now, let us consider the asymptotic cost of $\<scons>$ under both call-by-value
and call-by-name evaluation.  In general, the difference in performance between
$\cut{\<scons>}{\app{x}{\app{s}{E}}}$ and $\cut{s}{E}$ in call-by-value is just
a constant time ($O(1)$) overhead.  So, given that the cost of $s$ is $O(f(n))$,
then the cost of $\cut{\<scons>}{\app{x}{\app{s}{E}}}$ will also be $O(f(n))$.
In contrast, the call-by-name evaluation of $\<scons>$ incurs an additional
linear time ($O(n)$) overhead based on depth of the projection:
$\cut{\<scons>}{\app{x}{\app{s}{\Tail^{n+1}(\Head E)}}}$ takes an additional
number of steps proportional to $n$ compared to the cost of executing
$\cut{s}{\Tail^n(\Head E)}$.  As a consequence, the call-by-name cost of
$\cut{\<scons>}{\app{x}{\app{s}{E}}}$ is $O(n+f(n))$, given that the cost of $s$
is $O(f(n))$.  So the efficiency of co\-recursion is better in call-by-value
than in call-by-name.

To make the analysis more concrete, lets look at an application of $\<scons>$ to
a specific underlying stream.  Recall the $\<countDown>$ function from
\cref{sec:corec-examples} which produces a stream counting down from a given
$n$: $n, n-1, \dots, 2, 1, 0, 0, \dots$.  This function (and the similar
$\<countDown'>$) is defined to immediately return a stream value that captures
the starting number $n$, and then incrementally counts down one at a time with
each $\Tail$ projection.  An alternative method of generating this same stream
is to do all the counting up front: recurse right away on the starting number
and use $\<scons>$ to piece together the stream from each step towards zero.
This algorithm can be expressed in terms of System T-style recursion like so:
\begin{spacing}
\begin{align*}
  \<countNow> \ n 
  =
  \Rec n \As{}
  \{&
  \Zero \to \<zeroes>
  \\
  \mid&
  \Succ x \to xs.~ \<scons>~(\Succ x)~xs
  \}
\end{align*}
\end{spacing}
So that the translation of $\<countNow>$ to the abstract machine language is:
\begin{spacing}
\begin{align*}
  \<countNow> 
  =
  \lambda n.
  \outp\alpha
  \bigcut
  {n}
  {
    \begin{aligned}
      \Rec{}
      \{&
      \Zero \to \<zeroes>
      \\
      \mid&
      \Succ x \to xs.\outp\beta\cut{\<scons>}{\app{\Succ x}{\app{xs}{\beta}}}
      \}
      \\
      \With{}& \alpha
    \end{aligned}
  }
\end{align*}
\end{spacing}
What is the cost of $\<countNow>$? First, the up-front cost of calling the
function with the argument $\Succ^n\Zero$ is unsurprisingly $O(n)$, due to the
use of the recursor.  But this doesn't capture the full cost; what is the
efficiency of using the stream value returned by $\<countNow>$?  To understand
the efficiency of the stream, we have to consider \emph{both} the initial
numeric argument as well as the depth of the projection:
$\cut{\<countNow>}{\app{\Succ^n\Zero}{\Tail^m(\Head\alpha)}}$ in the abstract
machine language, which corresponds to
$\Head(\Tail^m(\<countNow>~(\Succ^n\Zero)))$ in a functional style.  In the base
case, we have the stream $\<zeroes>$, whose cost is $O(m)$ because it must
traverse past all $m$ $\Tail$ projections before it can return a $0$ to the
$\Head$ projection.  On top of this, we apply $n$ applications of $\<scons>$.
Recall that in call-by-value, we said that the cost of $\<scons>$ is the same as
the underlying stream.  Thus, the call-by-value efficiency of the stream
returned by $\cut{\<countNow>}{\app{\Succ^n\Zero}{\Tail^m(\Head\alpha)}}$ is
just $O(m)$.  In call-by-name in contrast, $\<scons>$ adds an additional linear
time overhead to the underlying stream.  Since there are $n$ applications of
$\<scons>$ and the projection is $m$ elements deep, the call-by-name efficiency
of the stream returned by
$\cut{\<countNow>}{\app{\Succ^n\Zero}{\Tail^m(\Head\alpha)}}$ is $O(m(n+1))$.
If the count $n$ and depth $m$ are roughly proportional to each other (so that
$n \approx m$), the difference between call-by-value and call-by-name evaluation
of $\<countNow>$'s stream is a jump between a linear time and quadratic time
computation.

\subsection{Corecursion in Terms of Coiteration}

Recall in \cref{sec:rec-vs-iter-mach} that we were able to encode $\Rec$ in
terms of $\Iter$, though at an increased computational cost.  Applying syntactic
duality, we can write a similar encoding of $\CoRec$ as a macro-expansion in
terms of $\CoIter$.  Doing so requires the dual of pairs: sum types.  Sum types
in the abstract machine look like \cite{CBVDualToCBN}:
\begin{align*}
  \cut{\Left V}{[e_1,e_2]}
  &\srd
  \cut{V}{e_1}
  &
  \cut{\Right V}{[e_1,e_2]}
  &\srd
  \cut{V}{e_2}
  &(\beta_+)
\end{align*}
The macro-expansion for encoding $\CoRec$ as $\CoIter$ derived from duality is:
\begin{spacing}
\begin{align*}
\begin{aligned}
  &
  \CoRec
  \begin{alignedat}[t]{2}
    \{&
    \Head \alpha &&\to e
    \\
    \mid&
    \Tail \beta &&\to \gamma.f
    \}
  \end{alignedat}
  \\
  &
  \With V
\end{aligned}
&\defeq
\begin{aligned}
  &\CoIter
  \begin{alignedat}[t]{2}
    \{&
    \Head \alpha &&\to [\Head\alpha, e]
    \\
    \mid&
    \Tail &&\to [\beta, \gamma].[\Tail\beta, f]
    \}
  \end{alignedat}
  \\
  &
  \With{} \Right V
\end{aligned}
\end{align*}
\end{spacing}
Note how the co\-inductive step of $\CoIter$ has access to both an option to
update the internal seed, or to return some other, fully-formed stream as its
tail.  So dually to the encoding of the recursor, this encoding keeps both
options open by reconstructing the original continuation alongside the
continuation which use its internal seed.  In the base case matching
$\Head\alpha$, it rebuilds the observation $\Head\alpha$ and pairs it up with
the original response $e$ to the base case.  In the co\-inductive case matching
$\Tail\beta$, the projection $\Tail\beta$ is combined with the continuation $f$
which can update the internal seed via $\gamma$.  Since $f$ might refer to one
or both of $\beta$ and $\gamma$, we need to ``extract'' the two parts from the
co\-recursive tail observation.  Above, we express this extraction as
co\-pattern matching \cite{Copatterns}, which is shorthand for
\begin{align*}
  [\beta,\gamma]. [e,f]
  \defeq
  \alpha.
  \inp x
  \cut
  {\Left\outp\beta\cut{\Right\outp\gamma\cut{x}{[e,f]}}{\alpha}}
  {\alpha}
\end{align*}
Because this encoding is building up a pair of two continuations, the internal
seed which is passed to them needs to be a value of the appropriately matching
sum type.  Thus, the encoding has two modes:
\begin{itemize}
\item If the seed has the form $\Right x$, then the $\CoIter$ is simulating the
  original $\CoRec$ process with the seed $x$.  This is because the $\Right$
  continuations for the base and co\-inductive steps are exactly those of the
  encoded recursor ($e$ and $f$, respectively) which get applied to $x$.
\item If the seed has the form $\Left s$, containing some stream $s$, then the
  $\CoIter$ is mimicking $s$ as-is.  This is because the $\Left$ continuations
  for the base and co\-inductive steps exactly mimic the original observations
  ($\Head\alpha$ and $\Tail\beta$, respectively) which get applied to the stream
  $s$.
\end{itemize}
To start the co\-recursive process off, we begin with $\Right V$, which
corresponds to the original seed $V$ given to the co\-recursor.  Then in the
co\-inductive step, if $f$ updates the seed to $V'$ via $\gamma$, the seed is
really updated to $\Right V'$, continuing the simulation of co\-recursion.
Otherwise, if $f$ returns some other stream $s$ to $\beta$, the seed gets
updated to $\Left s$, and the co\-iteration continues but instead mimics $s$ at
each step.


As with recursion, encoding $\CoRec$ in terms of $\CoIter$ forces a performance
penalty for functions like $\<scons>$ which can return a stream directly in the
co\-inductive case instead of updating the accumulator.  Recall how
call-by-value execution was more efficient than call-by-name.  Yet, the above
encoding results in this same inefficient execution even in call-by-value, as in
this command which accesses the $(n+2)^{th}$ element:
\begin{spacing}
\begin{align*}
  &
  \cut{scons}{\app{x}{\app{s}{\Tail^{n+2}(\Head\alpha)}}}
  \\
  &\dmapsto
  \bigcut
  {
    \begin{aligned}
      \CoRec{}
      \{&
      \Head \alpha \to \alpha
      \\
      \mid&
      \Tail \beta \to \blank\,. \inp{\blank\,} \cut{s}{\beta}
      \}
      \\
      \With{}& {\color{blue} x}
    \end{aligned}
  }
  {\Tail^{n+2}(\Head\alpha)}
  &(\beta_\to)
  \\
  &\defeq
  \bigcut
  {
    \begin{aligned}
      \CoIter{}
      \{&
      \Head\alpha \to [\Head\alpha, \alpha]
      \\
      \mid&
      \Tail \to [\beta, \blank]. [\Tail\beta, \inp{\blank}\cut{s}{\beta}]
      \}
      \\
      \With{}& {\color{blue} \Right x}
    \end{aligned}
  }
  {\Tail(\Tail^{n+1}(\Head\alpha))}
  \\
  &\dmapsto
  \cut
  {\color{blue} \Right x}
  {
    [
    \Tail\inp y \cut{\Left y}{\gamma}
    ,
    \inp\blank\cut{s}{\inp y \cut{\Left y}{\gamma}}
    ]
  }
  &(\beta_{\Tail},\dots)
  \\
  &\qquad
  \Where
  \gamma = \inp z \cut{\CoIter\{\dots\}\With z}{\Tail^{n+1}(\Head\alpha)}
  \\
  &\mapsto
  \cut
  {x}
  {
    \inp\blank
    \cut
    {\color{blue} s}
    {
      \inp y
      \cut
      {\color{blue} \Left y}
      {\inp z \cut{\CoIter\{\dots\}\With z}{\Tail^{n+1}(\Head\alpha)}}
    }
  }
  &(\beta_+)
  \\
  &\dmapsto
  \cut{\CoIter\{\dots\}\With {\color{blue} \Left s}}{\Tail^{n+1}(\Head\alpha)}
  &(\tmu)
  \\
  &\dmapsto
  \cut
  {\color{blue} \Left s}
  {
    [
    \Tail \inp y \cut{\Left y}{\gamma}
    ,
    \inp\blank\cut{s}{\inp y \cut{\Left y}{\gamma}}
    ]
  }
  &(\beta_{\Tail},\dots)
  \\
  &\qquad
  \Where
  \gamma = \inp z \cut{\CoIter\{\dots\}\With z}{\Tail^{n}(\Head\alpha)}
  \\
  &\mapsto
  \cut
  {\color{blue} s}
  {
    \Tail \inp y
    \cut
    {\Left y}
    {\inp z \cut{\CoIter\{\dots\}\With z}{\Tail^{n}(\Head\alpha)}}
  }
  &(\beta_+)
\end{align*}
\end{spacing}
Notice how the internal seed (in blue) changes through this computation.  To
begin, the seed is ${\color{blue}\Right x}$.  The first $\Tail$ projection
(triggering the $\beta_{\Tail}$ rule) leads to the decision point (by $\beta_+$)
which chooses to update the seed with ${\color{blue}\Left s}$.  From that point
on, each $\Tail$ projection to follow will trigger the next step of this
coiteration (and another $\beta_{\Tail}$ rule). Each time, this will end up
asking ${\color{blue}s}$ for its $\Tail$, ${\color{blue}s_1}$, which will be
then used to build the next seed, ${\color{blue}\Left s_1}$.

In order to continue, we need to know something about $s$, specifically, how it
responds to a $\Tail$ projection.  For simplicity, assume that the tail of $s$
is $s_1$, \ie $\cut{s}{\Tail E} \srds \cut{s_1}{E}$.  And then for each
following $s_i$, assume its tail is $s_{i+1}$.  Under this assumption, execution
will proceed to the base case $\Head$ projection like so:
\begin{align*}
  &
  \cut
  {\color{blue} s}
  {
    \Tail \inp y
    \cut
    {\Left y}
    {\inp z \cut{\CoIter\{\dots\}\With z}{\Tail^{n}(\Head\alpha)}}
  }
  \\
  &\dmapsto
  \cut{\CoIter\{\dots\}\With {\color{blue} \Left s_1}}{\Tail^{n}(\Head\alpha)}
  &(\Tail s \dmapsto s_1)
  \\
  &\dmapsto
  \cut{\CoIter\{\dots\}\With {\color{blue} \Left s_2}}{\Tail^{n-1}(\Head\alpha)}
  &(\Tail s_1 \dmapsto s_2)
  \\
  &\dmapsto
  \dots
  &(\Tail s_i \dmapsto s_{i+1})
  \\
  &\dmapsto
  \cut{\CoIter\{\dots\}\With {\color{blue} \Left s_{n+1}}}{\Head\alpha}
  &(\Tail s_n \dmapsto s_{n+1})
  \\
  &\mapsto
  \cut
  {{\color{blue} \Left s_{n+1}}}
  {[\Head\alpha, \alpha]}
  &(\beta_{\Head})
  \\
  &\mapsto
  \cut{s}{\Head\alpha}
  &(\beta_+)
\end{align*}
In total, this computation incurs additional $\beta_{\Tail}$ steps linearly
proportional to $n+2$, on top of any additional work needed to compute the same
number of $\Tail$ projections for each
$\cut{s_i}{\Tail E} \srds \cut{s_{i+1}}{E}$ along the way.


\section{Safety and Termination}
\label{sec:safety&termination}

Just like System T \cref{thm:t-type-safety,thm:t-termination}, the
(co)\-recursive abstract machine is both \emph{type safe} (meaning that
well-typed commands never get stuck) and \emph{terminating} (meaning that
well-typed commands always cannot execute forever).  In order to prove this
fact, we can build a model of type safety and termination rooted in the idea of
Girard's reducibility candidates \cite{ProofsAndTypes}, but which matches closer
to the structure of the abstract machine.  In particular, we will use a model
from \cite{ClassicalStrongNormalization,DualityOfIntersectonUnionTypes} which
first identifies a set of commands that are \emph{safe} to execute (in our case,
commands which terminate on a valid final configuration).  From there, types are
modeled as a combination of terms and co\-terms that embed this safety property
of executable commands.

\subsection{Safety and Candidates}

To begin, we derive our notion of safety from the conclusion of
\cref{thm:t-termination}.  Ultimately we will only run commands that are closed,
save for one free co\-variable $\alpha$ taking a $\Nat$, so we expect all such
\emph{safe} commands to eventually finish execution in a valid final state that
provides that $\alpha$ with a $\Nat$ construction: either a $\Zero$ or
$\Succ$essor.
\begin{definition}[Safety]
\label{def:termination-pole}
The set of \emph{safe} commands, $\Bot$, is:
\begin{align*}
  \Bot
  &\defeq
  \{
  c
  \mid
  \exists c' \in \mathit{Final}.~
  c \srds c'
  \}
  \\
  \mathit{Final}
  &\defeq
  \{\cut{\Zero}{\alpha} \mid \alpha \in \mathit{CoVar}\}
  \cup
  \{\cut{\Succ V}{\alpha} \mid V \in \mathit{Value}, \alpha \in \mathit{CoVar}\}
\end{align*}
\end{definition}

From here, we will model types as collections of terms and co\-terms that work
well together.  A sensible place to start is to demand \emph{soundness}: all of
these terms and co\-terms can only form safe commands (that is, ones found in
$\Bot$).  However, we will quickly find that we also need \emph{completeness}:
any terms and co\-terms that do not break safety are included.
\begin{definition}[Candidates]
\label{def:pre-candidate}
\label{def:sound-candidate}
\label{def:complete-candidate}
\label{def:reducibility-candidate}

A \emph{pre-candidate} is any pair $\sem{A}=(\sem{A}^+,\sem{A}^-)$ where
$\sem{A}^+$ is a set of terms, and $\sem{A}^-$ is a set of co\-terms, \ie
$\sem{A} \in \wp(\mathit{Term}) \times \wp(\mathit{CoTerm})$.

\noindent
A \emph{sound} (pre-)candidate satisfies this additional
requirement:
\begin{itemize}
\item \emph{Soundness:} for all $v \in \sem{A}^+$ and $e \in \sem{A}^-$, the
  command $\cut{v}{e}$ is safe (\ie $\cut{v}{e} \in \Bot$).
\end{itemize}
A \emph{complete} (pre-)candidate satisfies these two \emph{completeness}
requirements:
\begin{itemize}
\item \emph{Positive completeness:} if $\cut{v}{E}$ is safe (\ie
  $\cut{v}{E} \in \Bot$) for all $E \in \sem{A}^-$ then $v \in \sem{A}^+$.
\item \emph{Negative completeness:} if $\cut{V}{e}$ is safe (\ie
  $\cut{V}{e} \in \Bot$) for all $V \in \sem{A}^+$ then $e \in \sem{A}^-$.
\end{itemize}
A \emph{reducibility candidate} is any sound and complete (pre-)candidate.
$\precands$ denotes the set of all pre-candidates, $\soundcands$ denotes the set
of sound ones, $\complcands$ the set of complete ones, and $\redcands$ denotes
the set of all reducibility candidates.

As notation, given any pre-candidate $\sem{A}$, we will always write $\sem{A}^+$
to denote the first component of $\sem{A}$ (the term set $\pi_1(\sem{A})$) and
$\sem{A}^-$ to denote the second one (the co\-term set $\pi_2(\sem{A})$).  As
shorthand, given a reducibility candidate $\sem{A}=(\sem{A}^+,\sem{A}^-)$, we
write $v \in \sem{A}$ to mean $v \in \sem{A}^+$ and likewise $e \in \sem{A}$ to
mean $e \in \sem{A}^-$.  Given a set of terms $\sem{A}^+$, we will occasionally
write the pre-candidate $(\sem{A}^+,\{\})$ as just $\sem{A}^+$ when the
difference is clear from context.  Likewise, we will occasionally write the
pre-candidate $(\{\},\sem{A}^-)$ as just the co\-term set $\sem{A}^-$ when
unambiguous.
\end{definition}

The motivation behind soundness may seem straightforward.  It ensures that the
$Cut$ rule is safe.  But soundness is not enough, because the type system does
much more than $Cut$: it makes many promises that several terms and co\-terms
inhabit the different types.  For example, the function type contains
$\lambda$-abstractions and call stacks, and \emph{every} type contains $\mu$-
and $\tmu$-abstractions over free (co)\-variables of the type.  Yet, there is
nothing in soundness that keeps these promises.  For example, the trivial
pre-candidate $(\{\},\{\})$ is sound but it contains nothing, even though $ActR$
and $ActL$ promise many $\mu$- and $\tmu$-abstractions that are left out!  So to
fully reflect the rules of the type system, we require a more informative model.

Completeness ensures that every reducibility candidate has ``enough''
(co)\-terms that are promised by the type system.  For example, the completeness
requirements given in \cref{def:reducibility-candidate} are enough to guarantee
every complete candidate contains all the appropriate $\mu$- and
$\tmu$-abstractions that always step to safe commands for any allowed binding.
\begin{lemma}[Activation]
\label{thm:activation-sound}
For any complete candidate $\sem{A}$:
\begin{enumerate}
\item If $c\subst{\alpha}{E} \in \Bot$ for all $E \in \sem{A}$, then
  $\outp\alpha c \in \sem{A}$.
\item If $c\subst{x}{V} \in \Bot$ for all $V \in \sem{A}$, then
  $\inp x c \in \sem{A}$.
\end{enumerate}
\end{lemma}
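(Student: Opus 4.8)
The plan is to reduce each part directly to one of the completeness requirements of \cref{def:reducibility-candidate}, using the corresponding activation rule ($\mu$ or $\tmu$) together with the fact that $\Bot$ is closed under backward reduction. So the first thing I would record is this auxiliary closure property: if $c \srd c'$ and $c' \in \Bot$, then $c \in \Bot$. This is immediate from \cref{def:termination-pole}, since $c' \srds c''$ with $c'' \in \mathit{Final}$ lets us prepend the step $c \srd c'$ to obtain $c \srds c''$, witnessing $c \in \Bot$. Note this needs no determinism of reduction, as $\Bot$ only demands the \emph{existence} of some reduction to a final state.

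For part~(1), the goal is $\outp\alpha c \in \sem{A}^+$, and by \emph{positive completeness} it suffices to show that $\cut{\outp\alpha c}{E} \in \Bot$ for every co\-value $E \in \sem{A}^-$. Fixing such an $E$, the $\mu$ rule applies --- its side condition is precisely that the consumer is a co\-value $E$ --- giving $\cut{\outp\alpha c}{E} \srd c\subst{\alpha}{E}$. The hypothesis supplies $c\subst{\alpha}{E} \in \Bot$, so backward closure yields $\cut{\outp\alpha c}{E} \in \Bot$. Since $E$ was an arbitrary co\-value of $\sem{A}^-$, positive completeness delivers $\outp\alpha c \in \sem{A}$. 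Part~(2) is the exact mirror image: to establish $\inp x c \in \sem{A}^-$ I would invoke \emph{negative completeness}, reducing the goal to $\cut{V}{\inp x c} \in \Bot$ for every value $V \in \sem{A}^+$; the $\tmu$ rule (side condition: the producer is a value $V$) gives $\cut{V}{\inp x c} \srd c\subst{x}{V}$, the hypothesis gives $c\subst{x}{V} \in \Bot$, and backward closure again closes the case.

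There is no genuine obstacle in this argument; the single point requiring care is the matching of quantifiers. Both the completeness properties and the hypotheses of the lemma implicitly range over co\-values $E$ (respectively values $V$), written via the shorthand $E \in \sem{A}$ and $V \in \sem{A}$, and this is exactly the domain demanded by the side conditions of the $\mu$ and $\tmu$ rules. I would therefore make explicit at each invocation that the consumer (resp.\ producer) being a co\-value (resp.\ value) is what licenses the reduction step, so that the activation rule genuinely fires and the hypothesis can be applied.
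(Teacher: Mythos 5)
Your proposal is correct and follows essentially the same route as the paper: reduce via the $\mu$ (resp.\ $\tmu$) step to the substituted command, use closure of $\Bot$ under expansion, and conclude by positive (resp.\ negative) completeness. The only cosmetic difference is that you factor out the backward-closure of $\Bot$ as an explicit auxiliary property before the main argument, whereas the paper inlines it (via transitivity of reduction) in the proof and states it as a separate Expansion property immediately afterwards.
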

\begin{proof}
  Consider the first fact (the second is perfectly dual to it and follows
  analogously), and assume that $c\subst{\alpha}{E} \in \Bot$ for all
  $E \in \sem{A}$.  In other words, the definition of
  $c\subst{\alpha}{E} \in \Bot$ says $c\subst{\alpha}{E} \srds c'$ for some
  valid command $c' \in \mathit{Final}$.  For any specific
  $E \in \sem{A}$, we have:
  \begin{align*}
    \cut{\outp{\alpha} c}{E}
    &\srd
    c\subst{\alpha}{E}
    \srds
    c'
    \in
    \mathit{Final}
  \end{align*}
  By definition of $\Bot$ (\cref{def:termination-pole}) and transitivity of
  reduction, $\cut{\outp\alpha c}{E} \in \Bot$ as well for any $E \in \sem{A}$.
  Thus, $\sem{A}$ \emph{must} contain $\outp{\alpha}c$, as required by positive
  completeness (\cref{def:reducibility-candidate}).
\end{proof}
Notice that the restriction to values and co\-values in the definition of
completeness (\cref{def:reducibility-candidate}) is crucial in proving
\cref{thm:activation-sound}.  We can easily show that the $\mu$-abstraction
steps to a safe command for every given (co)\-value, but if we needed to say the
same for every (co)\-term we would be stuck in call-by-name evaluation where the
$\mu$-rule might not fire.  Dually, it is always easy to show the $\tmu$ safely
steps for every value, but we cannot say the same for every term in
call-by-value.  The pattern in \cref{thm:activation-sound} of reasoning about
commands based on the ways they reduce is a crucial key to proving properties of
particular (co)\-terms of interest.  The very definition of safe commands $\Bot$
is \emph{closed under expansion} of the machine's operational semantics.
\begin{property}[Expansion]
\label{thm:safety-expansion}
If $c \srd c' \in \Bot$ then $c \in \Bot$.
\end{property}
\begin{proof}
  Follows from the definition of $\Bot$ (\cref{def:termination-pole}) and
  transitivity of reduction.
\end{proof}
This property is the key step used to conclude the proof
\cref{thm:activation-sound}, and can be used to argue that other (co)\-terms are
in specific reducibility candidates due to the way they reduce.

\subsection{Subtyping and Completion}

Before we delve into the interpretations of types as reducibility candidates, we
first need to introduce another important concept that the model revolves
around: \emph{subtyping}.  The type system
(\cref{fig:t-mach-type-system,fig:stream-mach-type-system}) for the abstract
machine has no rules for subtyping, but nevertheless, a semantic notion of
subtyping is useful for organizing and building reducibility candidates.  More
specifically, notice that there are exactly two basic ways to order
pre-candidates based on the inclusion of their underlying sets:
\begin{definition}[Refinement and Subtype Order] 
\label{def:subtype-order}
\label{def:refinement-order}  

The \emph{refinement order} ($\sem{A} \refines \sem{B}$) and \emph{subtype
  order} ($\sem{A} \leq \sem{B}$) between pre-candidates is:
\begin{align*}
  (\sem{A}^+,\sem{A}^-) \refines (\sem{B}^+,\sem{B}^-)
  &\defeq
  (\sem{A}^+ \subseteq \sem{B}^+)
  \conj
  (\sem{A}^- \subseteq \sem{B}^-)
  \\
  (\sem{A}^+,\sem{A}^-) \leq (\sem{B}^+,\sem{B}^-)
  &\defeq
  (\sem{A}^+ \subseteq \sem{B}^+)
  \conj
  (\sem{A}^- \supseteq \sem{B}^-)
\end{align*}
The reverse \emph{extension order} $\sem{A} \extends \sem{B}$ is defined as
$\sem{B} \refines \sem{A}$, and \emph{supertype order} $\sem{A} \geq \sem{B}$
is $\sem{B} \leq \sem{A}$.
\end{definition}

Refinement just expresses basic inclusion: $\sem{A}$ refines $\sem{B}$ when
everything in $\sem{A}$ (both term and co\-term) is contained within $\sem{B}$.
With subtyping, the underlying set orderings go in \emph{opposite} directions!
If $\sem{A}$ is a subtype of $\sem{B}$, then $\sem{A}$ can have \emph{fewer}
terms and \emph{more} co\-terms than $\sem{B}$.  While this ordering may seem
counter-intuitive, it closely captures the understanding of sound candidates
where co\-terms are \emph{tests} on terms.  If $\Bot$ expresses which terms pass
which tests (\ie co\-terms), the soundness requires that all of its terms passes
each of its tests.  If a sound candidate has fewer terms, then it might be able
to safely include more tests which were failed by the removed terms.  But if a
sound candidate has more terms, it might be required to remove some tests that
the new terms don't pass.

This semantics for subtyping formalizes Liskov's \emph{substitution principle}
\cite{LiskovSubstitutionPrinciple}: if $\sem{A}$ is a subtype of $\sem{B}$, then
terms of $\sem{A}$ are also terms of $\sem{B}$ because they can be safely used
in any context expecting inputs from $\sem{B}$ (\ie with any co\-term of
$\sem{B}$).  Interestingly, our symmetric model lets us express the logical dual
of this substitution principle: if $\sem{A}$ is a subtype of $\sem{B}$, then the
co\-terms of $\sem{B}$ (\ie contexts expecting inputs of $\sem{B}$) are
co\-terms of $\sem{A}$ because they
can be safely given any input from $\sem{A}$.
These two principles lead to a natural subtype ordering of reducibility
candidates, based on the sets of (co)\-terms they include.

The usefulness of this semantic, dual notion of subtyping comes from the way it
gives us a complete lattice, which makes it possible to combine and build new
candidates from other simpler ones.
\begin{restatable}[Subtype Lattice]{definition}
  {thmpresubtypelattice}
\label{def:pre-subtype-lattice}

There is a complete lattice of pre-candidates in $\precands$ with respect to
subtyping order, where the binary intersection ($\sem{A} \wedge \sem{B}$, \aka
\emph{meet} or \emph{greatest lower bound}) and union ($\sem{A} \vee \sem{B}$,
\aka \emph{join} or \emph{least upper bound}) are defined as:
\begin{align*}
  \sem{A} \wedge \sem{B}
  &\defeq
  (\sem{A}^+ \cap \sem{B}^+, \sem{A}^- \cup \sem{B}^-)
  &
  \sem{A} \vee \sem{B}
  &\defeq
  (\sem{A}^+ \cup \sem{B}^+, \sem{A}^- \cap \sem{B}^-)
\end{align*}
Moreover, these generalize to the intersection ($\bigwedge \mathcal{A}$, \aka
infimum) and union ($\bigvee \mathcal{A}$, \aka supremum) of any set
$\mathcal{A} \subseteq \precands$ of pre-candidates
\begin{align*}
  \bigwedge \mathcal{A}
  &\defeq
  \left(
    \bigcap \{\sem{A}^+ \mid \sem{A} \in \mathcal{A}\}
    ,
    \bigcup \{\sem{A}^- \mid \sem{A} \in \mathcal{A}\}
  \right)
  \\
  \bigvee \mathcal{A}
  &\defeq
  \left(
    \bigcup \{\sem{A}^+ \mid \sem{A} \in \mathcal{A}\}
    ,
    \bigcap \{\sem{A}^- \mid \sem{A} \in \mathcal{A}\}
  \right)
\end{align*}
\end{restatable}
The binary least upper bounds and greatest lower bounds have these standard
properties:
\begin{align*}
  \sem{A}\wedge\sem{B}
  &\leq
  \sem{A},\sem{B}
  \leq
  \sem{A}\vee\sem{B}
  &
  \sem{C} \leq \sem{A},\sem{B}
  &\implies
  \sem{C} \leq \sem{A}\wedge\sem{B}
  &
  \sem{A},\sem{B} \leq \sem{C}
  &\implies
  \sem{A}\vee\sem{B} \leq \sem{C}
\end{align*}
In the general case for the bounds on an entire set of pre-candidates, we know:
\begin{align*}
  \forall \sem{A} \in \mathcal{A}.~
  &
  \textstyle
  \left(
    \bigwedge \mathcal{A}
    \leq
    \sem{A}
  \right)
  &
  (\forall \sem{A} \in \mathcal{A}.~ \sem{C} \leq \sem{A})
  &\implies
  \textstyle
  \sem{C} \leq \bigwedge \mathcal{A}
  \\
  \forall \sem{A} \in \mathcal{A}.~
  &
  \textstyle
  \left(
    \sem{A}
    \leq
    \bigvee \mathcal{A}
  \right)
  &
  (\forall \sem{A} \in \mathcal{A}.~ \sem{A} \leq \sem{C})
  &\implies
  \textstyle
  \bigvee \mathcal{A} \leq \sem{C}
\end{align*}

These intersections and unions both preserve soundness, and so they form a
lattice of sound candidates in $\soundcands$ as well.  However, they do not
preserve completeness in general, so they do not form a lattice of reducibility
candidates, \ie sound and complete pre-candidates.  Completeness is not
preserved because $\sem{A} \vee \sem{B}$ might be missing some terms (such as
$\mu$s) which could be soundly included, and dually $\sem{A} \wedge \sem{B}$
might be missing some co\-terms (such as $\tmu$s).  So because all reducibility
candidates are sound pre-candidates, $\sem{A} \vee \sem{B}$ and
$\sem{A} \wedge \sem{B}$ are well-defined, but their results will only be sound
candidates (not another reducibility candidate).%
\footnote{The refinement lattice, on the other hand, interacts very differently
  with soundness and completeness.  The refinement union
  \begin{math}
    \sem{A} \sqcup \sem{B}
    \defeq
    (\sem{A}^+ \cup \sem{B}^+, \sem{A}^- \cup \sem{B}^-)
    ,
  \end{math}
  preserves completeness, but might break soundness by putting together a term
  of $\sem{A}$ which is incompatible with a (co)\-term of $\sem{B}$, or vice
  versa.  Dually, The refinement intersection of two pre-candidates,
  \begin{math}
    \sem{A} \sqcap \sem{B}
    \defeq
    (\sem{A}^+ \cap \sem{B}^+, \sem{A}^- \cap \sem{B}^-)
    ,
  \end{math}
  preserves soundness but can break completeness if a safe term or co\-term is
  left out of the underlying intersections.  So while refinement may admit a
  complete lattice for pre-candidates in $\precands$, we only get two dual
  refinement semi-lattices for sound candidates in $\soundcands$ and complete
  candidates in $\complcands$.}

What we need is a way to extend arbitrary sound candidates, adding ``just
enough'' to make them full-fledged reducibility candidates.  Since there are two
possible ways to do this (add the missing terms or add the missing co\-terms),
there are two completions which go in different directions.  Also, since the
completeness of which (co)\-terms are guaranteed to be in reducibility
candidates is specified up to (co)\-values, we cannot be sure that absolutely
everything ends up in the completed candidate.  Instead, we can only ensure that
the (co)\-values that started in the pre-candidate are contained in its
completion.  This restriction to just the (co)\-values of a pre-candidate,
written $\sem{A}^v$ and defined as
\begin{align*}
  (\sem{A}^+,\sem{A}^-)^v
  &\defeq
  (\{V \mid V \in \sem{A}^+\}, \{E \mid E \in \sem{A}^-\})
\end{align*}
becomes a pivotal part of the semantics of types.  With this in mind, it follows
there are exactly two ``ideal'' completions, the positive and negative ones,
which give a reducibility candidate that is the closest possible to the starting
point.
\begin{restatable}[Positive \& Negative Completion]{lemma}
  {thmposnegcompletion}
\label{thm:pos-completion}
\label{thm:neg-completion}

There are two completions, $\PosCand$ and $\NegCand$, of any sound candidate
$\sem{A}$ with these three properties:
\begin{enumerate}
\item \emph{They are reducibility candidates:} $\PosCand(\sem{A})$ and
  $\NegCand(\sem{A})$ are both sound and complete.
\item \emph{They are (co)\-value extensions:} Every (co)\-value of $\sem{A}$ is
  included in $\PosCand(\sem{A})$ and $\NegCand(\sem{A})$.
  \begin{align*}
    \sem{A}^v &\refines \PosCand(\sem{A})
    &
    \sem{A}^v &\refines \NegCand(\sem{A})
  \end{align*}
\item \emph{They are the least/greatest such candidates:} Any reducibility
  candidate that extends the (co)\-values of $\sem{A}$ lies between
  $\PosCand(\sem{A})$ and $\NegCand(\sem{A})$, with $\PosCand(\sem{A})$ being
  smaller and $\NegCand(\sem{A})$ being greater.  In other words, given any
  reducibility candidate $\sem{C}$ such that $\sem{A}^v \refines \sem{C}$:
  \begin{align*}
    \PosCand(\sem{A}) \leq \sem{C} \leq \NegCand(\sem{A})
  \end{align*}
\end{enumerate}
\end{restatable}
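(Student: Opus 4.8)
The plan is to construct both completions by an orthogonality (``bi-orthogonal'') closure, but carried out at the level of \emph{(co)values only}---where it is perfectly well behaved---and then extended to full (co)terms by an activation step. For a set of values $\mathcal V$ and a set of co\-values $\mathcal E$, write
\[
  \mathcal V^\perp \defeq \{E \in \mathit{CoValue} \mid \forall V \in \mathcal V.~ \cut{V}{E} \in \Bot\}
  ,\qquad
  \mathcal E^\perp \defeq \{V \in \mathit{Value} \mid \forall E \in \mathcal E.~ \cut{V}{E} \in \Bot\}.
\]
Restricted to (co)values these two maps are antitone and satisfy the Galois law $\mathcal V \subseteq \mathcal E^\perp \iff \mathcal E \subseteq \mathcal V^\perp$, so $(-)^{\perp\perp}$ is a closure operator with $\mathcal V^{\perp\perp\perp} = \mathcal V^\perp$. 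First I would establish these lattice-theoretic facts (monotonicity and idempotence of the closure), which are routine and lean on \cref{def:subtype-order,def:pre-subtype-lattice}.

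Next I would define the two completions from the opposite ends of this closure. Writing $\sem{A}^v = (\mathcal V_A,\mathcal E_A)$ for the value/co\-value parts of the given sound candidate, the \emph{positive} completion is driven from the values: take the least closed value-core $\mathcal V_0 \defeq \mathcal V_A^{\perp\perp}$ with matching co\-value-core $\mathcal E_0 \defeq \mathcal V_A^{\perp}$ ($=\mathcal V_0^\perp$), and \emph{activate} to full (co)terms by
\[
  \PosCand(\sem{A})^+ \defeq \{v \mid \forall E \in \mathcal E_0.~ \cut{v}{E} \in \Bot\}
  ,\qquad
  \PosCand(\sem{A})^- \defeq \{e \mid \forall V \in \mathcal V_0.~ \cut{V}{e} \in \Bot\}.
\]
The \emph{negative} completion is the mirror image, driven from the co\-values: $\mathcal E_0' \defeq \mathcal E_A^{\perp\perp}$, $\mathcal V_0' \defeq \mathcal E_A^{\perp}$, activated the same way. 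Intuitively the activation step re-adds exactly the general $\mu$- and $\tmu$-abstractions promised by $ActR$/$ActL$, in the spirit of \cref{thm:activation-sound}; by construction the value part of $\PosCand(\sem{A})^+$ is again $\mathcal V_0$ and its co\-value part is $\mathcal E_0$, so the (co)value core is faithfully recovered.

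With the definitions fixed, the three properties should fall out as follows. For property (1), \emph{completeness} is immediate from the activation clauses---$\PosCand(\sem{A})^+$ is \emph{all} terms safe against $\mathcal E_0$ and $\PosCand(\sem{A})^-$ is all co\-terms safe against $\mathcal V_0$, which is precisely positive/negative completeness of \cref{def:reducibility-candidate}. Property (2), extension of the (co)values of $\sem{A}$, follows from soundness of $\sem{A}$: every co\-value of $\sem{A}$ is compatible with every value of $\sem{A}$, so $\mathcal E_A \subseteq \mathcal V_A^\perp = \mathcal E_0$ and dually $\mathcal V_A \subseteq \mathcal V_0$, giving $\sem{A}^v \refines \PosCand(\sem{A})$ and $\sem{A}^v \refines \NegCand(\sem{A})$. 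For property (3), let $\sem{C}$ be any reducibility candidate with $\sem{A}^v \refines \sem{C}$; since $\sem{C}$ is sound and complete its value-core is $\perp\perp$-closed and contains $\mathcal V_A$, hence contains $\mathcal V_0 = \mathcal V_A^{\perp\perp}$, while its co\-value-core equals the orthogonal of its value-core and so is contained in $\mathcal V_A^\perp = \mathcal E_0$. Because every reducibility candidate is the activation of its own core, translating these core inclusions through \cref{def:subtype-order} (more terms and fewer co\-terms as one goes up) yields $\PosCand(\sem{A}) \leq \sem{C} \leq \NegCand(\sem{A})$; this is just monotonicity of the closure and the Galois connection, together with monotonicity of activation.

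The one genuinely delicate step---and where I expect all the real effort to go---is \textbf{soundness} of the activated candidates: showing $\cut{v}{e}\in\Bot$ for \emph{every} $v \in \PosCand(\sem{A})^+$ and $e \in \PosCand(\sem{A})^-$, not merely for the (co)values. This is exactly the mismatch flagged in \cref{def:reducibility-candidate}: orthogonality only controls interactions with (co)values, whereas soundness quantifies over all (co)terms. The resolution is to exploit \cref{thm:safety-expansion} (closure of $\Bot$ under expansion) together with the operational behaviour of $\mu$ and $\tmu$, and here the call-by-value/call-by-name asymmetry becomes essential. In call-by-value every co\-term is already a co\-value, so the only non-value terms are $\mu$-abstractions $\outp\alpha c$; for such a term in $\PosCand(\sem{A})^+$ we have $\cut{\outp\alpha c}{e} \srd c\subst{\alpha}{e}$ with $e$ a co\-value lying in $\mathcal E_0$, and membership of $\outp\alpha c$ gives $c\subst{\alpha}{e}\in\Bot$, whence $\cut{\outp\alpha c}{e}\in\Bot$ by expansion. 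In call-by-name the situation is dual: every term is a value, the only non-co\-value co\-terms are $\tmu$-abstractions $\inp x c$, and $\cut{v}{\inp x c}\srd c\subst{x}{v}$ with $v$ a value lying in $\mathcal V_0$, again landing in $\Bot$ by the defining property of $\PosCand(\sem{A})^-$ and expansion. Checking that this single argument covers whichever notion of value/co\-value is in force---so that the uniform statement holds in both strategies---is the crux, and it is the classical-control analogue of the subtlety the paper highlights in reconciling the two fixed-point constructions.
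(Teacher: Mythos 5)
Your proposal is correct and takes essentially the same route as the paper: your value/co\-value-level Galois connection plus ``activation'' is exactly the paper's (co)\-value-restricted orthogonality, so that your activated cores coincide with the paper's definitions $\PosCand(\sem{A}) = (\sem{A}^{+v\Bot v\Bot}, \sem{A}^{+v\Bot v\Bot v\Bot})$ and $\NegCand(\sem{A}) = (\sem{A}^{-v\Bot v\Bot v\Bot}, \sem{A}^{-v\Bot v\Bot})$, and the closure laws you invoke are its restricted contrapositive/DOI/TOE properties. The only difference is packaging: the paper isolates your inline call-by-value/call-by-name soundness argument in a separate lemma (fixed points of $\blank^{v\Bot}$ are exactly the reducibility candidates) and then checks that the completions are such fixed points by restricted triple-orthogonal elimination, whereas you verify soundness and completeness of the activated candidates directly.
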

The full proof of these completions (and proofs of the other remaining
propositions not given in this section) are given in \cref{sec:model}.  We refer
to $\PosCand(\sem{A})$ as the \emph{positive completion of $\sem{A}$} because it
is based entirely on the values of $\sem{A}$ (\ie its positive components):
$\PosCand(\sem{A})$ collects the complete set of co\-terms that are safe with
$\sem{A}$'s values, then collects the terms that are safe with the co\-values
from the previous step, and so on until a fixed point is reached (taking three
rounds total).  As such, the co\-values included in $\sem{A}$ can't influence
the result of $\PosCand(\sem{A})$.  Dually, $\NegCand(\sem{A})$ is the
\emph{negative completion of $\sem{A}$} because it is based entirely on
$\sem{A}$'s co\-values in the same manner.
\begin{restatable}[Positive \& Negative Invariance]{lemma}
  {thmposneginvariance}
\label{thm:pos-invariance}
\label{thm:neg-invariance}

For any sound candidates $\sem{A}$ and $\sem{B}$:
\begin{itemize}
\item If the values of $\sem{A}$ and $\sem{B}$ are the same, then
  $\PosCand(\sem{A}) = \PosCand(\sem{B})$.
\item If the (co)\-values of $\sem{A}$ and $\sem{B}$ are the same, then
  $\NegCand(\sem{A}) = \NegCand(\sem{B})$.
\end{itemize}
\end{restatable}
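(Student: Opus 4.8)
The plan is to show that $\PosCand$ depends only on the \emph{values} of its argument and $\NegCand$ only on the \emph{co\-values}, so that two sound candidates agreeing on the relevant data must receive the same completion. Write $\mathcal V_A := \sem A^+\cap\mathit{Value}$ and $\mathcal E_A := \sem A^-\cap\mathit{CoValue}$ for the values and co\-values of $\sem A$, and introduce the two orthogonality operators
\[
  \Phi^-(\mathcal V) := \{\,e \mid \cut{V}{e}\in\Bot \text{ for all } V\in\mathcal V\,\},
  \qquad
  \Phi^+(\mathcal E) := \{\,v \mid \cut{v}{E}\in\Bot \text{ for all } E\in\mathcal E\,\},
\]
both of which are antitone. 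The first thing I would record is that \emph{every} reducibility candidate is already pinned down by its (co)\-values: soundness gives $\sem C^-\subseteq\Phi^-(\sem C^+\cap\mathit{Value})$ and $\sem C^+\subseteq\Phi^+(\sem C^-\cap\mathit{CoValue})$, while the two completeness conditions of \cref{def:reducibility-candidate} give the reverse inclusions, so in fact $\sem C^- = \Phi^-(\sem C^+\cap\mathit{Value})$ and $\sem C^+ = \Phi^+(\sem C^-\cap\mathit{CoValue})$ for any reducibility candidate $\sem C$.

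Using this, I would identify $\PosCand(\sem A)$ with the bi\-orthogonal candidate built from $\mathcal V_A$ alone, namely $\sem P := \bigl(\Phi^+(\Phi^-(\mathcal V_A)\cap\mathit{CoValue}),\,\Phi^-(\mathcal V_A)\bigr)$. Two elementary facts connect $\sem P$ to \cref{thm:pos-completion}. First, $\sem P$ extends $\sem A^v$: its term component contains $\mathcal V_A$ by definition of $\Phi^\pm$, and its co\-term component contains $\mathcal E_A$ because soundness of $\sem A$ already places every $E\in\mathcal E_A$ inside $\Phi^-(\mathcal V_A)$. Second, $\sem P\leq\sem C$ for every reducibility candidate $\sem C$ with $\mathcal V_A\subseteq\sem C^+$: from $\mathcal V_A\subseteq\sem C^+\cap\mathit{Value}$ and the fixed\-point equations above, $\sem C^- = \Phi^-(\sem C^+\cap\mathit{Value})\subseteq\Phi^-(\mathcal V_A)=\sem P^-$, and then $\sem C^+ = \Phi^+(\sem C^-\cap\mathit{CoValue})\supseteq\Phi^+(\Phi^-(\mathcal V_A)\cap\mathit{CoValue})=\sem P^+$. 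Since \cref{thm:pos-completion} tells us $\PosCand(\sem A)$ is a reducibility candidate extending $\sem A^v$ and is the $\leq$\-least such, these two facts force $\PosCand(\sem A)=\sem P$. Because the right\-hand side mentions only $\mathcal V_A$, equality of the values of $\sem A$ and $\sem B$ immediately yields $\PosCand(\sem A)=\PosCand(\sem B)$. The argument for $\NegCand$ is the mirror image: $\NegCand(\sem A)=\bigl(\Phi^+(\mathcal E_A),\,\Phi^-(\Phi^+(\mathcal E_A)\cap\mathit{Value})\bigr)$ depends only on $\mathcal E_A$, so equality of all (co)\-values of $\sem A$ and $\sem B$ (indeed, already equality of their co\-values) forces $\NegCand(\sem A)=\NegCand(\sem B)$.

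The step I expect to be the genuine obstacle is justifying that $\sem P$ really is a reducibility candidate, and in particular that it is \emph{sound} as a pair of full sets: one needs $\cut{v}{e}\in\Bot$ for \emph{arbitrary} $v\in\sem P^+$ and $e\in\sem P^-$, not merely for (co)\-values. This cannot be read off from (co)\-value orthogonality; it requires the machine\-specific fact that every command reduces, via $\mu$ and $\tmu$, to an interaction with a (co)\-value, together with closure of $\Bot$ under expansion (\cref{thm:safety-expansion}). This is exactly the content already secured in establishing \cref{thm:pos-completion}, so rather than reprove it I would route around it, using \cref{thm:pos-completion} only as a black box (that $\PosCand(\sem A)$ is a $\leq$\-least reducibility candidate extending $\sem A^v$) and letting the two elementary inclusions above do the identification with the value\-determined $\sem P$. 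The remaining work---checking the fixed\-point equations, antitonicity of $\Phi^\pm$, and the dualization swapping the roles of $\mathit{Value}$/$\mathit{CoValue}$ and $\Phi^+$/$\Phi^-$---is routine; I would also remark that the hypothesis stated for $\NegCand$ is conservative, since the co\-values alone already determine it.
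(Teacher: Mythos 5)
Your two ``elementary inclusions'' deliver only one half of the identification $\PosCand(\sem{A}) = \sem{P}$, and the missing half is exactly the step you set aside. Applying your second fact to $\sem{C} = \PosCand(\sem{A})$ (a reducibility candidate whose term component contains the values of $\sem{A}$, by \cref{thm:pos-completion}) gives $\sem{P} \leq \PosCand(\sem{A})$. But the converse $\PosCand(\sem{A}) \leq \sem{P}$ can only come from the leastness clause of \cref{thm:pos-completion}, and that clause quantifies over \emph{reducibility candidates} extending $\sem{A}^v$: to apply it to $\sem{P}$ you must first know $\sem{P}$ is sound (and complete). That is precisely the soundness obligation you propose to ``route around'', and the black box cannot supply it, since \cref{thm:pos-completion} makes assertions about $\PosCand(\sem{A})$ only, never about your $\sem{P}$. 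So the argument is circular at its crux: the lemma-from-universal-property strategy needs the very fact you declined to prove.

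The obstacle is not a technicality, because the identification itself is not available from the paper's machinery. In the paper's notation your candidate is $\sem{P} = (\sem{A}^{+v\Bot v\Bot}, \sem{A}^{+v\Bot})$, whereas \cref{def:pos-candidate} gives $\PosCand(\sem{A}) = (\sem{A}^{+v\Bot v\Bot}, \sem{A}^{+v\Bot v\Bot v\Bot})$; equality of the co\-term components is the \emph{unrestricted} triple-orthogonal law $\sem{A}^{+v\Bot} = \sem{A}^{+v\Bot v\Bot v\Bot}$. The paper proves only the co\-value-restricted version $\sem{A}^{v\Bot v\Bot v\Bot v} = \sem{A}^{v\Bot v}$ (\cref{thm:restricted-toe}), and defines $\PosCand$ with the smaller co\-term component exactly so that soundness holds by construction. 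In call-by-name the unrestricted law would assert that every $\tmu$-co\-term that is safe against the particular values of $\sem{A}$ is also safe against every term passing all co\-value tests --- a behavioral claim that would need machine-specific induction on reduction sequences (of the kind carried out in \cref{thm:nat-choice}), not an orthogonality identity. There is also a structural reason your route cannot work: \cref{thm:pos-completion} pins $\PosCand(\sem{A})$ down relative to $\sem{A}^v$, values \emph{and} co\-values, while the lemma asserts invariance when only the values agree; extracting co\-value-irrelevance from leastness would require $\sem{A}^{-v} \subseteq \PosCand(\sem{B})^-$, which negative completeness (taken relative to the larger value set $\PosCand(\sem{B})^{+v}$) does not yield --- it again needs the concrete construction via \cref{thm:restricted-toe}. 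The paper's own proof is one line for this reason: by \cref{def:pos-candidate}, $\PosCand(\sem{A})$ is literally a function of $\sem{A}^{+v}$ alone, and dually $\NegCand(\sem{A})$ of $\sem{A}^{-v}$, so invariance is immediate by inspection of the definition rather than derived from the universal property.
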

This extra fact gives us another powerful completeness property.  We can reason
about co\-values in the positive candidate $\PosCand(\sem{A})$ purely in terms
of how they interact with $\sem{A}$'s values, ignoring the rest of
$\PosCand(\sem{A})$.  Dually, we can reason about values in the negative
candidate $\NegCand(\sem{A})$ purely based on of $\sem{A}$'s co\-values.
\begin{restatable}[Strong Positive \& Negative Completeness]{corollary}
  {thmposnegcompleteness}
\label{thm:pos-completeness}
\label{thm:neg-completeness}

For any sound candidate $\sem{A}$:
\begin{itemize}
\item $E \in \PosCand(\sem{A})$ if and only if $\cut{V}{E} \in \Bot$ for all
  $V \in \sem{A}$.
\item $V \in \NegCand(\sem{A})$ if and only if $\cut{V}{E} \in \Bot$ for all
  $E \in \sem{A}$.
\end{itemize}
\end{restatable}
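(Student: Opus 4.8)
The plan is to prove the two bullets by a single argument together with the term/co\-term symmetry, so I would give the positive case in full and then note the negative one is perfectly dual (swapping $\PosCand$ with $\NegCand$, values with co\-values, and the two halves of the least/greatest property). Each biconditional breaks into an easy forward direction and a harder backward direction. For the forward direction of the first bullet, assume $E \in \PosCand(\sem{A})$ and let $V \in \sem{A}$ be any value. By the (co)\-value extension property of \cref{thm:pos-completion} we have $\sem{A}^v \refines \PosCand(\sem{A})$, so $V \in \PosCand(\sem{A})$; and since $\PosCand(\sem{A})$ is a reducibility candidate, hence sound, we get $\cut{V}{E} \in \Bot$. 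This uses only properties~1 and~2 of the completion lemma.

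The backward direction is the real content. Assume $\cut{V}{E} \in \Bot$ for every value $V \in \sem{A}$. The idea is to build a single reducibility candidate that extends the (co)\-values of $\sem{A}$ and already contains $E$ as a co\-term, and then force $E$ into $\PosCand(\sem{A})$ by using that $\PosCand(\sem{A})$ is the \emph{least} candidate extending $\sem{A}^v$. Concretely, form the pre\-candidate
$$\sem{B} \defeq \bigl( (\sem{A}^v)^+,\ (\sem{A}^v)^- \cup \{E\} \bigr),$$
consisting of the values and co\-values of $\sem{A}$ together with the single extra co\-value $E$. This $\sem{B}$ is sound: any command $\cut{V}{e}$ with $V \in (\sem{A}^v)^+$ and $e \in (\sem{A}^v)^-$ is safe because $\sem{A}$ is sound, while $\cut{V}{E}$ is safe by the hypothesis, so every pairing lands in $\Bot$. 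Since $\sem{B}$ consists entirely of (co)\-values we have $\sem{B}^v = \sem{B}$, and so the (co)\-value extension property applied to the reducibility candidate $\NegCand(\sem{B})$ yields $E \in \NegCand(\sem{B})$.

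It remains to transport $E$ from $\NegCand(\sem{B})$ into $\PosCand(\sem{A})$. Because $\sem{A}^v \refines \sem{B} \refines \NegCand(\sem{B})$, the candidate $\NegCand(\sem{B})$ extends the (co)\-values of $\sem{A}$, so the least/greatest characterization (property~3 of \cref{thm:neg-completion}) gives $\PosCand(\sem{A}) \leq \NegCand(\sem{B})$. Unfolding the subtype order, $\PosCand(\sem{A})^- \supseteq \NegCand(\sem{B})^-$, and therefore $E \in \NegCand(\sem{B})$ delivers $E \in \PosCand(\sem{A})$. The step that needs care—and which I expect to be the main obstacle to get exactly right—is the choice of \emph{negative} completion for $\sem{B}$ together with the matching direction of the subtype inequality: since $E$ is a co\-value and negative completion is governed by co\-values (cf.\ \cref{thm:neg-invariance}), $E$ is guaranteed to persist into $\NegCand(\sem{B})$, whereas it could be discarded by $\PosCand(\sem{B})$; and it is precisely the reversed inclusion $\PosCand(\sem{A})^- \supseteq \NegCand(\sem{B})^-$ built into $\leq$ that lets the co\-term flow down from the larger candidate into $\PosCand(\sem{A})$. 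The negative bullet is the mirror image: form $\sem{B} \defeq \bigl( (\sem{A}^v)^+ \cup \{V\},\ (\sem{A}^v)^- \bigr)$, push $V$ into $\PosCand(\sem{B})$ by value extension, and squeeze it into $\NegCand(\sem{A})$ using the greatest-candidate half $\PosCand(\sem{B}) \leq \NegCand(\sem{A})$ of property~3.
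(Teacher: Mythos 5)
Your proof is correct, but your backward direction takes a genuinely different route from the paper's. The paper obtains the corollary directly from \cref{thm:pos-completion} together with \emph{invariance} (\cref{thm:pos-invariance}): unfolding that citation, one forms essentially the same auxiliary sound candidate $\sem{B} \defeq (\sem{A}^{+v},\, \sem{A}^{-v} \cup \{E\})$ as you do, notes that $\sem{B}$ has the same values as $\sem{A}$, so invariance gives $\PosCand(\sem{B}) = \PosCand(\sem{A})$ outright, and then the (co)value-extension property~2 puts $E \in \sem{B}^v \refines \PosCand(\sem{B}) = \PosCand(\sem{A})$ in one step. You never invoke invariance; instead you place $E$ in $\NegCand(\sem{B})$ and transport it into $\PosCand(\sem{A})$ via the least-candidate half of property~3 and the contravariant coterm component of $\leq$. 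Both arguments are valid and short; yours buys independence from \cref{thm:pos-invariance}, using only the three defining clauses of \cref{thm:pos-completion}, while the paper's route avoids the subtyping detour altogether. One motivating remark of yours is false, though harmless to the proof: $E$ cannot ``be discarded by $\PosCand(\sem{B})$'', because property~2 guarantees that \emph{every} (co)value of $\sem{B}$---in particular the covalue $E$---lands in both $\PosCand(\sem{B})$ and $\NegCand(\sem{B})$. Indeed, your transport argument works verbatim with $\PosCand(\sem{B})$ in place of $\NegCand(\sem{B})$, and adding invariance collapses it to the paper's one-step proof.
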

\begin{proof}
  Follows directly from \cref{thm:pos-invariance,thm:pos-completion}.
\end{proof}

Now that we know how to turn sound pre-candidates $\sem{A}$ into reducibility
candidates $\PosCand(\sem{A})$ or $\NegCand(\sem{A})$---$\PosCand$ and
$\NegCand$ take anything sound and deliver something sound \emph{and}
complete---we can spell out precisely the subtype-based lattice of reducibility
candidates.

\begin{restatable}[Reducibility Subtype Lattice]{theorem}
  {thmsubtypelattice}
\label{thm:subtype-lattice}

There is a complete lattice of reducibility candidates in $\redcands$ with
respect to subtyping order, with this binary intersection
$\sem{A}\curlywedge\sem{B}$ and union $\sem{A}\curlyvee\sem{B}$
\begin{align*}
  \sem{A} \curlywedge \sem{B}
  &\defeq
  \NegCand(\sem{A} \wedge \sem{B})
  &
  \sem{A} \curlyvee \sem{B}
  &\defeq
  \PosCand(\sem{A} \vee \sem{B})
\end{align*}
and this intersection ($\bigcurlywedge\mathcal{A}$) and union
($\bigcurlyvee\mathcal{A}$) of any set $\mathcal{A} \subseteq \redcands$ of
reducibility candidates
\begin{align*}
  \bigcurlywedge\mathcal{A}
  &\defeq
  \NegCand\left(\bigwedge\mathcal{A}\right)
  &
  \bigcurlyvee\mathcal{A}
  &\defeq
  \PosCand\left(\bigvee\mathcal{A}\right)
\end{align*}
\end{restatable}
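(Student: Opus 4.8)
The plan is to verify directly that, under the subtyping order $\leq$, the set $\redcands$ has the claimed infima and suprema: that $\NegCand(\bigwedge\mathcal{A})$ and $\PosCand(\bigvee\mathcal{A})$ satisfy the universal properties of greatest lower bound and least upper bound \emph{among reducibility candidates}. The construction is completely symmetric under swapping terms with co\-terms, values with co\-values, $\PosCand$ with $\NegCand$, $\wedge$ with $\vee$, and reversing $\leq$, so it suffices to establish the supremum case and obtain the infimum case by duality. The binary $\sem{A}\curlyvee\sem{B}$ and $\sem{A}\curlywedge\sem{B}$ are then just the instances $\mathcal{A}=\{\sem{A},\sem{B}\}$, using $\bigvee\{\sem{A},\sem{B}\}=\sem{A}\vee\sem{B}$ and $\bigwedge\{\sem{A},\sem{B}\}=\sem{A}\wedge\sem{B}$. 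First I would record well\-definedness: $\bigvee\mathcal{A}$ is a sound pre\-candidate because $\bigvee$ preserves soundness, and $\PosCand(\bigvee\mathcal{A})$ is then a genuine reducibility candidate by part~(1) of \cref{thm:pos-completion}.

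Next comes the upper\-bound step, $\sem{A}\leq\PosCand(\bigvee\mathcal{A})$ for every $\sem{A}\in\mathcal{A}$, which unfolds into the two inclusions demanded by $\leq$. For the co\-terms, take any $e\in\PosCand(\bigvee\mathcal{A})^-$ and any value $V\in\sem{A}^+$; since $V$ is a value of $\bigvee\mathcal{A}$ it lies in $\PosCand(\bigvee\mathcal{A})^+$ by the (co)\-value\-extension property (\cref{thm:pos-completion}, part~(2)), so soundness of the reducibility candidate $\PosCand(\bigvee\mathcal{A})$ gives $\cut{V}{e}\in\Bot$. As this holds for all values of $\sem{A}$, negative completeness of $\sem{A}$ (\cref{def:reducibility-candidate}) forces $e\in\sem{A}^-$, whence $\PosCand(\bigvee\mathcal{A})^-\subseteq\sem{A}^-$. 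For the terms, take $v\in\sem{A}^+$ and any co\-value $E\in\PosCand(\bigvee\mathcal{A})^-$; by the inclusion just shown $E\in\sem{A}^-$, so soundness of $\sem{A}$ yields $\cut{v}{E}\in\Bot$, and positive completeness of $\PosCand(\bigvee\mathcal{A})$ gives $v\in\PosCand(\bigvee\mathcal{A})^+$.

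The least\-upper\-bound step is where I expect the only real difficulty. Given a reducibility candidate $\sem{C}$ with $\sem{A}\leq\sem{C}$ for all $\sem{A}\in\mathcal{A}$, I want $\PosCand(\bigvee\mathcal{A})\leq\sem{C}$, and the natural tool is part~(3) of \cref{thm:pos-completion}, which bounds $\PosCand(\sem{X})$ above by any reducibility candidate whose (co)\-values refine $\sem{X}$. The obstacle is that $(\bigvee\mathcal{A})^v$ does \emph{not} refine $\sem{C}$: the term part $(\bigvee\mathcal{A})^+\subseteq\sem{C}^+$ holds, but in the subtype order the co\-values of $\bigvee\mathcal{A}$ point the wrong way, satisfying $(\bigvee\mathcal{A})^-\supseteq\sem{C}^-$ instead of $\subseteq$. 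The resolution, and the key idea of the proof, is Positive Invariance (\cref{thm:pos-invariance}): because $\PosCand$ depends only on the values of its argument, I replace $\bigvee\mathcal{A}$ by the value\-only sound candidate $\sem{X}=((\bigvee\mathcal{A})^+,\{\})$, which is sound vacuously and shares the values of $\bigvee\mathcal{A}$. Its co\-value component is empty, so $\sem{X}^v$ trivially refines $\sem{C}$, while its value component $(\bigvee\mathcal{A})^+\cap\mathit{Value}\subseteq\sem{C}^+$. Part~(3) then gives $\PosCand(\sem{X})\leq\sem{C}$, and Positive Invariance identifies $\PosCand(\sem{X})$ with $\PosCand(\bigvee\mathcal{A})$, closing the step.

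Finally, the infimum claim for $\NegCand(\bigwedge\mathcal{A})$ follows by the dual argument: $\bigwedge$ preserves soundness and $\NegCand$ lands in $\redcands$, the lower\-bound direction mirrors the upper\-bound one with the roles of values and co\-terms exchanged, and the greatest\-lower\-bound direction uses Negative Invariance together with the $\sem{C}\leq\NegCand(\sem{X})$ half of \cref{thm:pos-completion}, part~(3). Taken together, these verifications show that every subset $\mathcal{A}\subseteq\redcands$ has both a meet $\NegCand(\bigwedge\mathcal{A})$ and a join $\PosCand(\bigvee\mathcal{A})$ inside $\redcands$, so $\redcands$ is a complete lattice with the stated operations, and the binary cases specialize accordingly.
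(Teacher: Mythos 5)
Your proof is correct: the element-wise upper-bound argument is valid (negative completeness of $\sem{A}$ applies because you establish $\cut{V}{e} \in \Bot$ for every \emph{value} $V \in \sem{A}^+$, which is exactly what \cref{def:reducibility-candidate} demands), and your value-only candidate $((\bigvee\mathcal{A})^+,\{\})$ together with \cref{thm:pos-invariance} correctly circumvents the contravariance of the co\-term component. Your route, however, is genuinely different from the paper's. The paper's proof is a short algebraic composition of lemmas it has already banked: since $\bigwedge\mathcal{A}$ and $\bigvee\mathcal{A}$ are sound (\cref{thm:sound-subtype-lattice}), monotonicity (\cref{thm:pos-monotonicity,thm:neg-monotonicity}) and idempotency (\cref{thm:pos-idempotency,thm:neg-idempotency}) of $\PosCand$ and $\NegCand$, applied to the pre-candidate lattice inequalities $\bigwedge\mathcal{A} \leq \sem{A} \leq \bigvee\mathcal{A}$ and to the bounds $\sem{B} \leq \bigwedge\mathcal{A}$ and $\bigvee\mathcal{A} \leq \sem{C}$, immediately yield both the bound properties and the universal properties. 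You instead verify the universal properties from first principles, using only the defining clauses of \cref{thm:pos-completion}: the upper bound by chasing soundness, completeness, and (co)\-value extension element by element, and the least upper bound by re-deriving, inline, precisely the argument the paper uses to prove its monotonicity lemma --- replacing $\bigvee\mathcal{A}$ by a value-only sound candidate so that refinement into $\sem{C}$ holds, then invoking invariance to identify the two completions. What the paper's route buys is brevity and modularity (monotonicity and idempotency are reused elsewhere, e.g.\ for \cref{thm:nat-monotonicity}); what yours buys is self-containedness and an explicit identification of the one real subtlety --- that subtyping is contravariant in co\-terms, so part~(3) of \cref{thm:pos-completion} cannot be applied to $\bigvee\mathcal{A}$ directly --- which the paper's presentation hides inside the proof of \cref{thm:pos-monotonicity}.
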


\subsection{Interpretation and Adequacy}

\begin{figure}
\centering
\begin{align*}
  \den{A \to B}
  &\defeq
  \NegCand\{\app{V}{E} \mid V \in \den{A}, E \in \den{B}\}
  \\
  &{\phantom{:}}=
  \bigcurlyvee
  \{
  \sem{C} \in \redcands
  \mid
  \forall~ V \in \den{A}, E \in \den{B}.~
  \app{V}{E} \in \sem{C}
  \}
  \\
  \den{\Nat}
  &\defeq
  \bigcurlywedge
  \{
  \sem{C} \in \redcands
  \mid
  (\Zero \in \sem{C})
  ~\conj~
  (\forall~ V \in \sem{C}.~ \Succ V \in \sem{C})
  \}
  \\
  \den{\Stream A}
  &\defeq
  \bigcurlyvee
  \{
  \sem{C} \in \redcands
  \mid
  (\forall E \in \den{A}.~ \Head E \in \sem{C})
  ~\conj~
  (\forall E \in \sem{C}.~ \Tail E \in \sem{C})
  \}
  \\
  \den{\Gamma}
  &\defeq
  \{
  \rho
  \in
  \<Subst>
  \mid
  (\forall (x{\givestype}A)\in\Gamma.~ x\subs{\rho} \in \den{A})
  ~\conj~
  (\forall (\alpha{\takestype}A)\in\Gamma.~ \alpha\subs{\rho} \in \den{A})
  \}
  \\
  \<Subst> \ni \rho
  &::= \asub{x}{V},\dots,\asub{\alpha}{E},\dots
\end{align*}
\begin{align*}
  \den{\Gamma \entails c \contra}
  &\defeq
  \forall \rho \in \den{\Gamma}.~ c\subs{\rho} \in \Bot
  \\
  \den{\Gamma \entails v \givestype A}
  &\defeq
  \forall \rho \in \den{\Gamma}.~ v\subs{\rho} \in \den{A}
  \\
  \den{\Gamma \entails e \takestype A}
  &\defeq
  \forall \rho \in \den{\Gamma}.~ e\subs{\rho} \in \den{A}
\end{align*}
\caption{Model of termination and safety of the abstract machine.}
\label{fig:termination-model}
\end{figure}

Using the subtyping lattice, we have enough infrastructure to define an
interpretation of types and type-checking judgments as given in
\cref{fig:termination-model}.  Each type is interpreted as a reducibility
candidate.
%
Even though we are dealing with recursive types ($\Nat$ and $\Stream$) the
candidates for them are defined in a non-recursive way based on Knaster-Tarski's
fixed point construction \cite{KnasterFixedPoint,TarskiFixedPoint}, and can be
read with these intuitions:
\begin{itemize}
\item $\den{A \to B}$ is the negatively-complete candidate containing all the
  call stacks built from $\den{A}$ arguments and $\den{B}$ return continuations.
  Note that this is the same thing (via
  \cref{thm:neg-completion,thm:neg-invariance}) as \emph{largest} candidate
  containing those call stacks.
\item $\den{\Nat}$ is the \emph{smallest} candidate containing $\Zero$ and
  closed $\Succ$essor constructors.
\item $\den{\Stream A}$ is the \emph{largest} candidate containing $\Head$
  projections expecting an $\den{A}$ element and closed under $\Tail$
  projections.
\end{itemize}
Typing environments ($\Gamma$) are interpreted as the set of valid substitutions
$\rho$ which map variables $x$ to values and co\-variables $\alpha$ to
co\-values.  The interpretation of the typing environment
$\Gamma, x \givestype A$ places an additional requirement on these
substitutions: a valid substitution $\rho \in \den{\Gamma, x \givestype A}$ must
substitute a value of $\den{A}$ for the variable $x$, \ie
$x\subs{\rho} \in \den{A}$.  Dually, a valid substitution
$\rho \in \den{\Gamma, \alpha \takestype A}$ must substitute a co\-value of
$\den{A}$ for $\alpha$, \ie $\alpha\subs{\rho} \in \den{A}$.  Finally, typing
judgments (\eg $\Gamma \entails c \contra$) are interpreted as statements which
assert that the command or (co)\-term belongs to the safe set $\Bot$ or the
assigned reducibility candidate for any substitution allowed by the environment.
The key lemma is that typing derivations of a judgment ensure that the statement
they correspond to holds true.
\begin{restatable}[Adequacy]{lemma}
  {thmadequacy}
\label{thm:adequacy}
\begin{enumerate}
\item If $\Gamma \entails c$ is derivable then $\den{\Gamma \entails c}$ is
  true.
\item If $\Gamma \entails v \givestype A$ is derivable then
  $\den{\Gamma \entails v \givestype A}$ is true.
\item If $\Gamma \entails e \takestype A$ is derivable then
  $\den{\Gamma \entails e \takestype A}$ is true.
\end{enumerate}
\end{restatable}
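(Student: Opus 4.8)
The plan is to prove all three statements simultaneously by induction on the typing derivation, where for each rule I fix an arbitrary $\rho \in \den{\Gamma}$ and show that the conclusion's interpretation holds under $\rho$. The purely logical rules are routine given the infrastructure already in place. For $\<Cut>$, the inductive hypotheses put $v\subs{\rho}$ and $e\subs{\rho}$ in $\den{A}$, so soundness of the reducibility candidate $\den{A}$ (\cref{def:reducibility-candidate}) gives $\cut{v\subs{\rho}}{e\subs{\rho}} \in \Bot$. The rules $\<VarR>$ and $\<VarL>$ hold immediately by the definition of $\den{\Gamma}$. For $\<ActR>$ and $\<ActL>$ I would appeal directly to the Activation lemma (\cref{thm:activation-sound}): to place $\outp{\alpha}(c\subs{\rho})$ in $\den{A}$ it suffices to check $c\subs{\rho}\subst{\alpha}{E} \in \Bot$ for every $E \in \den{A}$, and each such $E$ extends $\rho$ to a substitution in $\den{\Gamma,\alpha\takestype A}$, so the inner inductive hypothesis applies. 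The function rules are handled by reduction: for ${\to}R$ I use Strong Negative Completeness (\cref{thm:neg-completeness}) to reduce membership of the value $\fn{x}{(v\subs{\rho})}$ in $\den{A \to B}$ to checking $\cut{\fn{x}{(v\subs{\rho})}}{\app{V}{E}} \in \Bot$ for the generating call stacks, which follows by one $\beta_\to$ step, Expansion (\cref{thm:safety-expansion}), and the inductive hypothesis for $v$; for ${\to}L$, the call stack $\app{V}{E}$ has its two components placed in $\den{A}$ and $\den{B}$ by the hypotheses, so after substitution it is a generating covalue of $\den{A \to B}$ and belongs to the candidate by the (co)value extension property of $\NegCand$ (\cref{thm:neg-completion}).

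The data rules require exploiting that $\den{\Nat}$ behaves like a least fixed point. First I would establish the two closure facts that $\Zero \in \den{\Nat}$ and that $V \in \den{\Nat}$ implies $\Succ V \in \den{\Nat}$; these discharge ${\Nat}R_{\Zero}$ and ${\Nat}R_{\Succ}$ directly. The first is immediate, since every candidate in the defining family of $\den{\Nat}$ contains $\Zero$, making $\Zero$ a value of the meet and hence of $\den{\Nat}$ by (co)value extension. The second is the delicate half: given $V \in \den{\Nat}$, Strong Negative Completeness says $V$ is safe with every covalue of every candidate $\sem{C}$ in the defining family; positive completeness then forces $V \in \sem{C}$, closure of $\sem{C}$ gives $\Succ V \in \sem{C}$, and soundness makes $\Succ V$ safe with all of $\sem{C}$'s covalues; ranging over all $\sem{C}$ and invoking Strong Negative Completeness again returns $\Succ V \in \den{\Nat}$. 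This back-and-forth between ``safe with the covalues'' and ``member of the candidate,'' mediated by the completeness lemmas, is precisely the reconciliation of the Knaster--Tarski definition with Kleene-style reasoning that makes induction available.

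The crux is the recursor rule ${\Nat}L$, which I would prove by induction over $\den{\Nat}$. Writing $r_E$ for the coterm $\Rec\{\Zero \to v\subs{\rho} \mid \Succ x \to y.w\subs{\rho}\} \With E$, the goal $r_{E\subs{\rho}} \in \den{\Nat}$ follows, by negative completeness of $\den{\Nat}$, once I show $\cut{V}{r_{E\subs{\rho}}} \in \Bot$ for every value $V \in \den{\Nat}$. To obtain this for all such $V$ uniformly, I form the sound pre-candidate $\sem{G}$ whose coterms are $\den{\Nat}^-$ together with all recursors $\{r_E \mid E \in \den{A}\}$, and take its negative completion $\sem{C} \defeq \NegCand(\sem{G})$. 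By Strong Negative Completeness, a value lies in $\sem{C}$ exactly when it is safe with $\den{\Nat}^-$ (hence is a genuine number, by positive completeness of $\den{\Nat}$) and with every $r_E$. I then check that $\sem{C}$ lies in the defining family of $\den{\Nat}$: $\Zero \in \sem{C}$ by $\beta_{\Zero}$, Expansion, and the hypotheses for $v$ and $E$; and $\sem{C}$ is closed under $\Succ$ by firing $\beta_{\Succ}$ and using Expansion to defer to the reduct $\cut{\outp{\alpha}\cut{V}{r_\alpha}}{\inp{y}\cut{w\subs{\rho}\subst{x}{V}}{E}}$, placing both halves in $\den{A}$ via Activation---the term half needs $V$ safe with every $r_{E'}$ (available since $V \in \sem{C}$) and the coterm half needs the inductive hypothesis for $w$, which is exactly where membership $V \in \den{\Nat}$ (also guaranteed by $V \in \sem{C}$) is required. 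Since $\den{\Nat}$ is the greatest lower bound of its defining family (\cref{thm:subtype-lattice}), $\den{\Nat} \leq \sem{C}$, so every genuine number is safe with all $r_E$, finishing the case.

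Finally, the stream rules are handled by the exact dual development: ${\Stream}L_{\Head}$ and ${\Stream}L_{\Tail}$ follow from the closure of $\den{\Stream A}$ under $\Head$ and $\Tail$ projections (the greatest post-fixed point facts, dual to the $\Nat$ closure facts), and ${\Stream}R$ is proved by coinduction over $\den{\Stream A}$, forming $\PosCand$ of the corecursor terms together with $\den{\Stream A}^+$, using $\beta_{\Head}$ and $\beta_{\Tail}$ in place of $\beta_{\Zero}$ and $\beta_{\Succ}$, and appealing to Strong Positive Completeness (\cref{thm:pos-completeness}) where the $\Nat$ case used Strong Negative Completeness. I expect the recursor and corecursor rules to be the only genuine obstacles: both hinge on making authentic (co)inductive-data membership available in precisely the branch where the inductive hypothesis for the step coterm or term needs it, and the whole argument stands or falls on using the strong completeness lemmas to translate freely between being safe with a candidate's generators and belonging to the candidate.
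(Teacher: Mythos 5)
Your proposal is correct, and it departs from the paper's proof exactly where the real work is. The structural cases ($\mathit{Cut}$, variables, activation, ${\to}R$, ${\to}L$) match the paper's argument, but for the constructor and (co)\-recursor cases you work directly with the Knaster-Tarski definitions: the $\Zero$/$\Succ$-closure of $\den{\Nat}$ is derived straight from the lattice properties, and for ${\Nat}L$ (writing $r_E$ for $\Rec\{\Zero \to v\subs{\rho} \mid \Succ x \to y.w\subs{\rho}\}\With E$) you form the auxiliary candidate $\sem{C} \defeq \NegCand(\{\}, \den{\Nat}^- \cup \{r_E \mid E \in \den{A}\})$, show it lies in the defining family of $\den{\Nat}$, and invoke leastness---the Knaster-Tarski induction principle---with the dual $\PosCand$ construction for ${\Stream}R$. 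The paper instead routes through Kleene-style approximations: the recursor is shown to inhabit every finite stage $\den{\Nat}_i$ (\cref{thm:nat-recursion-step,thm:nat-recursion}), and this transfers to $\den{\Nat}$ only via the inversion theorem $\den{\Nat} = \bigcurlyvee_{i=0}^\infty \den{\Nat}_i$ (\cref{thm:co-induction-inversion}), whose proof needs the Choice lemmas (\cref{thm:nat-choice,thm:stream-choice}) with their $\Rec_\infty$/$\CoRec_\infty$ devices and per-evaluation-strategy case analysis. Your route is shorter and bypasses those hardest lemmas of the appendix; what the paper's detour buys is the Kleene/Knaster-Tarski equivalence itself, which it advertises as a contribution of independent interest, along with concrete generator sets for the (co)\-inductive candidates. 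Two details to spell out in a full write-up: (i) verifying $\Succ$-closure of $\sem{C}$ via strong negative completeness requires $\cut{\Succ V}{E'} \in \Bot$ for \emph{every} covalue of the generating set, i.e.\ for the inherited covalues $E' \in \den{\Nat}^-$ as well as for the recursors $r_E$; the former case follows from your separately established $\Succ$-closure of $\den{\Nat}$ together with soundness of $\den{\Nat}$, but it must be stated. (ii) Your back-and-forth through soundness and positive completeness to show that $V \in \den{\Nat}$ implies $V \in \sem{C}'$ for each $\sem{C}'$ in the defining family is unnecessary: $\den{\Nat} \leq \sem{C}'$ already gives $\den{\Nat}^+ \subseteq (\sem{C}')^+$ immediately by the definition of the subtype order.
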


In order to prove adequacy (\cref{thm:adequacy}), we need to know something more
about which (co)\-terms are in the interpretation of types.  For example, how do
we know that the well-typed call stacks and $\lambda$-abstractions given by the
rules in \cref{fig:t-mach-type-system} end up in $\den{A \to B}$?
Intuitively, function types themselves are non-recursive.  In
\cref{fig:termination-model}, the union over the possible candidates $\sem{C}$
defining $\den{A \to B}$ requires that certain call stacks must be in each
$\sem{C}$, but it does \emph{not} quantify over the (co)\-values already in
$\sem{C}$ to build upon them.  Because of this, $\den{A \to B}$ is equivalent to
the negative candidate $\NegCand\{\app V E \mid V \in \den{A}, E \in \den{B}\}$,
as noted in \cref{fig:termination-model}.  It follows from
\cref{thm:neg-completeness} that $\den{A \to B}$ must contain any value which is
compatible with just these call stacks, regardless of whatever else might be in
$\den{A \to B}$.  This means we can use expansion (\cref{thm:safety-expansion})
to prove that $\den{A \to B}$ contains all $\lambda$-abstractions that, when
given one of these call stacks, step via $\beta_\to$ reduction to a safe
command.
\begin{lemma}[Function Abstraction]
\label{thm:function-abstraction}

If $v\subst{x}{V} \in \den{B}$ for all $V {\in} \den{A}$, then
$\lambda x. v \in \den{A {\to} B}$.
\end{lemma}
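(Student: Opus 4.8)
The plan is to exploit the fact that $\den{A \to B}$ is defined as a negative completion and that, by Strong Negative Completeness, membership of a value can be tested against the co-\emph{values} of that completion alone. Concretely, let $\sem{S} \defeq (\{\}, \{\app{V}{E} \mid V \in \den{A},\, E \in \den{B}\})$, so that $\den{A \to B} = \NegCand(\sem{S})$. Since $\sem{S}$ has an empty term component, it is vacuously sound, so $\NegCand(\sem{S})$ is a genuine reducibility candidate. Because $\lambda x. v$ is a value in both call-by-name and call-by-value, \cref{thm:neg-completeness} reduces the goal $\lambda x. v \in \NegCand(\sem{S})$ to showing that $\cut{\lambda x. v}{F} \in \Bot$ for every co-value $F \in \sem{S}^-$. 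Every such $F$ is a call stack $\app{V}{E}$ with $V \in \den{A}$ and $E \in \den{B}$, so it suffices to establish $\cut{\lambda x. v}{\app{V}{E}} \in \Bot$ for all $V \in \den{A}$ and $E \in \den{B}$.

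First I would fix arbitrary $V \in \den{A}$ and $E \in \den{B}$ and take a single step with the $\beta_\to$ rule:
\begin{align*}
  \cut{\lambda x. v}{\app{V}{E}} &\srd \cut{v\subst{x}{V}}{E}.
\end{align*}
By the hypothesis of the lemma, $v\subst{x}{V} \in \den{B}$ for this $V$, and $E \in \den{B}$ by assumption. Since $\den{B}$ is a reducibility candidate it is in particular sound, so $\cut{v\subst{x}{V}}{E} \in \Bot$. Expansion (\cref{thm:safety-expansion}) then lifts this back across the one $\beta_\to$ step, yielding $\cut{\lambda x. v}{\app{V}{E}} \in \Bot$. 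As $V$ and $E$ were arbitrary, this covers every co-value of $\sem{S}$, and Strong Negative Completeness concludes $\lambda x. v \in \den{A \to B}$.

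The step I expect to carry the real weight is the appeal to Strong Negative Completeness, precisely because it lets us check the single $\beta_\to$ redex against co-values only. This restriction is essential: in call-by-value, $\den{A \to B}$ also contains general co-terms such as $\tmu$-abstractions $\inp y c$, against which $\lambda x. v$ does not immediately fire $\beta_\to$, so a naive attempt to verify safety against \emph{all} co-terms of $\den{A \to B}$ would stall. Limiting attention to co-values---exactly the content of \cref{thm:neg-completion,thm:neg-invariance} distilled into \cref{thm:neg-completeness}---is what makes the one-step reasoning sufficient. The remaining ingredients, namely the soundness of $\den{B}$ and the closure of $\Bot$ under expansion, are then routine.
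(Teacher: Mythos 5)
Your proposal is correct and takes essentially the same route as the paper's proof: both identify $\den{A \to B}$ with $\NegCand\{\app{V}{E} \mid V \in \den{A}, E \in \den{B}\}$, use strong negative completeness (\cref{thm:neg-completeness}) to reduce the goal to checking $\cut{\lambda x. v}{\app{V}{E}} \in \Bot$ for call stacks only, and conclude by the single $\beta_\to$ step, soundness of $\den{B}$, and expansion (\cref{thm:safety-expansion}). Your additional remarks---that the generating pre-candidate is vacuously sound and that $\lambda x. v$ is a value in both evaluation strategies---are correct details the paper leaves implicit.
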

\begin{proof}
  Observe that, for any $V \in \den{A}$ and $E \in \den{B}$:
  \begin{align*}
    \cut{\lambda x. v}{\app V E}
    &\srd
    \cut{v\subst{x}{V}}{E}
    \in
    \Bot
  \end{align*}
  where $\cut{v\subst{x}{V}}{E} \in \Bot$ is guaranteed due to soundness for the
  reducibility candidate $\den{B}$.  By expansion (\cref{thm:safety-expansion}),
  we know that $\cut{\lambda x. v}{\app V E} \in \Bot$ as well.  So from
  \cref{thm:neg-completeness}:
  \begin{equation*}
    \lambda x. v
    \in
    \NegCand\{\app V E \mid V \in \den{A}, E \in \den{B}\}
    =
    \den{A \to B}
    \qedhere
  \end{equation*}
\end{proof}

But what about $\den{\Nat}$ and $\den{\Stream A}$?  The interpretations of
(co)\-inductive types are not exactly instances of $\PosCand$ or $\NegCand$ as
written, because unlike $\den{A \to B}$, they quantify over elements in the
possible $\sem{C}$s they are made from.  This lets us say these $\sem{C}$s are
closed under $\Succ$ or $\Tail$, but it means that we cannot identify \emph{a
  priori} a set of (co)\-values that generate the candidate independent of each
$\sem{C}$.

A solution to this conundrum is to instead describe these (co)\-inductive types
incrementally, building them step-by-step instead of all at once \ala Kleene's
fixed point construction \cite{KleeneFixedPoint}.  For example, the set of the
natural numbers can be defined incrementally from a series of finite
approximations by beginning with the empty set, and then at each step adding the
number $0$ and the successor of the previous set:
\begin{align*}
  \mathbb{N}_0 &\defeq \{\}
  &
  \mathbb{N}_{i+1} &\defeq \{0\} \cup \{n+1 \mid n \in \mathbb{N}_i\}
  &
  \mathbb{N} &\defeq \bigcup_{i=0}^\infty \{\mathbb{N}_i\}
\end{align*}
So that each $\mathbb{N}_i$ contains only the numbers less than $i$.  The final
set of \emph{all} the natural numbers, $\mathbb{N}$, is then the union of each
approximation $\mathbb{N}_i$ along the way.  Likewise, we can do a similar
incremental definition of finite approximations of $\den{\Nat}$ like so:
\begin{align*}
  \den{\Nat}_0
  &\defeq
  \PosCand\{\}
  =
  \bigcurlywedge\redcands
  =
  \bigcurlyvee\{\} 
  \\
  \den{\Nat}_{i+1}
  &\defeq
  \PosCand(\{\Zero\} \vee \{\Succ V \mid V \in \den{\Nat}_i\})
\end{align*}
The starting approximation $\den{\Nat}_0$ is the \emph{smallest} reducibility
candidate, which is given by $\PosCand\{\}$.  From there, the next
approximations are given by the positive candidate containing at least $\Zero$
and the $\Succ$essor of every value of their predecessor.  This construction
mimics the approximations $\mathbb{N}_i$, where we start with the smallest
possible base and incrementally build larger and larger reducibility candidates
that more accurately approximate the limit.

The typical incremental co\-inductive definition is usually presented in the
reverse direction: start out with the ``biggest'' set (whatever that is), and
trim it down step-by-step.  Instead, the co\-inductive construction of
reducibility candidates is much more concrete, since they form a complete
lattice (with respect to subtyping).  There is a specific biggest candidate
$\bigcurlyvee \redcands$ (equal to $\bigcurlywedge\{\}$) containing every term
possible and the fewest co\-terms allowed.  From there, we can add explicitly
more co\-terms to each successive approximation, which shrinks the candidate by
ruling out some terms that do not run safely with them.  Thus, the incremental
approximations of $\den{\Stream A}$ are defined negatively as:
\begin{align*}
  \den{\Stream A}_0
  &\defeq
  \NegCand\{\}
  =
  \bigcurlyvee\redcands
  =
  \bigcurlywedge\{\}
  \\
  \den{\Stream A}_{i+1}
  &\defeq
  \NegCand
  (
  \{\Head E \mid E \in \den{A}\}
  \wedge
  \{\Tail E \mid E \in \den{\Stream A}_i\}
  )
\end{align*}
We start with the biggest possible candidate given by $\NegCand\{\}$.  From
there, the next approximations are given by the negative candidate containing at
least $\Head E$ (for any $E$ expecting an element of type $\den{A}$) and the
$\Tail$ of every co\-value in the previous approximation.  The net effect is
that $\den{\Stream A}_i$ definitely contains all continuations built from (at
most) $i$ $\Head$ and $\Tail$ destructors.  As with $\den{\Nat}_i$, the goal is
to show that $\den{\Stream A}$ is the limit of $\den{\Stream A}_i$, \ie that it
is the intersection of all finite approximations.
\begin{restatable}[(Co)Induction Inversion]{lemma}
  {thmsubtypecoinduction}
\label{thm:co-induction-inversion}
\label{thm:nat-inversion}
\label{thm:stream-inversion}

\begin{align*}
  \den{\Nat} &= \bigcurlyvee_{i = 0}^\infty \den{\Nat}_i
  &
  \den{\Stream A}
  &=
  \bigcurlywedge_{i=0}^\infty \den{\Stream A}_i
\end{align*}
\end{restatable}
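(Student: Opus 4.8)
The plan is to recognize each interpretation as a (co)inductive fixed point and then equate the Knaster--Tarski fixed point with the Kleene limit by antisymmetry in the subtype lattice (\cref{thm:subtype-lattice}). For the $\Nat$ equation, I would introduce the monotone operator $\Phi(\sem{C}) \defeq \PosCand(\{\Zero\} \vee \{\Succ V \mid V \in \sem{C}\})$, so that $\den{\Nat}_0 = \bigcurlyvee\{\}$ is the least candidate and $\den{\Nat}_{i+1} = \Phi(\den{\Nat}_i)$. Monotonicity of $\Phi$ follows from positive invariance (\cref{thm:pos-invariance}), since $\Phi$ depends only on the values of its argument, which makes $(\den{\Nat}_i)_i$ an ascending chain whose join is $\bigcurlyvee_i \den{\Nat}_i$. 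I would then observe that a reducibility candidate $\sem{C}$ lies in the family defining $\den{\Nat}$ exactly when it is a pre-fixed point of $\Phi$, i.e. $\Phi(\sem{C}) \leq \sem{C}$: the co-value extension property (\cref{thm:pos-completion}, item 2) turns a pre-fixed point into a candidate containing $\Zero$ and closed under $\Succ$, while the least-completion property (\cref{thm:pos-completion}, item 3) turns the closure conditions back into $\Phi(\sem{C}) \leq \sem{C}$. Hence $\den{\Nat} = \bigcurlywedge\{\sem{C} \mid \Phi(\sem{C}) \leq \sem{C}\}$ is the least fixed point of $\Phi$.

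First I would dispatch the easy inclusion $\bigcurlyvee_i \den{\Nat}_i \leq \den{\Nat}$. By induction on $i$, each $\den{\Nat}_i \leq \den{\Nat}$: the base case holds because $\den{\Nat}_0$ is the least candidate, and the step follows from monotonicity together with $\Phi(\den{\Nat}) \leq \den{\Nat}$, as $\den{\Nat}$ is a fixed point. The least-upper-bound property of $\bigcurlyvee$ in \cref{thm:subtype-lattice} then yields the inclusion.

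The reverse inclusion $\den{\Nat} \leq \bigcurlyvee_i \den{\Nat}_i$ carries the real content. Since $\den{\Nat}$ is the least pre-fixed point, it suffices to show that the limit is itself a pre-fixed point, $\Phi(\bigcurlyvee_i \den{\Nat}_i) \leq \bigcurlyvee_i \den{\Nat}_i$, equivalently that the join contains $\Zero$ (immediate, since $\Zero \in \den{\Nat}_1$) and is closed under $\Succ$ of its values. Because the chain is ascending and $\den{\Nat}_{i+1} = \Phi(\den{\Nat}_i)$, this reduces to the continuity statement $\Phi(\bigcurlyvee_i \den{\Nat}_i) \leq \bigcurlyvee_i \Phi(\den{\Nat}_i)$. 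Unfolding both sides with Strong Positive Completeness (\cref{thm:pos-completeness}) and Positive Invariance (\cref{thm:pos-invariance}) reduces the co-value comparison, and hence the value comparison, to a single claim: the values that $\Phi$ consults at the limit are already supplied by some finite approximation $\den{\Nat}_i$.

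I expect this continuity/compactness step to be the main obstacle, and it is precisely what makes the reconciliation non-trivial for classical control. The difficulty is that $\PosCand$ can manufacture new ``limit'' values, such as certain $\mu$-abstractions, that sit in the completed join $\PosCand(\bigvee_i \den{\Nat}_i)$ without literally appearing in the set-theoretic union of any finite stage, so continuity cannot be read off from the underlying value sets directly. The argument must instead use Strong Positive Completeness to rephrase membership of $\Succ V$ as safety of the commands $\cut{\Succ V}{E}$ against every number-consuming co-value $E$, and then exploit that the relevant values are finite $\Succ$-towers over $\Zero$: using \cref{thm:safety-expansion} and the $\beta_{\Zero},\beta_{\Succ}$ reductions, safety of such a value against all such $E$ is witnessed after boundedly many unfoldings, so the value is accounted for at a finite stage. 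The $\Stream A$ equation is then handled by the exactly dual development: $\den{\Stream A}$ is the greatest fixed point of $\Psi(\sem{C}) \defeq \NegCand(\{\Head E \mid E \in \den{A}\} \wedge \{\Tail E \mid E \in \sem{C}\})$, the descending chain $(\den{\Stream A}_i)_i$ has meet $\bigcurlywedge_i \den{\Stream A}_i$, the easy direction $\den{\Stream A} \leq \bigcurlywedge_i \den{\Stream A}_i$ is monotone coinduction, and the hard direction uses the dual compactness fact, via \cref{thm:neg-completeness,thm:neg-invariance}, that every \emph{co}-value consulted by $\Psi$ at the meet occurs at a finite stage, where the finite objects are now towers of $\Head$ and $\Tail$ destructors.
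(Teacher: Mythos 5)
Your skeleton matches the paper's: the easy inclusion $\bigcurlyvee_{i}\den{\Nat}_i \leq \den{\Nat}$ by monotone induction on $i$, and the hard inclusion by showing the Kleene limit is itself closed under $\Zero$/$\Succ$ (equivalently, a pre-fixed point of your $\Phi$); you also correctly locate the obstacle, namely that $\PosCand$ could add ``limit'' values to $\PosCand(\bigvee_i\den{\Nat}_i)$ that live in no finite stage. The gap is in how you resolve that obstacle. Saying the argument should ``exploit that the relevant values are finite $\Succ$-towers over $\Zero$'' begs the question --- that is exactly the statement needing proof --- and taken literally it is false in call-by-name, where $\mu$-abstractions are values and genuinely inhabit the limit (e.g.\ $\outp\alpha\cut{\Zero}{\alpha}$). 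Moreover, membership of a value in the completed join means safety against \emph{every} co-term orthogonal to the union's values; to extract anything from that universal quantifier you must exhibit a concrete discriminating co-value and know that it lies in every finite approximation. This is precisely the content of the paper's choice lemma (\cref{thm:nat-choice}): it constructs the tester $\Rec_\infty = \Rec\{\Zero \to \Zero \mid \Succ\blank \to x.x\}\With\alpha$, establishes $\Rec_\infty \in \den{\Nat}_i$ for all $i$ (which is why \cref{thm:nat-recursion} must be proved \emph{before} the inversion lemma), and then splits on the evaluation strategy: in call-by-value only the values $\Succ^n\Zero$ survive the tester, while in call-by-name a strong induction on the length of the reduction $\cut{V}{\Rec_\infty}\srds c \in \mathit{Final}$ shows every safe value lands in some $\den{\Nat}_n$. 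This gives $\bigcurlyvee_i\den{\Nat}_i = \bigvee_i\den{\Nat}_i$, from which $\Succ$-closure is immediate. Your sketch has no analogue of the tester, no argument that it inhabits every stage, and no treatment of the strategy split --- and these are the parts the paper itself identifies as the non-trivial reconciliation of the two fixed-point constructions in the presence of control effects.

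The same gap recurs dually for streams: there the completed meet could in principle acquire new \emph{co}-values, the tester is the co-recursor $\CoRec\{\Head\alpha \to \alpha \mid \Tail\blank \to \gamma.\gamma\}\With V$ of \cref{thm:stream-choice} (relying on \cref{thm:stream-corecursion}), and the delicate case is now call-by-value, where $\tmu$-abstractions are co-values. A minor additional point: monotonicity of your $\Phi$ does not follow from positive \emph{invariance} (\cref{thm:pos-invariance}); it is \cref{thm:pos-monotonicity} combined with monotonicity of the lifted constructors, as in \cref{thm:nat-monotonicity}. Your easy direction and the overall Knaster--Tarski/Kleene framing are otherwise sound.
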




This fact makes it possible to use expansion (\cref{thm:safety-expansion}) and
strong completeness (\cref{thm:pos-completeness,thm:neg-completeness}) to prove
that the recursor belongs to each of the approximations $\den{\Nat}_i$; and thus
also to $\bigcurlyvee_{i=0}^\infty\den{\Nat}_i=\den{\Nat}$ by induction on $i$.
Dually, the co\-recursor belongs to each approximation $\den{\Stream A}_i$, so
it is included in
$\bigcurlywedge_{i=0}^\infty\den{\Stream A}_i=\den{\Stream A}$.  This proof of
safety for (co)\-recursion is the final step in proving overall adequacy
(\cref{thm:adequacy}).  In turn, the ultimate type safety and termination
property we are after is a special case of adequacy.

\begin{restatable}[Type safety \& Termination]{theorem}
  {thmmachsafetytermination}
\label{thm:mach-corec-type-safety}
\label{thm:mach-corec-termination}
If $\alpha\takestype\Nat \entails c \contra$ in the (co)\-recursive abstract
machine then $c \srds \cut{\Zero}{\alpha}$ or $c \srds \cut{\Succ V}{\alpha}$
for some $V$.
\end{restatable}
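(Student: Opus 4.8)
The plan is to obtain this theorem as a direct consequence of adequacy (\cref{thm:adequacy}), with the only genuine work being to exhibit a suitable substitution under which the interpretation of the typing judgment can be instantiated. First I would apply the command case of \cref{thm:adequacy} to the derivable judgment $\alpha\takestype\Nat\entails c\contra$, obtaining that $\den{\alpha\takestype\Nat\entails c}$ holds, that is, $c\subs{\rho}\in\Bot$ for every $\rho\in\den{\alpha\takestype\Nat}$. Since $\alpha$ is the only free (co)\-variable of $c$, a substitution lies in $\den{\alpha\takestype\Nat}$ exactly when it maps $\alpha$ to a co\-value of $\den{\Nat}$. The crux is therefore to show that the bare co\-variable $\alpha$ is itself such a co\-value, so that the identity substitution $\rho=\asub{\alpha}{\alpha}$ is admissible and $c\subs{\rho}=c\in\Bot$.

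The key lemma, then, is that $\alpha\in\den{\Nat}$ for every co\-variable $\alpha$, and this is where the heart of the argument lies. It is specific to the observable type $\Nat$: it holds precisely because a numeral value confronted with a bare co\-variable is already a final state in the sense of \cref{def:termination-pole} (by contrast $\alpha\notin\den{A\to B}$, since $\cut{\fn x v}{\alpha}$ is stuck and non-final). To prove it I would use the Kleene presentation $\den{\Nat}=\bigcurlyvee_{i=0}^\infty\den{\Nat}_i$ from \cref{thm:nat-inversion} together with strong positive completeness (\cref{thm:pos-completeness}). For each approximation $\den{\Nat}_i=\PosCand(\sem{A}_i)$, strong positive completeness reduces $\alpha\in\den{\Nat}_i$ to checking only the generating values: for $\den{\Nat}_0=\PosCand\{\}$ there are none, so membership is vacuous, while for $\den{\Nat}_{i+1}$ the generators are $\Zero$ and $\Succ W$ (with $W$ a value), and both $\cut{\Zero}{\alpha}$ and $\cut{\Succ W}{\alpha}$ are final, hence in $\Bot$. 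Thus $\alpha\in\den{\Nat}_i$ for every $i$, and by soundness of each (reducibility candidate) $\den{\Nat}_i$ we get $\cut{V}{\alpha}\in\Bot$ for all values $V\in\den{\Nat}_i^+$. Applying strong positive completeness once more to $\den{\Nat}=\PosCand(\bigvee_i\den{\Nat}_i)$, whose generating values are $\bigcup_i\den{\Nat}_i^+$, yields $\alpha\in\den{\Nat}$.

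With $\alpha\in\den{\Nat}$ in hand the identity substitution is admissible, so $c\in\Bot$; unfolding \cref{def:termination-pole} gives $c\srds c'$ for some $c'\in\mathit{Final}$, meaning $c'=\cut{\Zero}{\beta}$ or $c'=\cut{\Succ V}{\beta}$ for a co\-variable $\beta$. It then remains to argue $\beta=\alpha$: since $c$ has $\alpha$ as its only free co\-variable and machine reduction neither creates nor frees new co\-variables, the free co\-variable appearing in the final command must be $\alpha$ itself, giving $c\srds\cut{\Zero}{\alpha}$ or $c\srds\cut{\Succ V}{\alpha}$ as required. I expect the main obstacle to be the key lemma $\alpha\in\den{\Nat}$, and in particular threading strong positive completeness through its two layers (the raw generators versus the completed approximations) so that only final-state reductions ever need to be verified; the closing bookkeeping that the residual co\-variable is $\alpha$ is routine, relying only on the free-variable discipline of the abstract machine.
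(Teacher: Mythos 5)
Your proposal is correct and follows essentially the same route as the paper's own proof: adequacy (\cref{thm:adequacy}) reduces the theorem to the key fact $\alpha \in \den{\Nat}$, which is obtained from strong positive completeness (\cref{thm:pos-completeness}) because $\cut{\Zero}{\alpha}$ and $\cut{\Succ V}{\alpha}$ are final commands, whence $c \in \Bot$ and the required reduction follows from \cref{def:termination-pole}. Your write-up merely makes explicit two steps the paper leaves implicit---threading completeness through the Kleene approximations $\den{\Nat}_i$ via \cref{thm:nat-inversion}, and the free-co\-variable bookkeeping that identifies the co\-variable in the final state with $\alpha$ itself.
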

\begin{proof}
  From \cref{thm:adequacy}, we have
  $\den{\alpha\takestype\Nat \entails c \contra}$.  That is, for all
  $E \in \den{\Nat}$, $c\subst{\alpha}{E} \in \Bot$.  Note that both
  $\cut{\Zero}{\alpha} \in \Bot$ and $\cut{\Succ V}{\alpha} \in \Bot$ (for any
  $V$ whatsoever) by definition of $\Bot$ (\cref{def:termination-pole}).  So by
  \cref{thm:pos-completeness}, $\alpha$ itself is a member of $\den{Nat}$, \ie
  $\alpha \in \den{\Nat}$.  Thus, $c\subst{\alpha}{\alpha} = c \in \Bot$, and
  the necessary reduction follows from the definition of $\Bot$.
\end{proof}


\section{Related Work}
\label{sec:related-work}

The co\-recursor presented here is a computational interpretation of the
categorical model of co\-recursion in a co\-algebra
\cite{GeuversIterationRecursion}.  A (weak) co\-algebra for a functor $F$ is
defined by a morphism $\alpha : A \to F(A)$ for some $A$.  A
\emph{co\-recursive co\-algebra} extends this idea, by commuting with other
morphisms of the form $X \to F(A + X)$.  Intuitively, the option $A + X$ in the
result is interpreted as a pure sum type by \cite{GeuversIterationRecursion}.
Here, we use a different interpretation of $A + X$ as multiple outputs,
represented concretely by multiple continuations.  The continuation-based
interpretation gives improved generality, and can express some co\-recursive
algorithms that the other interpretations cannot
\cite{ClassicalCorecursionProgramming}.

The co\-iterator, which we define as the restriction of the co\-recursor to
never short-cut co\-recursion, corresponds exactly to the Harper's
\texttt{strgen} \cite{HarperPFPL}.  In this sense, the co\-recursor is a
conservative extension of the purely functional co\-iterator.  Co\-iteration
with control operators is considered in \cite{CPSCoInductiveTypes}, which gives
a call-by-name CPS translation for a stream co\-iterator and constructor,
corresponding to $\CoIter$ and $\CoCase$, but not for $\CoRec$.  Here, the use
of an abstract machine serves a similar role as CPS---making explicit
information- and control-flow---but allows us to use the same translation for
both call-by-value and -name.  An alternative approach to (co)\-recursive
combinators is \emph{sized types} \cite{SizedTypes,AbelPhD}, which give the
programmer control over recursion while still ensuring termination, and have
been used for both purely functional \cite{WellfoundedCopatterns} and classical
\cite{StructuralRecursion} coinductive types.

Our investigation on evaluation strategy showed the (dual) impact of
call-by-value versus call-by-name evaluation
\cite{DualityOfComputation,CBVDualToCBN} on the efficiency of (co)\-recursion.
In contrast to having a monolithic evaluation strategy, another approach is to
use a hybrid evaluation strategy as done by call-by-push-value \cite{LevyPhD} or
polarized languages \cite{ZeilbergerPhD,MunchMaccagnoniPhD}.  With a hybrid
approach, we could define one language which has the efficient version of both
the recursor and co\-recursor.  Polarity also allows for incorporating other
evaluation strategies, such as call-by-need which shares the work of
computations \cite{BeyondPolarity,ExtendedCBPV}.  We leave the investigation of
a polarized version of co\-recursion to future work.


\section{Conclusion}
\label{sec:conclusion}

This paper provides a foundational calculus for (co)\-recursion in programs
phrased in terms of an abstract machine language.  The impact of evaluation
strategy is also illustrated, where call-by-value and -name have (opposite)
advantages for the efficiency of co\-recursion and recursion, respectively.
These (co)\-recursion schemes are captured by (co)\-data types whose duality is
made apparent by the language of the abstract machine.  In particular, inductive
data types, like numbers, revolve around constructing concrete, finite values,
so that observations on numbers may be abstract and unbounded.  Dually,
co\-inductive co\-data types, like streams, revolve around concrete, finite
observations, so that values may be abstract and unbounded objects.  The
computational interpretation of this duality lets
us bring out hidden connections
underlying the implementation of recursion and co\-recursion.  For example, the
explicit ``seed'' or accumulator usually used to generate infinite streams is,
in fact, dual to the implicitly growing evaluation context of recursive calls.
To show that the combination of primitive recursion and co\-recursion is
well-behaved---that is, every program safely terminates with an answer---we
interpreted the type system as a form of classical (bi)orthogonality model
capable of handling first-class control effects, and extended with
(co)\-inductive reducibility candidates.  Our model reveals how the incremental
Kleene-style and wholesale Knaster-Tarski-style constructions of greatest and
least fixed points have different advantages for reasoning about program
behavior.  By showing the two fixed point constructions are the same---a
non-trivial task for types of effectful computation---we get a complete picture
of the mechanics of classical (co)\-recursion.


\section*{Acknowledgments}
\noindent
This work is supported by the National Science Foundation under Grant
No.~1719158.

\bibliographystyle{JFPlike}
\bibliography{rec}

\clearpage
\appendix

\allowdisplaybreaks[3]

\section{Proof of Type Safety and Termination}
\label{sec:model}

Here we give the full details to the proof of the main result,
\cref{thm:mach-corec-type-safety,thm:mach-corec-termination}, ensuring both
safety and termination for all well-typed, executable commands.  We use a proof
technique suitable for abstract machines based on
\cite{ClassicalStrongNormalization,DualityOfIntersectonUnionTypes}, which we
extend with the inductive type $\Nat$ and the co\-inductive type $\Stream A$.
To begin in \cref{sec:orthogonality,sec:candidate-completion,sec:lattices}, we
give a self-contained introduction and summary of the fundamental concepts and
results from \cite{ClassicalStrongNormalization,DualityOfIntersectonUnionTypes}.
\Cref{sec:candidate-completion} in particular gives a new account of
\emph{positive} and \emph{negative completion} which simplifies the sections
that follow.  From there, \cref{sec:co-inductive-candidates} establishes the
definition and properties of the (co)\-inductive types $\Nat$ and $\Stream A$ in
this model, which lets us prove the fundamental \emph{adequacy} lemma in
\cref{sec:adequacy}.

\subsection{Orthogonal Fixed-Point Candidates}
\label{sec:orthogonality}

Our proof technique revolves around \emph{pre-candidates}
(\cref{def:pre-candidate}) and their more informative siblings
\emph{reducibility candidates}.  The first, and most important, operation on
pre-candidates is \emph{orthogonality}. Intuitively, on the one side
orthogonality identifies \emph{all} the terms which are safe with
\emph{everything} in a given set of co\-terms, and on the other side it identifies
the co\-terms which are safe with a set of terms.  These two dual operations
converting back and forth between terms and co\-terms naturally extends to a
single operation on pre-candidates.
\begin{definition}[Orthogonality]
\label{def:orthogonality}

The \emph{orthogonal} of any set of terms, $\sem{A}^+$, written
$\sem{A}^{+\Bot}$, is the set of co\-terms that form safe commands (\ie in
$\Bot$) with all of $\sem{A}^+$:
\begin{align*}
  \sem{A}^{+\Bot}
  &\defeq
  \{ e \mid \forall v \in \sem{A}^+.~ \cut{v}{e} \in \Bot \}
\end{align*}
Dually, the \emph{orthogonal} of any set of co\-terms $\sem{A}^-$, also written
$\sem{A}^{-\Bot}$ and disambiguated by context, is the set of terms that form
safe commands with all of $\sem{A}^-$:
\begin{align*}
  \sem{A}^{-\Bot}
  &\defeq
  \{ v \mid \forall e \in \sem{A}^-.~ \cut{v}{e} \in \Bot \}
\end{align*}
Finally, the \emph{orthogonal} of any pre-candidate
$\sem{A} = (\sem{A}^+, \sem{A}^-)$ is:
\begin{align*}
  (\sem{A}^+, \sem{A}^-)^\Bot
  &\defeq
  (\sem{A}^{-\Bot}, \sem{A}^{+\Bot})
\end{align*}

As a shorthand for mapping over sets, given any set of pre-candidates
$\set{A} \subseteq \precands$, we write $\set{A}^{\Bot*}$ for the set of
orthogonals to each pre-candidate in $\set{A}$:
\begin{align*}
  \set{A}^{\Bot*}
  &\defeq
  \{\sem{A}^\Bot \mid \sem{A}^\Bot \in \set{A}\}
\end{align*}
We use the same notation for the orthogonals of any set of term-sets
($\set{A} \subseteq \wp(\mathit{Term})$) or co\-term-sets
($\set{A} \subseteq \wp(\mathit{CoTerm})$), individually.
\end{definition}

Orthogonality is interesting primarily because of the logical structure it
creates among pre-candidates.  In particular, orthogonality behaves very much
like \emph{intuitionistic negation} ($\neg$).  Intuitionistic logic rejects
\emph{double negation elimination} ($\neg \neg A \iff A$) in favor of the weaker
principle of \emph{double negation introduction} ($A \implies \neg \neg A$).
This fundamental property of intuitionistic negation is mimicked by
pre-candidate orthogonality.
\begin{property}[Orthogonal Negation]
\label{thm:orthogonal-negation}
\label{thm:orthogonal-contrapositive}
\label{thm:orthogonal-doi}
\label{thm:orthogonal-toe}

The following holds for any pre-candidates $\sem{A}$ and $\sem{B}$:
\begin{enumerate}
\item \emph{Contrapositive (\ie antitonicity):} $\sem{A} \refines \sem{B}$
  implies $\sem{B}^{\Bot} \refines \sem{A}^{\Bot}$.
\item \emph{Double orthogonal introduction (DOI):}
  $\sem{A} \refines \sem{A}^{\Bot\Bot}$.
\item \emph{Triple orthogonal elimination (TOE):}
  $\sem{A}^{\Bot\Bot\Bot} = \sem{A}^{\Bot}$.
\end{enumerate}
\end{property}
\begin{proof}
  \begin{enumerate}
  \item \emph{Contrapositive}: Let $v \in \sem{B}^{\Bot}$ and $e \in \sem{A}$.
    we know $e \in \sem{B}$ (because $\sem{A} \refines \sem{B}$ implies
    $\sem{A}^- \subseteq \sem{B}^-$) and thus $\cut{v}{e} \in \Bot$ (because
    $v \in \sem{B}^{-\Bot}$).  Therefore, $e \in \sem{A}^{\Bot}$ by definition
    of orthogonality (\cref{def:orthogonality}).  Dually, given any
    $e \in \sem{B}^{\Bot}$ and $v \in \sem{A}$, we know $v \in \sem{B}$ and thus
    $\cut{v}{e} \in \Bot$, so $e \in \sem{B}^{\Bot}$ as well.
  \item \emph{DOI}: Suppose $v \in \sem{A}$.  For any $e \in \sem{A}^{\Bot}$, we
    know $\cut{v}{e} \in \Bot$ by definition of orthogonality
    (\cref{def:orthogonality}).  Therefore, $v \in \sem{A}^{\Bot\Bot}$ also by
    definition of orthogonality.  Dually, every $e \in \sem{A}$ yields
    $\cut{v}{e} \in \Bot$ for all $v \in \sem{A}^\Bot$, so
    $e \in \sem{A}^{\Bot\Bot}$ as well.
  \item \emph{TOE}: Note that $\sem{A} \refines \sem{A}^{\Bot\Bot}$ is an
    instance of double orthogonal introduction above for $\sem{A}$, so by
    contrapositive, $\sem{A}^{\Bot\Bot\Bot} \refines \sem{A}^{\Bot}$.  Another
    instance of double orthogonal introduction for $\sem{A}^{\Bot}$ is
    $\sem{A}^{\Bot} \refines \sem{A}^{\Bot\Bot\Bot}$.  Thus the two
    pre-candidates are equal.
    \qedhere
  \end{enumerate}
\end{proof}

The second operation on pre-candidates is the \emph{(co)\-value restriction}.
This just limits a given pre-candidate to only the values and co\-values
contained within it, and gives us a way to handle the chosen evaluation strategy
(here, call-by-name or call-by-value) in the model.  In particular, the
(co)\-value restriction is useful for capturing the \emph{completeness}
requirement of reducibility candidates (\cref{def:reducibility-candidate}),
which only tests (co)\-terms with respect to the (co)\-values already in the
candidate.
\begin{definition}[(Co)\-value Restriction]
\label{def:co-value-restriction}

The \emph{(co)\-value restriction} of a set of terms $\sem{A}^+$, set of
co\-terms $\sem{A}^-$, and pre-candidates $\sem{A} = (\sem{A}^+,\sem{A}^-)$ is:
\begin{align*}
  \sem{A}^{+v}
  &\defeq
  \{V \mid V \in \sem{A}^+\}
  &
  \sem{A}^{-v}
  &\defeq
  \{E \mid E \in \sem{A}^-\}
  &
  (\sem{A}^+,\sem{A}^-)^v
  &\defeq
  (\sem{A}^{+v},\sem{A}^{-v})
\end{align*}
As another shorthand, given any set of pre-candidates $\set{A}$, we will
occasionally write $\set{A}^{v*}$ to be the set of (co)\-value restrictions
of each pre-candidate in $\set{A}$:
\begin{align*}
  \set{A}^{v*}
  &\defeq
  \{\sem{A}^v \mid \sem{A} \in \set{A}\}
\end{align*}
We use the same notation for the (co)\-value restriction of any set of term-sets
or co\-term-sets.
\end{definition}

\begin{property}[Monotonicity]
\label{thm:subtype-monotonicity}

Given any pre-candidates $\sem{A}$ and $\sem{B}$,
\begin{enumerate}
\item $\sem{A} \leq \sem{B}$ implies $\sem{A}^\Bot \leq \sem{B}^\Bot$, and
\item $\sem{A} \leq \sem{B}$ implies $\sem{A}^v \leq \sem{B}^v$.
\item $\sem{A} \refines \sem{B}$ implies $\sem{A}^v \refines \sem{B}^v$.
\end{enumerate}
\end{property}
\begin{proof}
  Subtype monotonicity of orthogonality follows from contrapositive
  (\cref{thm:orthogonal-contrapositive}) and the opposed definitions of
  refinement versus subtyping.  Specifically, $\sem{A} \leq \sem{B}$ means the
  same thing as $(\sem{A}^+,\sem{B}^-) \refines (\sem{B}^+,\sem{A}^-)$, which
  contrapositive (\cref{thm:orthogonal-contrapositive}) turns into
  \begin{math}
    (\sem{B}^+,\sem{A}^-)^\Bot
    =
    (\sem{A}^{-\Bot},\sem{B}^{+\Bot})
    \refines
    (\sem{B}^{-\Bot},\sem{A}^{+\Bot})
    =
    (\sem{A}^+,\sem{B}^-)^\Bot
  \end{math}
  which is equivalent to $\sem{A}^\Bot \leq \sem{B}^\Bot$.  Monotonicity of the
  (co)\-value restriction with respect to both subtyping and refinement follows
  directly from its definition.
\end{proof}

Putting the two operations together, \emph{(co)\-value restricted orthogonality}
($\sem{A}^{v\Bot}$) becomes our primary way of handling reducibility candidates.
This combined operation shares essentially the same negation-inspired properties
of plain orthogonality (\cref{thm:orthogonal-negation}), but is restricted to
just (co)\-values rather than general (co)\-terms.
\begin{property}[Restricted Orthogonal Negation]
\label{thm:restricted-orthogonal-negation}
\label{thm:restriction-idempotent}
\label{thm:restricted-orthogonal}
\label{thm:restricted-doi}
\label{thm:restricted-toe}

Given any pre-candidate $\sem{A}$:
\begin{enumerate}
\item \emph{Restriction idempotency:}
  $\sem{A}^{vv} = \sem{A}^{v} \refines \sem{A}$
\item \emph{Restricted orthogonal:}
  $\sem{A}^{\Bot} \refines \sem{A}^{v\Bot}$
\item \emph{Restricted double orthogonal introduction (DOI):}
  $\sem{A}^{v} \refines \sem{A}^{v\Bot v\Bot v}$.
\item \emph{Restricted triple orthogonal elimination (TOE):}
  $\sem{A}^{v\Bot v\Bot v\Bot v} = \sem{A}^{v\Bot v}$.
\end{enumerate}
\end{property}
\begin{proof}
  \begin{enumerate}
  \item Because $V \in \sem{A}$ if and only if $V \in \sem{A}^{v}$ (and
    symmetrically for co\-values).
  \item Follows from contrapositive (\cref{thm:orthogonal-contrapositive}) of
    the above fact that $\sem{A}^{v} \refines \sem{A}$.
  \item Double orthogonal introduction (\cref{thm:orthogonal-doi}) on
    $\sem{A}^{v}$ gives
    $\sem{A}^{v} \refines \sem{A}^{v \Bot \Bot}$.  The restricted
    orthogonal (above) of $\sem{A}^{v\Bot}$ implies
    $\sem{A}^{v\Bot \Bot} \refines \sem{A}^{v\Bot v\Bot}$.  Thus from
    monotonicity (\cref{thm:subtype-monotonicity}) and restriction idempotency,
    we have:
    \begin{math}
      \sem{A}^{v}
      \refines
      \sem{A}^{v \Bot \Bot v}
      \refines
      \sem{A}^{v \Bot v \Bot v}
      .
    \end{math}
  \item Follows similarly to triple orthogonal elimination in
    (\cref{thm:orthogonal-toe}).
    $\sem{A}^{v} \refines \sem{A}^{v\Bot v\Bot v}$ is an instance of restricted
    double orthogonal introduction above, and by contrapositive
    (\cref{thm:orthogonal-contrapositive}) and monotonicity
    (\cref{thm:subtype-monotonicity}),
    $\sem{A}^{v\Bot v\Bot v} \refines \sem{A}^{v \Bot v}$.  Another instance of
    restricted double orthogonal introduction on $\sem{A}^{v \Bot v}$ is
    $\sem{A}^{v \Bot v} \refines \sem{A}^{v\Bot v\Bot v}$.  Thus, the two sets
    are equal.
    \qedhere
  \end{enumerate}
\end{proof}



With these restricted logical properties, we can recast the \emph{soundness} and
\emph{completeness} properties of reducibility candidates
(\cref{def:reducibility-candidate}) in terms of orthogonality to show
reducibility candidates are exactly the same as fixed points of (co)\-value
restricted orthogonality.
\begin{lemma}[Fixed-Point Candidates]
\label{thm:fixed-point-candidate}
\

\begin{enumerate}
\item A pre-candidate $\sem{A}$ is sound if and only if
  $\sem{A} \refines \sem{A}^\Bot$.
\item A pre-candidate $\sem{A}$ is complete if and only if
  $\sem{A}^{v\Bot} \refines \sem{A}$.
\item A pre-candidate $\sem{A}$ is a reducibility candidate if and only if
  $\sem{A} = \sem{A}^{v\Bot}$.
\item Every reducibility candidate is a fixed point of orthogonality:
  $\sem{A} \in \redcands$ implies $\sem{A} = \sem{A}^\Bot$.
\end{enumerate}
\end{lemma}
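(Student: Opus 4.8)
The plan is to prove the four biconditionals by unfolding the definitions of orthogonality (\cref{def:orthogonality}) and the (co)\-value restriction (\cref{def:co-value-restriction}) against the soundness and completeness requirements (\cref{def:reducibility-candidate}), and then assembling them using the already-established negation-like laws of orthogonality. For part 1, I would simply expand both sides component-wise: the refinement $\sem{A} \refines \sem{A}^\Bot$ amounts to the two inclusions $\sem{A}^+ \subseteq \sem{A}^{-\Bot}$ and $\sem{A}^- \subseteq \sem{A}^{+\Bot}$, and by the definition of orthogonality each of these says exactly that $\cut{v}{e} \in \Bot$ for every $v \in \sem{A}^+$ and $e \in \sem{A}^-$, which is precisely soundness.

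Part 2 is handled the same way, except that the (co)\-value restriction is what makes the quantifiers line up with completeness. Unfolding $\sem{A}^{v\Bot} \refines \sem{A}$ gives $(\sem{A}^{-v})^\Bot \subseteq \sem{A}^+$ and $(\sem{A}^{+v})^\Bot \subseteq \sem{A}^-$; since $\sem{A}^{-v}$ ranges over exactly the co\-values of $\sem{A}^-$ and $\sem{A}^{+v}$ over exactly the values of $\sem{A}^+$, these two inclusions are verbatim the positive and negative completeness conditions.

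Part 3 then combines the first two. For the forward direction, a reducibility candidate is sound and complete, so part 2 yields $\sem{A}^{v\Bot} \refines \sem{A}$ directly, while part 1 together with the restricted orthogonal law $\sem{A}^\Bot \refines \sem{A}^{v\Bot}$ (\cref{thm:restricted-orthogonal}) gives $\sem{A} \refines \sem{A}^\Bot \refines \sem{A}^{v\Bot}$; the two refinements are exactly the equality $\sem{A} = \sem{A}^{v\Bot}$. The backward direction is where I expect the only genuine subtlety: completeness is immediate from the equation via part 2, but soundness does \emph{not} follow from the orthogonality algebra alone, since the equation only provides $\sem{A}^\Bot \refines \sem{A}$, which is the wrong direction. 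The key ingredient is structural rather than algebraic, namely that in each fixed evaluation strategy one of the two syntactic classes is saturated: in call-by-name every term is a value, and in call-by-value every co\-term is a co\-value. Hence any command $\cut{v}{e}$ built from $v \in \sem{A}^+$ and $e \in \sem{A}^-$ is always of the form (value, co\-term) or (term, co\-value), and in the first case $\sem{A}^- = (\sem{A}^{+v})^\Bot$ gives $\cut{v}{e} \in \Bot$, while in the second case $\sem{A}^+ = (\sem{A}^{-v})^\Bot$ does, establishing soundness either way.

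Finally, part 4 is a short corollary: if $\sem{A}$ is a reducibility candidate it is sound, so $\sem{A} \refines \sem{A}^\Bot$ by part 1, and by part 3 with the restricted orthogonal law $\sem{A}^\Bot \refines \sem{A}^{v\Bot} = \sem{A}$; the two refinements give $\sem{A} = \sem{A}^\Bot$. The main obstacle throughout is precisely the soundness half of part 3's backward direction, the one place where the proof must appeal to the strategy's (co)\-value saturation instead of to the purely formal laws of orthogonality.
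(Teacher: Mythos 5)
Your proposal is correct and follows essentially the same route as the paper's proof: parts 1 and 2 by unfolding definitions, part 3's forward direction via the chain $\sem{A}^{\Bot} \refines \sem{A}^{v\Bot} \refines \sem{A} \refines \sem{A}^{\Bot}$, and part 3's backward direction by exactly the strategy-saturation argument (every term is a value in call-by-name, every co\-term is a co\-value in call-by-value) that the paper uses to recover soundness. You also correctly isolated that this saturation step is the one genuinely non-algebraic ingredient, which is precisely where the paper's proof departs from pure orthogonality reasoning.
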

\begin{proof}
  Unfolding the definitions of orthogonality (\cref{def:orthogonality}) and the
  (co)-value restriction (\cref{def:co-value-restriction}) shows that the first
  two refinements are equivalent to soundness and completeness from
  \cref{def:sound-candidate,def:complete-candidate}.

  For the last fact, first recall $\sem{A}^{\Bot} \refines \sem{A}^{v\Bot}$
  (\cref{thm:restricted-orthogonal}).  So if a pre-candidate $\sem{A}$ is both
  sound and complete, $\sem{A} = \sem{A}^{v\Bot} = \sem{A}^\Bot$ because
  \begin{equation*}
    \sem{A}^{\Bot} \refines \sem{A}^{v\Bot}
    \refines
    \sem{A}
    \refines
    \sem{A}^{\Bot} \refines \sem{A}^{v\Bot}
  \end{equation*}
  Going the other way, suppose that $\sem{A} = \sem{A}^{v\Bot}$.  Completeness
  is guaranteed by definition, but what of soundness?  Suppose that $v$ and $e$
  come from the fixed point pre-candidate
  $\sem{A}=\sem{A}^{v\Bot}=(\sem{A}^{-v\Bot},\sem{A}^{+v\Bot})$.  The reason why
  $\sem{A}=\sem{A}^{v\Bot}$ forces $\cut{v}{e} \in \Bot$ depends on the
  evaluation strategy.
  \begin{itemize}
  \item \emph{Call-by-value}, where every co\-term is a co\-value.  Thus, the
    positive requirement on terms of reducibility candidates is equivalent to:
    $v \in \sem{A}^+$ if and only if, for all $e \in \sem{A}^-$,
    $\cut{v}{e} \in \Bot$.
  \item \emph{Call-by-name}, where every term is a value.  Thus, the negative
    requirement on coterms of reducibility candidates is equivalent to:
    $e \in \sem{A}^-$ if and only if, for all $v \in \sem{A}^+$,
    $\cut{v}{e} \in \Bot$.
  \end{itemize}
  In either case, $\cut{v}{e} \in \Bot$ for one of the above reasons, since
  $v,e\in\sem{A}=\sem{A}^{v\Bot}$.
\end{proof}

\subsection{Positive and Negative Completion}
\label{sec:candidate-completion}

Now that we know reducibility candidates are the same thing as fixed points of
(co)\-value restricted orthogonality ($\blank^{v\Bot}$), we have a direct method
to define the completion of a sound pre-candidate into a sound \emph{and}
complete one.  To complete some $\sem{A}$, there are two opposite points of
view: ($\PosCand$) start with the terms of $\sem{A}$ and build everything else
around those, or ($\NegCand$) start with the co\-terms of $\sem{A}$ and build
around them.  Both of these definitions satisfy all the defining criteria
promised by
\cref{thm:pos-completion,thm:neg-completion,thm:pos-invariance,thm:neg-invariance}
due to the logical properties of orthogonality
(\cref{thm:restricted-orthogonal-negation}).
\begin{definition}[Positive \& Negative Reducibility Candidates]
\label{def:pos-candidate}
\label{def:neg-candidate}

For any sound candidate $\sem{A} \in \soundcands$, the positive
($\PosCand(\sem{A})$) and the negative ($\NegCand(\sem{A})$) completion of
$\sem{A}$ are:
\begin{align*}
  \PosCand(\sem{A})
  &=
  (\sem{A}^+, \sem{A}^{+v\Bot})^{v\Bot v\Bot}
  =
  (\sem{A}^{+v\Bot v\Bot}, \sem{A}^{+v\Bot v\Bot v\Bot})
  \\
  \NegCand(\sem{A})
  &=
  (\sem{A}^{-v\Bot}, \sem{A}^-)^{v\Bot v\Bot}
  =
  (\sem{A}^{-v\Bot v\Bot v\Bot}, \sem{A}^{-v\Bot v\Bot})
\end{align*}
\end{definition}

\thmposnegcompletion*
\begin{proof}
  The definitions given in \cref{def:pos-candidate,def:neg-candidate} satisfy
  all three requirements:
  \begin{enumerate}
  \item \emph{They are reducibility candidates:} Observe that by restricted
    triple orthogonal elimination (\cref{thm:restricted-toe}),
    $\PosCand(\sem{A})$ and $\NegCand(\sem{A})$ are reducibility candidates
    because they are fixed-points of $\blank^{v\Bot}$
    (\cref{thm:fixed-point-candidate}):
    \begin{align*}
      &
      (\PosCand(\sem{A}))^{v\Bot}
      &
      &
      (\NegCand(\sem{A}))^{v\Bot}
      \\
      &\quad=
      (\sem{A}^{+v\Bot v\Bot}, \sem{A}^{+v\Bot v\Bot v\Bot})^{v\Bot}
      &
      &\quad=
      (\sem{A}^{-v\Bot v\Bot v\Bot}, \sem{A}^{-v\Bot v\Bot})^{v\Bot}
      &\text{(\cref{def:pos-candidate,def:neg-candidate})}
      \\
      &\quad=
      (\sem{A}^{+v\Bot v\Bot v\Bot v\Bot}, \sem{A}^{+v\Bot v\Bot v\Bot})
      &
      &\quad=
      (\sem{A}^{-v\Bot v\Bot v\Bot}, \sem{A}^{-v\Bot v\Bot v\Bot v\Bot})
      &\text{(\cref{def:orthogonality})}
      \\
      &\quad=
      (\sem{A}^{+v\Bot v\Bot}, \sem{A}^{+v\Bot v\Bot v\Bot})
      &
      &\quad=
      (\sem{A}^{-v\Bot v\Bot v\Bot}, \sem{A}^{-v\Bot v\Bot})
      &\text{(\cref{thm:restricted-toe})}
      \\
      &\quad=
      \PosCand(\sem{A})
      &
      &\quad=
      \NegCand(\sem{A})
      &\text{(\cref{def:pos-candidate,def:neg-candidate})}
    \end{align*}
  \item \emph{They are (co)\-value extensions:} First, note that
    \begin{align*}
      \sem{A}^{+v}
      \subseteq
      \sem{A}^{+v\Bot v\Bot v}
      &=
      \PosCand(\sem{A})^{+v}
      \subseteq
      \PosCand(\sem{A})^+
      \\
      \sem{A}^{-v}
      \subseteq
      \sem{A}^{-v\Bot v\Bot v}
      &=
      \NegCand(\sem{A})^{-v}
      \subseteq
      \NegCand(\sem{A})^-
    \end{align*}
    by restricted double orthogonal introduction (\cref{thm:restricted-doi}).
    Furthermore, soundness of $\sem{A}$ means $\sem{A} \refines \sem{A}^\Bot$
    (\ie $\sem{A}^+ \subseteq \sem{A}^{-\Bot}$ and
    $\sem{A}^- \subseteq \sem{A}^{+\Bot}$), so again by
    \cref{thm:restricted-doi}:
    \begin{align*}
      \sem{A}^{-v}
      \subseteq
      \sem{A}^{+\Bot v}
      \subseteq
      \sem{A}^{+v\Bot v}
      \subseteq
      \sem{A}^{+v\Bot v\Bot v\Bot v}
      &=
      \PosCand(\sem{A})^{-v}
      \subseteq
      \PosCand(\sem{A})^-
      \\
      \sem{A}^{+v}
      \subseteq
      \sem{A}^{-\Bot v}
      \subseteq
      \sem{A}^{-v\Bot v}
      \subseteq
      \sem{A}^{-v\Bot v\Bot v\Bot v}
      &=
      \NegCand(\sem{A})^{+v}
      \subseteq
      \NegCand(\sem{A})^+
    \end{align*}
  \item \emph{They are the least/greatest such candidates:} Suppose there is a
    reducibility candidate $\sem{C}$ such that $\sem{A}^v \refines \sem{C}$.
    Because $\sem{C}$ is a fixed point of $\blank^{v\Bot}$
    (\cref{thm:fixed-point-candidate}), iterating contrapositive
    (\cref{thm:orthogonal-contrapositive}) on this refinement gives:
    \begin{align*}
      \sem{C}
      =
      \sem{C}^{v\Bot}
      &\refines
      \sem{A}^{v v\Bot}
      =
      \sem{A}^{v\Bot}
      &
      \sem{A}^{v\Bot v\Bot}
      &\refines
      \sem{C}^{v\Bot}
      =
      \sem{C}
      &
      \sem{C}
      =
      \sem{C}^{v\Bot}
      &\refines
      \sem{A}^{v\Bot v\Bot v\Bot}
    \end{align*}
    Expanding the definition of $\PosCand$, $\NegCand$, and refinement, this
    means:
    \begin{align*}
      \PosCand(\sem{A})^+
      =
      \sem{A}^{+v\Bot v\Bot}
      &\subseteq
      \sem{C}
      &
      \PosCand(\sem{A})^-
      =
      \sem{A}^{+v\Bot v\Bot v\Bot}
      &\supseteq
      \sem{C}
      \\
      \NegCand(\sem{A})^+
      =
      \sem{A}^{-v\Bot v\Bot v\Bot}
      &\supseteq
      \sem{C}
      &
      \NegCand(\sem{A})^-
      =
      \sem{A}^{-v\Bot v\Bot}
      &\subseteq
      \sem{C}
    \end{align*}
    Or in other words,
    \begin{math}
      \PosCand(\sem{A})
      \leq
      \sem{C}
      \leq
      \NegCand(\sem{A})
      .
    \end{math}
    \qedhere
  \end{enumerate}
\end{proof}

\thmposneginvariance*
\begin{proof}
  Because the definition of $\PosCand(\sem{A})$ depends only on $\sem{A}^+$ and
  not $\sem{A}^-$, and dually the definition of $\NegCand(\sem{A})$ depends only
  on $\sem{A}^-$.
\end{proof}





In addition to these defining properties of $\PosCand$ and $\NegCand$, the two
completions are also \emph{idempotent} (\ie they are closure operations, because
multiple applications are the same as just one) and \emph{monotonic} (\ie they
preserve the subtyping order, by converting any two sound subtype candidates to
two sound and complete subtype reducibility candidates).
\begin{corollary}[Idempotency]
\label{thm:pos-idempotency}
\label{thm:neg-idempotency}

For all reducibility candidates $\sem{A}$,
$\PosCand(\sem{A}) = \sem{A} = \NegCand(\sem{A})$.  It follows that, for all
sound candidates $\sem{A}$:
\begin{align*}
  \PosCand(\PosCand(\sem{A}))
  &=
  \PosCand(\sem{A})
  &
  \NegCand(\NegCand(\sem{A}))
  &=
  \NegCand(\sem{A})
\end{align*}
\end{corollary}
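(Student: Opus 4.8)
The plan is to reduce everything to the fixed-point characterization of reducibility candidates, namely that $\sem{A} \in \redcands$ if and only if $\sem{A} = \sem{A}^{v\Bot}$ (\cref{thm:fixed-point-candidate}). I would first prove the equation $\PosCand(\sem{A}) = \sem{A} = \NegCand(\sem{A})$ for reducibility candidates $\sem{A}$, and then obtain the idempotency of $\PosCand$ and $\NegCand$ on arbitrary sound candidates as an immediate corollary.

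For the first equation I would start from the componentwise reading of $\sem{A} = \sem{A}^{v\Bot}$. Since $\sem{A}^{v\Bot} = (\sem{A}^{-v\Bot}, \sem{A}^{+v\Bot})$ by the definitions of orthogonality and the (co)value restriction, being a reducibility candidate gives exactly the two identities $\sem{A}^+ = \sem{A}^{-v\Bot}$ and $\sem{A}^- = \sem{A}^{+v\Bot}$. I would then unfold the closed form $\PosCand(\sem{A}) = (\sem{A}^{+v\Bot v\Bot}, \sem{A}^{+v\Bot v\Bot v\Bot})$ from \cref{def:pos-candidate} and rewrite each orthogonal chain using these identities: $\sem{A}^{+v\Bot} = \sem{A}^-$, hence $\sem{A}^{+v\Bot v\Bot} = \sem{A}^{-v\Bot} = \sem{A}^+$, and one further step gives $\sem{A}^{+v\Bot v\Bot v\Bot} = \sem{A}^{+v\Bot} = \sem{A}^-$. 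This collapses $\PosCand(\sem{A})$ to $(\sem{A}^+,\sem{A}^-) = \sem{A}$. The computation for $\NegCand(\sem{A}) = (\sem{A}^{-v\Bot v\Bot v\Bot}, \sem{A}^{-v\Bot v\Bot})$ is perfectly symmetric, using $\sem{A}^{-v\Bot} = \sem{A}^+$ throughout.

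With the first equation established, the idempotency follows with no further computation. Given any sound candidate $\sem{A}$, \cref{thm:pos-completion,thm:neg-completion} already guarantee that $\PosCand(\sem{A})$ and $\NegCand(\sem{A})$ are themselves reducibility candidates. Applying the first equation to these two candidates directly yields $\PosCand(\PosCand(\sem{A})) = \PosCand(\sem{A})$ and $\NegCand(\NegCand(\sem{A})) = \NegCand(\sem{A})$.

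I expect no genuine obstacle here: the only thing requiring care is bookkeeping the order of the orthogonal and restriction superscripts so that the two fixed-point identities apply cleanly, rather than inadvertently stranding a longer chain that does not simplify. One might be tempted to avoid the symbol pushing by appealing to the extremal characterization of \cref{thm:pos-completion}: since a reducibility candidate satisfies $\sem{A}^v \refines \sem{A}$ by restriction idempotency (\cref{thm:restricted-orthogonal-negation}), taking the witnessing candidate to be $\sem{A}$ itself sandwiches $\PosCand(\sem{A}) \leq \sem{A} \leq \NegCand(\sem{A})$. However, this only delivers one inequality on each side and cannot close the reverse directions without re-deriving the completions' minimality, so the direct computation above remains the most self-contained route.
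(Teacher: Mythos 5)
Your proposal is correct and takes essentially the same route as the paper: both prove $\PosCand(\sem{A}) = \sem{A} = \NegCand(\sem{A})$ by unfolding \cref{def:pos-candidate,def:neg-candidate} against the fixed-point characterization $\sem{A} = \sem{A}^{v\Bot}$ from \cref{thm:fixed-point-candidate}, and both then obtain idempotency immediately from the fact that $\PosCand$ and $\NegCand$ send any sound candidate to a reducibility candidate (\cref{thm:pos-completion,thm:neg-completion}). Your version merely makes explicit the componentwise orthogonal-chain computation ($\sem{A}^{+v\Bot} = \sem{A}^-$, $\sem{A}^{+v\Bot v\Bot} = \sem{A}^+$, etc.) that the paper leaves implicit.
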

\begin{proof}
  $\PosCand(\sem{A}) = \sem{A} = \NegCand(\sem{A})$ follows from
  \cref{def:pos-candidate,def:neg-candidate} because the reducibility candidate
  $\sem{A}$ is a fixed point of $\blank^{v\Bot}$
  (\cref{thm:fixed-point-candidate}).  The idempotency of $\PosCand$ and
  $\NegCand$ is then immediate from the fact that they produce reducibility
  candidates from any sound candidate.
\end{proof}

\begin{lemma}[Monotonicity]
\label{thm:pos-monotonicity}
\label{thm:neg-monotonicity}

Given any sound candidates $\sem{A} \leq \sem{B}$:
\begin{enumerate*}[]
\item $\PosCand(\sem{A}) \leq \PosCand(\sem{B})$, and
\item $\NegCand(\sem{A}) \leq \NegCand(\sem{B})$.
\end{enumerate*}
\end{lemma}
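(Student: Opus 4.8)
The plan is to exploit the structural observation, already visible in \cref{def:pos-candidate,def:neg-candidate}, that $\PosCand(\sem{A})$ is built solely from the positive component $\sem{A}^+$ and $\NegCand(\sem{A})$ solely from the negative component $\sem{A}^-$, and to combine this with the elementary monotonicity recorded in \cref{thm:subtype-monotonicity}. Since both the (co)value restriction $\blank^v$ and the orthogonal $\blank^\Bot$ are subtype-monotone on pre-candidates, their composite $\blank^{v\Bot v\Bot}$ is subtype-monotone as well. The entire argument then collapses to checking that the inner pre-candidates to which this composite is applied are themselves already ordered by $\leq$.

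First I would treat $\PosCand$. Writing $\PosCand(\sem{A}) = (\sem{A}^+, \sem{A}^{+v\Bot})^{v\Bot v\Bot}$ as in \cref{def:pos-candidate}, the goal becomes to show that the inner pre-candidate satisfies $(\sem{A}^+, \sem{A}^{+v\Bot}) \leq (\sem{B}^+, \sem{B}^{+v\Bot})$ whenever $\sem{A} \leq \sem{B}$. By \cref{def:subtype-order} this is two inclusions running in opposite directions: the positive inclusion $\sem{A}^+ \subseteq \sem{B}^+$, which is immediate from $\sem{A} \leq \sem{B}$; and the negative inclusion $\sem{B}^{+v\Bot} \subseteq \sem{A}^{+v\Bot}$. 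The latter follows by restricting the term inclusion to values (so $\sem{A}^{+v} \subseteq \sem{B}^{+v}$) and then applying the antitonicity of orthogonality from \cref{thm:orthogonal-contrapositive}, which reverses the inclusion. Feeding this ordered pair into subtype-monotone $\blank^{v\Bot v\Bot}$ yields $\PosCand(\sem{A}) \leq \PosCand(\sem{B})$.

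The $\NegCand$ case is perfectly dual: from $\NegCand(\sem{A}) = (\sem{A}^{-v\Bot}, \sem{A}^-)^{v\Bot v\Bot}$ I would verify $(\sem{A}^{-v\Bot}, \sem{A}^-) \leq (\sem{B}^{-v\Bot}, \sem{B}^-)$, which unfolds to $\sem{A}^{-v\Bot} \subseteq \sem{B}^{-v\Bot}$ together with $\sem{B}^- \subseteq \sem{A}^-$. Here the negative inclusion is exactly what $\sem{A} \leq \sem{B}$ supplies, while the positive inclusion again comes from restricting $\sem{B}^- \subseteq \sem{A}^-$ to covalues and applying antitonicity of the orthogonal. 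Subtype-monotonicity of $\blank^{v\Bot v\Bot}$ then closes the case.

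The only real delicacy, and the step I would be most careful about, is the bookkeeping of inclusion directions: the subtype order couples a covariant constraint on terms with a contravariant one on coterms, and each orthogonal flips the inclusion while each restriction preserves it, so the signs must line up exactly. A slightly more pedestrian alternative would bypass the packaging into pre-candidates and instead track the closed-form components $\sem{A}^{+v\Bot v\Bot}$ and $\sem{A}^{+v\Bot v\Bot v\Bot}$ directly through alternating restrictions and orthogonals, making every sign flip explicit; this is the same computation by another name. In neither version is any appeal to soundness, completeness, or expansion required, since pure antitone/monotone reasoning suffices.
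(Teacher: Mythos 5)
Your proof is correct, but it takes a genuinely different route from the paper's. The paper never unfolds the closed-form definition of the completions: it first uses invariance (\cref{thm:pos-invariance,thm:neg-invariance}) to replace $\PosCand(\sem{A})$ by $\PosCand(\sem{A}^+,\{\})$ and $\NegCand(\sem{A})$ by $\NegCand(\{\},\sem{A}^-)$, and then invokes the universal property of \cref{thm:pos-completion,thm:neg-completion}: since $(\sem{A}^{+v},\{\}) \refines (\sem{B}^{+v},\{\}) \refines \PosCand(\sem{B}^+,\{\})$ and $\PosCand(\sem{A}^+,\{\})$ is the \emph{least} reducibility candidate extending the values of $(\sem{A}^+,\{\})$, the ordering $\PosCand(\sem{A}) \leq \PosCand(\sem{B})$ follows abstractly, and dually for $\NegCand$ via greatestness. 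You instead compute: you unfold $\PosCand(\sem{A}) = (\sem{A}^+,\sem{A}^{+v\Bot})^{v\Bot v\Bot}$, check that the inner pairs are subtype-ordered using antitonicity of $\blank^{\Bot}$ (\cref{thm:orthogonal-contrapositive}) together with monotonicity of $\blank^{v}$, and then push through the subtype-monotone composite $\blank^{v\Bot v\Bot}$ (\cref{thm:subtype-monotonicity}); your bookkeeping of the covariant/contravariant inclusions is correct in both the $\PosCand$ and $\NegCand$ cases. What each approach buys: yours is elementary and self-contained, needing only the antitone/monotone facts, and your observation that soundness is never used is accurate (it appears in the statement only because the paper defines $\PosCand$ and $\NegCand$ on sound candidates); the paper's argument, by contrast, is implementation-independent, relying only on the characterizing properties of the completions (candidate-hood, value extension, least/greatest), so it would survive any change to how $\PosCand$ and $\NegCand$ are actually constructed.
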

\begin{proof}
  Given $\sem{A} \leq \sem{B}$,
  \cref{thm:pos-completion,thm:neg-completion,thm:pos-invariance,thm:neg-invariance}
  imply that
  \begin{align*}
    \PosCand(\sem{A})
    =
    \PosCand(\sem{A}^+,\{\})
    &\leq
    \PosCand(\sem{B}^+,\{\})
    =
    \PosCand(\sem{B})
    \\
    \NegCand(\sem{A})
    =
    \NegCand(\{\},\sem{A}^-)
    &\leq
    \NegCand(\{\},\sem{B}^-)
    =
    \NegCand(\sem{B})
  \end{align*}
  because $\sem{A}^v \leq \sem{B}^v$ (\cref{thm:subtype-monotonicity}), which
  means $\sem{A}^{+v} \subseteq \sem{B}^{+v}$ and
  $\sem{A}^{-v} \supseteq \sem{B}^{-v}$ by definition of subtyping.  Thus from
  \cref{thm:pos-completion}, we know that
  $(\sem{A}^{+v},\{\})\refines(\sem{B}^{+v},\{\})\refines\PosCand(\sem{B}^+,\{\})$
  and $\PosCand(\sem{A}^+,\{\})$ is the least one to do so, forcing
  $\PosCand(\sem{A}^+,\{\}) \leq \PosCand(\sem{B}^+,\{\})$.  Likewise from
  \cref{thm:neg-completion}, we know that
  $(\{\},\sem{B}^{-v})\refines(\{\},\sem{A}^{v-})\refines\NegCand(\{\},\sem{A}^{-v})$
  and $\NegCand(\sem{B}^-,\{\})$ is the greatest one to do so, forcing
  $\NegCand(\{\},\sem{A}^-) \leq \NegCand(\{\},\sem{B}^-)$.
\end{proof}

\subsection{Refinement and Subtyping Lattices}
\label{sec:lattices}

Because pre-candidates have two different orderings
(\cref{def:refinement-order,def:subtype-order}), they also have two very
different lattice structures.  We are primarily interested in the
\emph{subtyping lattice} because it is compatible with both soundness and
completeness in both directions.  In particular, the na\"ive subtype lattice
as-is always preserves soundness, and combined with the dual completions
($\PosCand$ and $\NegCand$) the subtype lattice preserves completeness as well.
This gives us a direct way to assemble complex reducibility candidates from
simpler ones.

\begin{theorem}[Sound Subtype Lattice]
\label{thm:sound-subtype-lattice}

The subtype intersection $\bigwedge$ and union $\bigvee$ forms a complete
semi-lattice over sound candidates in $\soundcands$.
\end{theorem}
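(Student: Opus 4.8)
The plan is to reduce the theorem to a single preservation fact, leaning on the complete lattice of \emph{all} pre-candidates already established in \cref{def:pre-subtype-lattice}. That definition gives, for any family $\mathcal{A} \subseteq \precands$, that $\bigwedge\mathcal{A}$ is the greatest lower bound and $\bigvee\mathcal{A}$ the least upper bound in the subtype order. So the only thing I would need to verify is that these two operations map sound candidates to sound candidates: if every $\sem{A} \in \mathcal{A}$ is sound, then so are $\bigwedge\mathcal{A}$ and $\bigvee\mathcal{A}$. Once that is in hand, the complete semi-lattice structure on $\soundcands$ is immediate, because a bound computed in $\precands$ that happens to land inside $\soundcands$ is automatically the corresponding bound \emph{within} $\soundcands$ --- any rival bound drawn from $\soundcands$ is also a rival in the larger $\precands$.

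First I would check soundness of the meet. Taking arbitrary $v \in (\bigwedge\mathcal{A})^+ = \bigcap\{\sem{A}^+ \mid \sem{A} \in \mathcal{A}\}$ and $e \in (\bigwedge\mathcal{A})^- = \bigcup\{\sem{A}^- \mid \sem{A} \in \mathcal{A}\}$, the co\-term $e$ must belong to $\sem{B}^-$ for some witnessing $\sem{B} \in \mathcal{A}$, while the term $v$ belongs to every $\sem{A}^+$ and hence in particular to $\sem{B}^+$. Soundness of $\sem{B}$ then delivers $\cut{v}{e} \in \Bot$, as required. The join is perfectly dual: a term $v \in (\bigvee\mathcal{A})^+ = \bigcup\{\sem{A}^+\}$ comes from some $\sem{B}^+$, and a co\-term $e \in (\bigvee\mathcal{A})^- = \bigcap\{\sem{A}^-\}$ lies in every $\sem{A}^-$, hence in $\sem{B}^-$, so soundness of $\sem{B}$ again closes the case. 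The empty family is covered as well, since then one of the two relevant underlying sets is empty and soundness holds vacuously.

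I expect essentially no obstacle here; the only thing to keep straight is that the subtype order reverses inclusion on the co\-term component, so the meet pairs \emph{intersected} terms with \emph{unioned} co\-terms and the join does the reverse. This pairing is exactly what makes the argument go through: whichever side is formed by a union supplies the witness $\sem{B}$, and the matching partner from the intersected side is then guaranteed to sit in the \emph{same} $\sem{B}$. If a more abstract phrasing were preferred, the same preservation could be derived from the fixed-point characterization of soundness, $\sem{A} \refines \sem{A}^\Bot$ (\cref{thm:fixed-point-candidate}), together with the antitonicity of orthogonality, but the direct set-level verification above is shorter and self-contained.
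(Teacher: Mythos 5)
Your proposal is correct and matches the paper's own proof: the paper likewise establishes the result by the same witness argument, taking $v$ from the intersected term side and $e$ from the unioned co\-term side of $\bigwedge\mathcal{A}$ (dually for $\bigvee\mathcal{A}$), locating the particular sound $\sem{A}$ containing $e$ (resp.\ $v$), and invoking its soundness to get $\cut{v}{e} \in \Bot$. The only difference is that you spell out explicitly why soundness preservation suffices to inherit the complete semi-lattice structure from $\precands$, a step the paper leaves implicit.
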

\begin{proof}
  Let $\set{A} \subseteq \soundcands$ be a set of sound candidates, and suppose
  $v, e \in \bigwedge\set{A}$.  By definition:
  \begin{itemize}
  \item for all $\sem{A} \in \set{A}$, $v \in \sem{A}$, and
  \item there exists an $\sem{A} \in \set{A}$ such that $e \in \sem{A}$.
  \end{itemize}
  Therefore, we know that $v \in \sem{A}$ for the particular sound candidate
  that $e$ inhabits, and thus $\cut{v}{e}$ by soundness of $\sem{A}$.  Soundness
  of $\bigvee\set{A}$ follows dually, because $v, e \in \bigvee \set{A}$
  implies:
  \begin{itemize}
  \item there exists an $\sem{A} \in \set{A}$ such that $v \in \sem{A}$, and
  \item for all $\sem{A} \in \set{A}$, $e \in \sem{A}$.
    \qedhere
  \end{itemize}
\end{proof}

\thmsubtypelattice*
\begin{proof}
  Let $\set{A} \subseteq \redcands$ be any set of reducibility candidates.
  First, note that $\bigwedge\set{A}$ and $\bigvee\set{A}$ are sound
  (\cref{thm:sound-subtype-lattice}) because every reducibility candidate is
  sound.  Thus, for all $\sem{A} \in \set{\mathcal{A}}$, monotonicity
  (\cref{thm:pos-monotonicity,thm:neg-monotonicity}) and idempotency
  (\cref{thm:pos-idempotency,thm:neg-idempotency}) of $\PosCand$ and $\NegCand$
  implies:
  \begin{align*}
    \bigwedge\set{A} &\leq \sem{A}
    &
    \bigvee\set{A} &\leq \sem{A}
    \\
    \bigcurlywedge\set{A} = \PosCand\bigwedge\set{A}
    &\leq
    \PosCand(\sem{A}) = \sem{A}
    &
    \bigcurlyvee\set{A} = \NegCand\bigvee\set{A}
    &\leq
    \NegCand(\sem{A}) = \sem{A}
  \end{align*}
  \Cref{thm:pos-monotonicity,thm:neg-monotonicity,thm:pos-idempotency,thm:neg-idempotency}
  also imply that these are the tightest such bounds.  Suppose there are
  reducibility candidates $\sem{B}$ and $\sem{C}$ such that
  \begin{align*}
    \forall \sem{A} \in \set{A}.~
    \sem{B} &\leq \sem{A}
    &
    \forall \sem{A} \in \set{A}.~
    \sem{A} &\leq \sem{C}
  \end{align*}
  From the lattice properties of $\bigwedge$ and $\bigvee$, monotonicity, and
  idempotency, we have:
  \begin{align*}
    \sem{B} &\leq \bigwedge\set{A}
    &
    \bigvee\set{A} &\leq \sem{C}
    \\
    \sem{B} = \PosCand(\sem{B})
    &\leq
    \PosCand\bigwedge\set{A} = \bigcurlywedge\set{A}
    &
    \bigcurlyvee\set{A} = \NegCand\bigvee\set{A}
    &\leq
    \NegCand(\sem{C}) = \sem{C}
    \qedhere
  \end{align*}
\end{proof}

The other lattice is based on refinement, instead of subtyping.  In contrast,
the refinement lattice has a opposing relationship with soundness and
completeness: one direction of the lattice preserves only soundness, and the
other one preserves only completeness.
\begin{definition}[Refinement Lattice]
\label{def:refinement-lattice}

There is a complete lattice of pre-candidates in $\precands$ with respect to
refinement order, where the binary intersection ($\sem{A} \sqcap \sem{B}$ \aka
\emph{meet}) and union ($\sem{A} \sqcap \sem{B}$, \aka \emph{join}) are defined
as:
  \begin{align*}
    \sem{A} \sqcap \sem{B}
    &\defeq
    (\sem{A}^+ \cap \sem{B}^+, \sem{A}^- \cap \sem{B}^-)
    &
    \sem{A} \sqcup \sem{B}
    &\defeq
    (\sem{A}^+ \cup \sem{B}^+, \sem{A}^- \cup \sem{B}^-)
  \end{align*}
  Moreover, these generalize to the intersection ($\bigsqcap\sem{A}$, \aka
  infimum) and union ($\bigsqcup\sem{A}$, \aka supremum) of any set
  $\set{A} \in \precands$ of pre-candidates
  \begin{align*}
    \bigsqcap\set{A}
    &\defeq
    \left(
      \bigcap\{\sem{A}^+ \mid \sem{A} \in \set{A}\}
      ,
      \bigcap\{\sem{A}^- \mid \sem{A} \in \set{A}\}    
    \right)
    \\
    \bigsqcup\set{A}
    &\defeq
    \left(
      \bigcup\{\sem{A}^+ \mid \sem{A} \in \set{A}\}
      ,
      \bigcup\{\sem{A}^- \mid \sem{A} \in \set{A}\}    
    \right)
  \end{align*}
\end{definition}

\begin{theorem}[Sound and Complete Refinement Semi-Lattices]
\label{thm:sound-refinement-semi-lattice}
\label{thm:complete-refinement-semi-lattice}

The refinement intersection $\bigsqcap$ forms a meet semi-lattice over sound
candidates in $\soundcands$, and the refinement union $\bigsqcup$ forms a join
semi-lattice over complete candidates in $\complcands$.
\end{theorem}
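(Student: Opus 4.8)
The plan is to observe that, by \cref{def:refinement-lattice}, the refinement meet $\bigsqcap$ and join $\bigsqcup$ are the genuine infimum and supremum of the complete refinement lattice on all of $\precands$ (which is just the componentwise product of the two powerset lattices $\wp(\mathit{Term})$ and $\wp(\mathit{CoTerm})$). Consequently, both $\bigsqcap\set{A}$ and $\bigsqcup\set{A}$ already are greatest lower / least upper bounds in $\precands$; the only thing left to verify is that each stays inside the relevant subclass. So the theorem reduces to two closure facts: (i) $\soundcands$ is closed under nonempty refinement intersection, and (ii) $\complcands$ is closed under nonempty refinement union. Once closure holds, any sound (resp.\ complete) lower (resp.\ upper) bound of $\set{A}$ is a fortiori a bound in $\precands$, so $\bigsqcap\set{A}$ (resp.\ $\bigsqcup\set{A}$) is automatically the meet (resp.\ join) within the subclass, giving the semi-lattice structure.

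For (i), I would take a nonempty $\set{A} \subseteq \soundcands$ and show $\bigsqcap\set{A}$ is sound directly from \cref{def:sound-candidate}. Any $v \in (\bigsqcap\set{A})^+ = \bigcap\{\sem{A}^+\}$ and $e \in (\bigsqcap\set{A})^- = \bigcap\{\sem{A}^-\}$ lie in $\sem{A}^+$ and $\sem{A}^-$ simultaneously for every member; picking one fixed $\sem{A} \in \set{A}$ (which exists by nonemptiness), soundness of that single $\sem{A}$ gives $\cut{v}{e} \in \Bot$. This half is essentially immediate, because a witnessing term/co\-term pair of the intersection is a legal pair inside each individual sound candidate. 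The binary case $\sem{A}\sqcap\sem{B}$ is the same argument specialized to a two-element family.

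For (ii), the argument runs dually but is a touch more delicate, since completeness (\cref{def:complete-candidate}) is a conditional whose hypothesis quantifies over the (co)\-values in the candidate. Taking a nonempty $\set{A} \subseteq \complcands$, I would verify positive completeness of $\bigsqcup\set{A}$ as follows: suppose $\cut{v}{E} \in \Bot$ for every co\-value $E \in (\bigsqcup\set{A})^- = \bigcup\{\sem{A}^-\}$. Since each $\sem{A}^- \subseteq \bigcup\{\sem{A}^-\}$, this premise in particular holds for every $E \in \sem{A}^-$, so positive completeness of each individual $\sem{A}$ forces $v \in \sem{A}^+$ for all $\sem{A}$; fixing any one member then yields $v \in \sem{A}^+ \subseteq \bigcup\{\sem{A}^+\} = (\bigsqcup\set{A})^+$. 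Negative completeness is perfectly symmetric, swapping the roles of values and co\-values. The key (and the only real subtlety) is this inclusion flip: enlarging the set of test co\-values in the union only strengthens the completeness premise, so the premise for the union implies the premise for each component. The main obstacle to watch for is not depth but the side condition of nonemptiness --- the empty refinement meet is $(\mathit{Term},\mathit{CoTerm})$, which is not sound, and the empty refinement join is $(\{\},\{\})$, which is not complete --- so these are genuinely semi-lattices (binary, hence finite nonempty, meets/joins) rather than complete lattices, matching the asymmetry anticipated by the earlier footnote on the refinement lattice.
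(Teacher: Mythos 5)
Your proof is correct, and it takes a genuinely more elementary route than the paper's. The paper stays at the algebraic level: it invokes its fixed-point characterizations (sound iff $\sem{A} \refines \sem{A}^\Bot$, complete iff $\sem{A}^{v\Bot} \refines \sem{A}$) and then chains refinements through the de Morgan laws for orthogonality, namely $\bigsqcap\set{A} \refines \sem{A} \refines \sem{A}^\Bot \refines \bigsqcup(\set{A}^{\Bot*}) \refines \left(\bigsqcap\set{A}\right)^\Bot$ for soundness, and the dual chain (using that the (co)\-value restriction $\blank^v$ distributes over unions) for completeness; like you, it then treats closure as sufficient for the semi-lattice structure, since meets and joins are inherited from the ambient refinement lattice on $\precands$. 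Your argument instead unfolds the definitions of soundness and completeness and reasons elementwise --- picking a witness member for the intersection, and observing that enlarging the set of test co\-values strengthens the completeness premise for the union. This essentially inlines the content of the paper's de Morgan lemmas (which are themselves proved elementwise), so the mathematical core is the same; what the paper's phrasing buys is reuse of its orthogonality algebra and an explicit display of the duality between the two halves, while your phrasing buys self-containedness and, notably, precision on a point the paper glosses over: both arguments silently require $\set{A}$ to be nonempty (the paper's refinement chain is vacuous when $\set{A} = \{\}$, just as your witness choice is), and you correctly pin down the failure cases --- $\bigsqcap\{\} = (\mathit{Term},\mathit{CoTerm})$ is not sound and $\bigsqcup\{\} = (\{\},\{\})$ is not complete --- which is exactly why the statement yields semi-lattices rather than complete lattices, a fact the paper only gestures at in its earlier footnote.
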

\begin{proof}
  Let $\set{A} \in \soundcands$ be a set of sound candidates, \ie for all
  $\sem{A} \in \set{A}$, we know $\sem{A} \refines \sem{A}^\Bot$.  In the
  refinement lattice on pre-candidates, de Morgan duality
  (\cref{thm:orthogonal-de-morgan}) implies:
  \begin{equation*}
    \forall \sem{A} \in \set{A}.~
    \bigsqcap\set{A}
    \refines
    \sem{A}
    \refines
    \sem{A}^\Bot
    \refines
    \bigsqcup(\set{A}^{\Bot*})
    \refines
    \left(\bigsqcap\set{A}\right)^\Bot
  \end{equation*}
  So that $\bigsqcap\set{A}$ is also sound.

  Let $\set{A} \in \complcands$ be a set of complete candidates, \ie for all
  $\sem{A} \in \set{A}$, we know $\sem{A}^{v\Bot} \refines \sem{A}$.  In the
  refinement lattice on pre-candidates, de Morgan duality
  (\cref{thm:orthogonal-de-morgan}) and the fact that the (co)\-value
  restriction $\blank^v$ distributes over unions implies:
  \begin{equation*}
    \forall \sem{A} \in \set{A}.~
    \bigsqcup\set{A}
    \extends
    \sem{A}
    \extends
    \sem{A}^{v\Bot}
    \extends
    \bigsqcap(\set{A}^{v*\Bot*})
    =
    \bigsqcup(\set{A})^{v\Bot}
  \end{equation*}
  So that $\bigsqcup\set{A}$ is also complete.
\end{proof}

Because soundness and completeness are each broken by different directions of
this refinement lattice, it doesn't give us a complete lattice for assembling
new reducibility candidates.  However, what it does give us is additional
insight into the logical properties of orthogonality.  That is, while
orthogonality behaves like intuitionistic negation, the intersections
($\bigsqcap$) and unions ($\bigsqcup$) act like conjunction and disjunction,
respectively.  Together, these give us properties similar to the familiar de
Morgan laws of duality intuitionistic logic.
\begin{property}[Orthogonal de Morgan]
\label{thm:orthogonal-de-morgan}

Given any pre-candidates $\sem{A}$ and $\sem{B}$:
\begin{enumerate}
\item
  \begin{math}
    (\sem{A} \sqcup \sem{B})^\Bot
    =
    (\sem{A}^{\Bot}) \sqcap (\sem{B}^{\Bot})
    .
  \end{math}
\item
  \begin{math}
    (\sem{A} \sqcap \sem{B})^\Bot
    \extends
    (\sem{A}^{\Bot}) \sqcup (\sem{B}^{\Bot})
    .
  \end{math}
\item
  \begin{math}
    (\sem{A}^{\Bot} \sqcap \sem{B}^{\Bot})^{\Bot\Bot}
    =
    (\sem{A}^{\Bot}) \sqcap (\sem{B}^{\Bot})
    =
    (\sem{A} \sqcup \sem{B})^{\Bot}
    =
    (\sem{A}^{\Bot\Bot} \sqcup \sem{B}^{\Bot\Bot})^\Bot
    .
  \end{math}
\end{enumerate}
Furthermore, given any set of pre-candidates $\set{A} \subseteq \precands$:
\begin{enumerate}
\item
  \begin{math}
    (\bigcup\set{A})^\Bot
    =
    \bigcap(\set{A}^{\Bot*})
    .
  \end{math}
\item
  \begin{math}
    (\bigcap\set{A})^\Bot
    \extends
    \bigcup(\set{A}^{\Bot*})
    .
  \end{math}
\item 
  \begin{math}
    (\bigcap(\set{A}^{\Bot*}))^{\Bot\Bot}
    =
    \bigcap(\set{A}^{\Bot*})
    =
    \left(\bigcup\set{A}\right)^{\Bot}
    =
    \left(\bigcup\set{A}^{\Bot*\Bot*}\right)^{\Bot}
    .
  \end{math}
\end{enumerate}
\end{property}
\begin{proof}
  We will show only the de Morgan properties for union and intersection over any
  sets of pre-candidate $\set{A}$; the binary versions are special cases of
  these.  Note that the union and intersection of the refinement lattice on
  pre-candidates have these lattice properties:
  \begin{align*}
    \forall \sem{A} \in \set{A}.~
    &
    \sem{A} \refines \bigsqcup\set{A}
    &
    (\forall \sem{A} \in \set{A}.~ \sem{A} \refines \sem{C})
    &\implies
    \bigsqcup\set{A} \refines \sem{C}
    \\
    \forall \sem{A} \in \set{A}.~
    &
    \bigsqcap\set{A} \refines \sem{A}
    &
    (\forall \sem{A} \in \set{A}.~ \sem{C} \refines \sem{A})
    &\implies
    \sem{C} \refines \bigsqcap\set{A} 
  \end{align*}
  Taking the contrapositive (\cref{thm:orthogonal-contrapositive}) to the facts
  on the left, and instantiating the facts on the right to $\set{A}^{\Bot*}$,
  gives:
  \begin{align*}
    \forall \sem{A} \in \set{A}.~
    &
    \left(\bigsqcup\set{A}\right)^\Bot \refines \sem{A}^{\Bot}
    &
    (\forall \sem{A} \in \set{A}.~ \sem{A}^{\Bot} \refines \sem{C})
    &\implies
    \bigsqcup(\set{A}^{\Bot*}) \refines \sem{C}
    \\
    \forall \sem{A} \in \set{A}.~
    &
    \sem{A}^{\Bot} \refines \left(\bigsqcap\set{A}\right)^\Bot
    &
    (\forall \sem{A} \in \set{A}.~ \sem{C} \refines \sem{A}^{\Bot})
    &\implies
    \sem{C} \refines \bigsqcap(\set{A}^{\Bot*})
  \end{align*}
  
  \begin{enumerate}
  \item We know $(\bigsqcup\set{A})^\Bot \refines \sem{A}^{\Bot}$ (for each
    $\sem{A}^{\Bot} \in \set{A}$), so
    $(\bigsqcup\set{A})^\Bot \refines \bigsqcap(\set{A}^{\Bot*})$.  In the
    reverse direction, suppose $v \in \bigsqcap(\set{A}^{\Bot*})$.  For every
    $e \in \bigsqcup\set{A}$, we know there is (at least) one
    $\sem{A} \in \set{A}$ such that $e \in \sem{A}$.  So since
    $v \in \bigsqcap(\set{A}^{\Bot*}) \refines \sem{A}^{\Bot}$, we know
    $\cut{v}{e} \in \Bot$ by definition of orthogonality
    (\cref{def:orthogonality}).  Therefore, $v \in (\bigsqcup\set{A})^\Bot$ as
    well.  Dually, for every $e \in \bigsqcap(\set{A}^{\Bot*})$ and
    $v \in \bigsqcup\set{A}$, there it at least one $v \in \sem{A} \in \set{A}$,
    forcing $\cut{v}{e} \in \Bot$ and thus $e \in (\bigsqcup\set{A})^\Bot$.  So
    in general $\bigsqcap(\set{A}^{\Bot*}) \refines (\bigsqcup\set{A})^\Bot$,
    making the two sets
    equal.
  \item We know $\sem{A}^{\Bot} \refines (\bigsqcap\set{A})^\Bot$ (for each
    $\sem{A}^{\Bot} \in \set{A}$), so
    $\bigsqcup(\set{A}^{\Bot*}) \refines (\bigsqcap\set{A})^\Bot$.  But the
    reverse direction may not be true:
    $(\bigsqcap\set{A})^\Bot \not\refines \bigsqcup(\set{A}^{\Bot*})$.  Suppose
    that $e \in (\bigsqcap\set{A})^\Bot$.  Consider the possibility that each
    $\sem{A} \in \set{A}$ might contain a term $v_{\sem{A}}$ incompatible with
    $e$ (\ie $\cut{v_{\sem{A}}}{e} \notin \Bot$), and yet each such
    $v_{\sem{A}}$ might not end up in the intersection of $\set{A}$
    ($v_{\sem{A}} \notin \bigsqcap\set{A}$).  In this case, $e$ is still
    orthogonal to every term in $\bigsqcap\set{A}$, but there is no individual
    $\sem{A} \in \set{A}$ such that $e \in \sem{A}^{\Bot}$ because each one
    has an associated counter-example $v_{\sem{A}}$ ruling it out.
    
  \item The last fact follows from the above and triple orthogonal elimination
    (\cref{thm:orthogonal-toe}).
    \begin{align*}
      \left(\bigsqcap(\set{A}^{\Bot*})\right)^{\Bot\Bot}
      &=
      \left(\bigsqcup\set{A}\right)^{\Bot\Bot\Bot}
      =
      \left(\bigsqcup\set{A}\right)^{\Bot}
      =
      \bigsqcap(\set{A}^{\Bot*})
      \\
      \left(\bigsqcup(\set{A}^{\Bot*\Bot*})\right)^{\Bot}
      &=
      \bigsqcap(\set{A}^{\Bot*\Bot*\Bot*})
      =
      \bigsqcap(\set{A}^{\Bot*})
      =
      \left(\bigsqcup\set{A}\right)^{\Bot}
      \qedhere
    \end{align*}
  \end{enumerate}
\end{proof}
Take note that the missing direction in the asymmetric property (2)
($(\sem{A}\sqcap\sem{B})^\Bot \not\refines (\sem{A}^\Bot)\sqcup(\sem{B}^\Bot)$)
exactly corresponds to the direction of the de Morgan laws which is rejected by
intuitionistic logic (the negation of a conjunction does not imply the
disjunction of the negations).  Instead, we have a weakened version of (2)
presented in (3), adding additional applications of orthogonality to restore the
symmetric equality rather than an asymmetric refinement.  As with other
properties like triple orthogonal elimination, this also has a (co)\-value
restricted variant.
\begin{lemma}[Restricted de Morgan]
\label{thm:restricted-de-morgan}

For any set of pre-candidates $\set{A} \subseteq \precands$:
\begin{align*}
  \left(\bigsqcap(\set{A}^{v*\Bot* v*})\right)^{\Bot v \Bot v}
  &=
  \bigsqcap(\set{A}^{v*\Bot* v*})
  =
  \left(\bigsqcup\set{A}\right)^{v\Bot v}
  =
  \left(\bigsqcup(\set{A}^{v*\Bot* v*\Bot*})\right)^{v\Bot v}
\end{align*}
\end{lemma}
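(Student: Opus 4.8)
The plan is to treat this restricted lemma as the (co)value-restricted image of the third Orthogonal de Morgan identity (\cref{thm:orthogonal-de-morgan}), obtained by systematically threading the restriction $\blank^v$ through the refinement-lattice operations and closing everything off with restricted triple orthogonal elimination (\cref{thm:restricted-toe}). The only genuinely new ingredients are two distributivity observations: that $\blank^v$ commutes with both refinement unions and refinement intersections of a set of pre-candidates, i.e. $(\bigsqcup\set{B})^v = \bigsqcup(\set{B}^{v*})$ and $(\bigsqcap\set{B})^v = \bigsqcap(\set{B}^{v*})$. The union case is already noted in the proof of \cref{thm:complete-refinement-semi-lattice}; the intersection case follows by the identical pointwise argument, since a (co)value lies in the componentwise intersection exactly when it lies in every member.

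First I would establish the central equality, that the middle two expressions coincide: $\bigsqcap(\set{A}^{v*\Bot* v*}) = (\bigsqcup\set{A})^{v\Bot v}$. I would compute the right-hand side outside-in: push the first $v$ inside the union using union-distributivity to get $(\bigsqcup(\set{A}^{v*}))^{\Bot v}$; apply the de Morgan identity (\cref{thm:orthogonal-de-morgan}, part~1) $(\bigsqcup\set{B})^\Bot = \bigsqcap(\set{B}^{\Bot*})$ to turn the union-orthogonal into an intersection-of-orthogonals, yielding $(\bigsqcap(\set{A}^{v*\Bot*}))^v$; and finally push the outer $v$ inside using intersection-distributivity to land on $\bigsqcap(\set{A}^{v*\Bot* v*})$. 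This step is pure bookkeeping once the two distributivity facts are in hand.

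Next I would obtain the first equality, that $\bigsqcap(\set{A}^{v*\Bot* v*})$ is a fixed point of $\blank^{\Bot v\Bot v}$. Rewriting it via the equality just proved as $(\bigsqcup\set{A})^{v\Bot v}$ and setting $\sem{X} = \bigsqcup\set{A}$, the claim becomes $\sem{X}^{v\Bot v\Bot v\Bot v} = \sem{X}^{v\Bot v}$, which is exactly restricted triple orthogonal elimination (\cref{thm:restricted-toe}) applied to $\sem{X}$. For the last equality I would reapply the central equality, but to the set $\set{A}^{v*\Bot* v*\Bot*}$ in place of $\set{A}$: this rewrites $(\bigsqcup(\set{A}^{v*\Bot* v*\Bot*}))^{v\Bot v}$ as $\bigsqcap(\set{A}^{v*\Bot* v*\Bot* v*\Bot* v*})$, whose every member is $\sem{A}^{v\Bot v\Bot v\Bot v}$, and again restricted TOE (\cref{thm:restricted-toe}) collapses each member to $\sem{A}^{v\Bot v}$, recovering $\bigsqcap(\set{A}^{v*\Bot* v*})$.

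The main obstacle I anticipate is purely notational rather than conceptual: keeping the mapped-over operators ($\blank^{v*}$ and $\blank^{\Bot*}$) straight from the lattice-level operators ($\blank^v$, $\blank^\Bot$, $\bigsqcup$, $\bigsqcap$), and verifying that each distributivity and de Morgan step is applied to exactly the right object so that the exponent strings line up for the final applications of restricted TOE. No fundamentally new orthogonality reasoning beyond \cref{thm:orthogonal-de-morgan} and \cref{thm:restricted-orthogonal-negation} is required.
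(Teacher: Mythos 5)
Your proposal is correct and takes essentially the same route as the paper: the paper's proof likewise rests on exactly the three ingredients you name---de Morgan duality (\cref{thm:orthogonal-de-morgan}), distributivity of the (co)value restriction $\blank^v$ over refinement unions and intersections, and restricted triple orthogonal elimination (\cref{thm:restricted-toe})---establishing the central equality $\bigsqcap(\set{A}^{v*\Bot* v*}) = \left(\bigsqcup\set{A}\right)^{v\Bot v}$ and then collapsing the outer exponent strings with restricted TOE, including the same pointwise application of TOE inside the intersection for the final equality. The only difference is presentational: you state the two distributivity facts explicitly, whereas the paper invokes them implicitly in its equational chain.
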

\begin{proof}
  Follows from the de Morgan laws (\cref{thm:orthogonal-de-morgan}), restricted
  triple orthogonal elimination (\cref{thm:restricted-toe}), and the fact that
  the (co)\-value restriction $\blank^v$ distributes over intersection and
  unions:
  \begin{align*}
    \left(\bigsqcap(\set{A}^{v*\Bot*v*})\right)^{\Bot v\Bot v}
    &=
    \left(\bigsqcup\set{A}\right)^{v\Bot v\Bot v\Bot v}
    =
    \left(\bigsqcup\set{A}\right)^{v\Bot v}
    =
    \bigsqcap(\set{A}^{v*\Bot*v*})
    \\
    \left(\bigsqcup(\set{A}^{v*\Bot*v*\Bot*})\right)^{v\Bot v}
    &=
    \bigsqcap(\set{A}^{v*\Bot*v*\Bot*v*\Bot*v*})
    =
    \bigsqcap(\set{A}^{v*\Bot*v})
    =
    \left(\bigsqcup\set{A}\right)^{v\Bot v}
    \qedhere
  \end{align*}
\end{proof}

With these de Morgan properties of intersection and union, we can be more
specific about how the subtype lattice operations $\bigwedge$ and $\bigvee$ on
pre-candidates are lifted into the complete versions $\bigcurlywedge$ and
$\bigcurlyvee$ that form the subtype lattice among reducibility candidates.
\begin{lemma}
\label{thm:subtype-intersection-orthogonality}
\label{thm:subtype-union-orthogonality}

Let $\set{A} \subseteq \redcands$ be any set of reducibility candidates.
\begin{align*}
  \bigcurlywedge\set{A}
  &=
  \left(\bigwedge\set{A}\right)^{v\Bot v\Bot}
  =
  \left(
    \left(\bigcap\{\sem{A}^+ \mid \sem{A} \in \set{A}\}\right)^{v\Bot v\Bot}
    ,
    \left(\bigcap\{\sem{A}^+ \mid \sem{A} \in \set{A}\}\right)^{v\Bot}
  \right)
  \\
  \bigcurlyvee\set{A}
  &=
  \left(\bigvee\set{A}\right)^{v\Bot v\Bot}
  =
  \left(
    \left(\bigcap\{\sem{A}^- \mid \sem{A} \in \set{A}\}\right)^{v\Bot}
    ,
    \left(\bigcap\{\sem{A}^- \mid \sem{A} \in \set{A}\}\right)^{v\Bot v\Bot}
  \right)
\end{align*}
\end{lemma}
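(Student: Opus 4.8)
The plan is to reduce both equalities to a single key identity about reducibility candidates and then propagate it through the orthogonality swaps. Throughout, write $P \defeq \bigcap\{\sem{A}^+ \mid \sem{A} \in \set{A}\}$ and $N \defeq \bigcup\{\sem{A}^- \mid \sem{A} \in \set{A}\}$, so that $\bigwedge\set{A} = (P, N)$ by \cref{def:pre-subtype-lattice}. I would treat the meet case $\bigcurlywedge\set{A}$ in detail; the join case is perfectly dual, obtained by swapping the roles of the $+$ and $-$ components.

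The central observation is the identity $N^{v\Bot} = P$. To see it, I would use that the value restriction $\blank^v$ distributes over unions and that orthogonality turns a union into an intersection of orthogonals (the single-set instance of \cref{thm:orthogonal-de-morgan}), giving
\begin{math}
  N^{v\Bot}
  = \left(\bigcup\{\sem{A}^{-v}\}\right)^\Bot
  = \bigcap\{\sem{A}^{-v\Bot}\}.
\end{math}
This is the only place where the hypothesis $\set{A} \subseteq \redcands$ is used: since each $\sem{A}$ is a reducibility candidate, it is a fixed point of $\blank^{v\Bot}$ (\cref{thm:fixed-point-candidate}), and reading off components gives $\sem{A}^{-v\Bot} = \sem{A}^+$. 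Hence $\bigcap\{\sem{A}^{-v\Bot}\} = \bigcap\{\sem{A}^+\} = P$, establishing $N^{v\Bot} = P$. I expect this to be the main obstacle: everything else is bookkeeping, but this step silently fuses de Morgan duality, distributivity of the restriction over unions, and the fixed-point characterization of candidates, and it is exactly where soundness-plus-completeness (rather than mere soundness) of the $\sem{A}$ is indispensable.

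With $N^{v\Bot} = P$ in hand, I would compute both sides of the claimed equalities. For the orthogonal form, I apply $\blank^{v\Bot}$ twice, swapping the two components each time: first $(P,N)^{v\Bot} = (N^{v\Bot}, P^{v\Bot}) = (P, P^{v\Bot})$, and then $(P, P^{v\Bot})^{v\Bot} = (P^{v\Bot v\Bot}, P^{v\Bot})$, so that $(\bigwedge\set{A})^{v\Bot v\Bot} = (P^{v\Bot v\Bot}, P^{v\Bot})$, which is precisely the right-hand pair in the statement. For the definition side, recall $\bigcurlywedge\set{A} = \NegCand(\bigwedge\set{A})$ by \cref{thm:subtype-lattice}, and by \cref{def:neg-candidate} this depends only on the negative component, namely $\NegCand(P,N) = (N^{v\Bot v\Bot v\Bot}, N^{v\Bot v\Bot})$. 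Substituting $N^{v\Bot} = P$ collapses this to $(P^{v\Bot v\Bot}, P^{v\Bot})$, matching the other two expressions and closing the meet case.

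Finally, for the join case I would run the dual argument: set $P' \defeq \bigcup\{\sem{A}^+\}$ and $N' \defeq \bigcap\{\sem{A}^-\}$, so $\bigvee\set{A} = (P', N')$, establish the dual identity $P'^{v\Bot} = N'$ (using $\sem{A}^{+v\Bot} = \sem{A}^-$ for reducibility candidates), and then observe that both $(\bigvee\set{A})^{v\Bot v\Bot}$ and $\PosCand(\bigvee\set{A}) = \bigcurlyvee\set{A}$ reduce to $(N'^{v\Bot}, N'^{v\Bot v\Bot})$ under the same two orthogonality swaps. Since no new ideas are needed beyond the dualized key identity, this completes the proof.
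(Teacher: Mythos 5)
Your proposal is correct and follows essentially the same route as the paper's proof: both rest on the de Morgan law $(\bigcup\set{A}^-)^{v\Bot} = \bigcap(\set{A}^{-v*\Bot*})$, the fixed-point property $\sem{A} = \sem{A}^{v\Bot}$ of reducibility candidates (which gives $\sem{A}^{-v\Bot} = \sem{A}^+$ and $\sem{A}^{+v\Bot} = \sem{A}^-$), and the fact that $\NegCand$/$\PosCand$ depend only on one component. Isolating the identity $N^{v\Bot} = P$ up front is merely a cleaner packaging of the same equational chain the paper writes inline.
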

\begin{proof}
  Follows from de Morgan duality
  (\cref{thm:orthogonal-de-morgan,thm:restricted-de-morgan}) and the fact that
  reducibility candidates are fixed points of $\blank^{v\Bot}$
  (\cref{thm:fixed-point-candidate}).  Let
  $\set{A}^+ = \{\sem{A}^+ \mid \sem{A} \in \set{A}\}$ and
  $\set{A}^- = \{\sem{A}^- \mid \sem{A} \in \set{A}\}$ in the following:
  \begin{equation*}
    \arraycolsep=0.5ex
    \def\arraystretch{1.25}
  \begin{array}[b]{rclrcl}
    \bigcurlywedge\set{A}
    &=&
    \NegCand\bigwedge\set{A}
    &
    \bigcurlyvee\set{A}
    &=&
    \PosCand\bigvee\set{A}
    \\
    &=&
    (
      (\bigcup\set{A}^-)^{v\Bot v\Bot v\Bot}
      ,
      (\bigcup\set{A}^-)^{v\Bot v\Bot}
    )
    &
    &=&
    (
      (\bigcup\set{A}^+)^{v\Bot v\Bot}
      ,
      (\bigcup\set{A}^+)^{v\Bot v\Bot v\Bot}
    )
    \\
    &=&
    (
      (\bigcap\set{A}^{-v*\Bot*})^{v\Bot v\Bot}
      ,
      (\bigcap\set{A}^{-v*\Bot*})^{v\Bot}
    )
    &
    &=&
    (
      (\bigcap\set{A}^{+v*\Bot*})^{v\Bot}
      ,
      (\bigcap\set{A}^{+v*\Bot*})^{v\Bot v\Bot}
    )
    \\
    &=&
    (
      (\bigcap\set{A}^{+})^{v\Bot v\Bot}
      ,
      (\bigcap\set{A}^{+})^{v\Bot}
    )
    &
    &=&
    (
      (\bigcap\set{A}^{-})^{v\Bot}
      ,
      (\bigcap\set{A}^{-})^{v\Bot v\Bot}
    )
    \\
    &=&
    (
      (\bigcap\set{A}^+)^{v\Bot v\Bot}
      ,
      (\bigcap\set{A}^{-v*\Bot*})^{v\Bot}
    )
    &
    &=&
    (
      (\bigcap\set{A}^{+v*\Bot*})^{v\Bot}
      ,
      (\bigcap\set{A}^-)^{v\Bot v\Bot}
    )
    \\
    &=&
    (
      (\bigcap\set{A}^+)^{v\Bot v\Bot}
      ,
      (\bigcup\set{A}^-)^{v\Bot v\Bot}
    )
    &
    &=&
    (
      (\bigcup\set{A}^+)^{v\Bot v\Bot}
      ,
      (\bigcap\set{A}^-)^{v\Bot v\Bot}
    )
    \\
    &=&
    (
      \bigwedge\set{A}
    )^{v\Bot v\Bot}
    &
    &=&
    (
      \bigvee\set{A}
    )^{v\Bot v\Bot}
  \end{array}
  \qedhere
  \end{equation*}
\end{proof}

\subsection{(Co)Induction and (Co)Recursion}
\label{sec:co-inductive-candidates}

We now examine how the (co)\-inductive types $\Nat$ and $\Stream A$ are properly
defined as reducibility candidates in this orthogonality-based, symmetric model.
As shorthand, we will use these two functions on reducibility candidates
\begin{align*}
  N &: \redcands \to \redcands
  &
  S &: \redcands \to \redcands \to \redcands
  \\
  N(\sem{C})
  &\defeq
  \PosCand(\{\Zero\} \vee \Succ(\sem{C}))
  &
  S_{\sem{A}}(\sem{C})
  &\defeq
  \NegCand(\Head(\sem{A}) \wedge \Tail(\sem{C}))
\end{align*}
defined in terms of these operations that lift constructors and destructors onto
candidates:
\begin{align*}
  \Succ(\sem{C}) &\defeq \{\Succ V \mid V \in \sem{C}\}
  &
  \Tail(\sem{C}) &\defeq \{\Tail E \mid E \in \sem{C}\}
  &
  \Head(\sem{A}) &\defeq \{\Head E \mid E \in \sem{A}\}
\end{align*}
These capture the (co)\-inductive steps for the iterative definitions of
$\den{\Nat}_i$ and $\den{\Stream A}_i$:
\begin{align*}
  \den{\Nat}_{i+1} &= N(\den{\Nat}_i)
  &
  \den{\Stream A}_{i+1} &= S_{\den{A}}(\den{\Stream A}_i)
\end{align*}
They also capture the all-at-once definitions of $\den{\Nat}$ and
$\den{\Stream A}$ as
\begin{align*}
  \den{\Nat}
  &=
  \bigcurlywedge\{\sem{C} \in \redcands \mid N(\sem{C}) \leq \sem{C}\}
  &
  \den{\Stream A}
  &=
  \bigcurlyvee\{\sem{C} \in \redcands \mid \sem{C} \leq S_{\den{A}}(\sem{C})\}
\end{align*}
due to the fact that their closure conditions (under $\Zero,\Succ$ and
$\Head,\Tail$, respectively) are equivalent to these subtyping conditions.
\begin{lemma}
\label{thm:nat-subtype-closure}
\label{thm:stream-subtype-closure}

For all reducibility candidates $\sem{C}$
\begin{enumerate}
\item $N(\sem{C}) \leq \sem{C}$ if and only if $\Zero \in \sem{C}$ and
  $\Succ V \in \sem{C}$ (for all $V \in \sem{C}$).
\item $\sem{C} \leq S_{\sem{A}}(\sem{C})$ if and only if $\Head E \in \sem{C}$
  (for all $E \in \sem{A}$) and $\Tail E \in \sem{C}$ (for all $E \in \sem{C}$).
\end{enumerate}
\end{lemma}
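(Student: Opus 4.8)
The plan is to prove both biconditionals directly from the defining properties of the completions $\PosCand$ and $\NegCand$ (\cref{thm:pos-completion}), unfolding the subtype order (\cref{def:subtype-order}) to keep the two opposite inclusions it bundles together apart. The first thing I would record is that the two pre-candidates being completed are sound and that each has one trivial side. Since $\{\Zero\}$ and $\Succ(\sem{C})$ are term-set pre-candidates, their subtype join $\{\Zero\}\vee\Succ(\sem{C})$ has an empty co\-term component and a term component $\{\Zero\}\cup\{\Succ V \mid V\in\sem{C}\}$ consisting entirely of values; dually, $\Head(\sem{A})\wedge\Tail(\sem{C})$ has an empty term component and a co\-term component $\{\Head E \mid E\in\sem{A}\}\cup\{\Tail E \mid E\in\sem{C}\}$ consisting entirely of co\-values. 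Each is therefore sound (the nonempty side pairs against nothing), and its $(\text{co})$value restriction recovers the constructors/destructors in play, which is exactly what lets me feed it to $\PosCand$/$\NegCand$ and read off the extension property cleanly.

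For part 1, forward direction, assuming $N(\sem{C})\leq\sem{C}$, I would use that $N(\sem{C})=\PosCand(\{\Zero\}\vee\Succ(\sem{C}))$ is a $(\text{co})$value extension (\cref{thm:pos-completion}, clause 2) to get $\Zero\in N(\sem{C})$ and $\Succ V\in N(\sem{C})$ for every $V\in\sem{C}$; the positive half of $\leq$, namely $N(\sem{C})^+\subseteq\sem{C}^+$, then transports these into $\sem{C}$. For the backward direction, assuming $\Zero\in\sem{C}$ and $\Succ V\in\sem{C}$ for all $V\in\sem{C}$, I would observe $(\{\Zero\}\vee\Succ(\sem{C}))^v\refines\sem{C}$ (its values are in $\sem{C}^+$ by hypothesis, and its co\-value restriction is empty, so the co\-term half of $\refines$ holds trivially), whence the minimality property (\cref{thm:pos-completion}, clause 3) yields $N(\sem{C})=\PosCand(\{\Zero\}\vee\Succ(\sem{C}))\leq\sem{C}$ in one step.

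Part 2 is the exact mirror image, obtained by appealing to the negative clauses of \cref{thm:pos-completion} instead of re-deriving anything: $\NegCand$ replaces $\PosCand$, co\-values replace values, and the roles of the two inclusions inside $\leq$ are swapped, so that now it is the negative half $\sem{C}^-\supseteq S_{\sem{A}}(\sem{C})^-$ that carries the destructors $\Head E$ and $\Tail E$ from $S_{\sem{A}}(\sem{C})$ back into $\sem{C}$, and the greatest-candidate property gives $\sem{C}\leq S_{\sem{A}}(\sem{C})$ from $(\Head(\sem{A})\wedge\Tail(\sem{C}))^v\refines\sem{C}$.

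The only thing that requires genuine care, and which I expect to be the main (if minor) obstacle, is the opposite orientation of the term and co\-term inclusions inside the subtype order: $\sem{A}\leq\sem{B}$ means more terms but fewer co\-terms, so in part 1 the conclusion must be pulled through the \emph{positive} component whereas in part 2 it passes through the \emph{negative} one. Getting this bookkeeping straight, together with noting that the empty side of each pre-candidate makes the irrelevant half of every $\refines$ hold vacuously, is where all the subtlety lies; once it is set up, each of the four implications reduces to a single invocation of the completion lemma.
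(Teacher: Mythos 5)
Your proposal is correct and follows essentially the same route as the paper's own proof: both directions are discharged by the defining properties of $\PosCand$ and $\NegCand$ (\cref{thm:pos-completion}), with the (co)\-value extension property (clause 2) plus the appropriate half of the subtype order giving the ``only if'' direction, and the least/greatest-candidate property (clause 3) applied to the refinement $(\{\Zero\}\vee\Succ(\sem{C}))^v\refines\sem{C}$ (resp.\ $(\Head(\sem{A})\wedge\Tail(\sem{C}))^v\refines\sem{C}$) giving the ``if'' direction. Your explicit bookkeeping of soundness of the joined pre-candidates and of the vacuously empty components is a detail the paper leaves implicit, but it changes nothing in substance.
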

\begin{proof}
  The ``only if'' directions follow directly from
  \cref{thm:pos-completion,thm:neg-completion} by the definitions of $N$,
  $S_{\sem{A}}$, and subtyping.  That is, we know that $\Zero \in N(\sem{C})$
  and $\Succ V \in N(\sem{C})$ (for all $V \in \sem{C}$) by definition of $N$ in
  terms of $\PosCand$, and thus they must be in $\sem{C} \geq N(\sem{C})$ by
  subtyping.  Similarly,
  $\Head E, \Tail F \in S_{\sem{A}}(\sem{C}) \leq \sem{C}$ (for all
  $E \in \sem{A}$ and $F \in \sem{C}$) by subtyping and the definition of $S$ in
  terms of $\NegCand$.

  The ``if'' direction follows from the universal properties of $\PosCand$ and
  $\NegCand$ (\cref{thm:pos-completion,thm:neg-completion}): for any
  reducibility candidate $\sem{C} \extends \sem{B}^v$
  \begin{math}
    \PosCand(\sem{B})
    \leq
    \sem{C}
    \leq
    \NegCand(\sem{B})
    .
  \end{math}
  Now note that
  \begin{align*}
    N(\sem{C})
    &=
    \PosCand(\{\Zero\} \cup \{\Succ V \mid V \in \sem{C}\},\{\})
    \\
    S_{\sem{A}}(\sem{C})
    &=
    \NegCand
    (\{\},\{\Head E \mid E \in \sem{A}\} \cup \{\Tail F \mid F \in \sem{C}\})
  \end{align*}
  Therefore, if $\Zero \in \sem{C}$ and $\Succ V \in \sem{C}$ (for all
  $V \in \sem{C}$) then
  \begin{equation*}
    N(\sem{C})
    \leq
    \sem{C}
    \extends
    (\{\Zero\} \cup \{\Succ V \mid V \in \sem{C}\},\{\})
  \end{equation*}
  Likewise if $\Head E \in \sem{C}$ and $\Tail F \in \sem{C}$ (for all
  $E \in \sem{A}$ and $F \in \sem{C}$) then
  \begin{equation*}
    (\{\},\{\Head E \mid E \in \sem{A}\} \cup \{\Tail F \mid F \in \sem{C}\})
    \refines
    \sem{C}
    \leq
    S_{\sem{A}}(\sem{C})
    \qedhere
  \end{equation*}
\end{proof}

First, consider the model of the $\Nat$ type in terms of the infinite union of
approximations: $\bigcurlyvee_{i=0}^\infty\den{\Nat}_i$.  This reducibility
candidate should contain safe instances of the recursor.  The reason it does is
because the presence of a recursor is preserved by the $N$ stepping operation on
reducibility candidates, and so it must remain in the final union
$\bigcurlyvee_{i=0}^\infty\den{\Nat}_i$ because it is in each approximation
$\den{\Nat}_i$.
\begin{lemma}[$\Nat$ Recursion Step]
\label{thm:nat-recursion-step}

For any reducibility candidates $\sem{B}$ and $\sem{C}$,
\begin{equation*}
  \Rec \{\Zero \to v \mid \Succ x \to y.w\} \With E
  \in
  N(\sem{C})
\end{equation*}
for all $E \in \sem{B}$ whenever the following conditions hold:
\begin{itemize}
\item $v \in \sem{B}$,
\item $w\subs{\asub{x}{V},\asub{y}{W}} \in \sem{B}$ for all $V \in \sem{C}$ and
  $W \in \sem{B}$, and
\item
\begin{math}
  \Rec \{\Zero \to v \mid \Succ x \to y.w\} \With E
  \in
  \sem{C}
\end{math}
for all $E \in \sem{B}$.
\end{itemize}
\end{lemma}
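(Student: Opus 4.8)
The plan is to show membership in $N(\sem{C}) = \PosCand(\{\Zero\} \vee \Succ(\sem{C}))$ by exploiting strong positive completeness (\cref{thm:pos-completeness}), which characterizes the co\-values of a positive completion purely by how they test against the generating values. Since the positive part of $\{\Zero\} \vee \Succ(\sem{C})$ consists exactly of the values $\Zero$ and $\Succ V'$ for $V' \in \sem{C}$, it suffices to prove that, for every $E \in \sem{B}$, the recursor co\-value $r_E \defeq \Rec\{\Zero \to v \mid \Succ x \to y.w\} \With E$ forms a safe command against each such value: that is, $\cut{\Zero}{r_E} \in \Bot$ and $\cut{\Succ V'}{r_E} \in \Bot$ for all $V' \in \sem{C}$. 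Establishing these two facts immediately yields $r_E \in N(\sem{C})$ by \cref{thm:pos-completeness}. The whole argument is thus an orthogonality/expansion computation, with the two cases mirroring the $\beta_{\Zero}$ and $\beta_{\Succ}$ reduction rules.

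The base case is routine. Starting from $\cut{\Zero}{r_E}$, one step of $\beta_{\Zero}$ gives $\cut{v}{E}$. Since $v \in \sem{B}$ and $E \in \sem{B}$ by hypothesis, soundness of the reducibility candidate $\sem{B}$ ensures $\cut{v}{E} \in \Bot$. Expansion (\cref{thm:safety-expansion}) then lifts this back to $\cut{\Zero}{r_E} \in \Bot$.

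The successor case is the substantive one and is where the self\-referential third hypothesis does its work. Applying $\beta_{\Succ}$ to $\cut{\Succ V'}{r_E}$ produces the command $\cut{\outp\alpha\cut{V'}{\Rec\{\dots\}\With\alpha}}{\inp y\cut{w\subst{x}{V'}}{E}}$, so by expansion it is enough to show this command lies in $\Bot$, which I would get from soundness of $\sem{B}$ once both of its sides are shown to inhabit $\sem{B}$. For the co\-term side I would invoke Activation (\cref{thm:activation-sound}, part~2) with candidate $\sem{B}$: for every $W \in \sem{B}$ the command $\cut{w\subst{x}{V'}\subst{y}{W}}{E}$ is safe, because the second hypothesis gives $w\subs{\asub{x}{V'},\asub{y}{W}} \in \sem{B}$ (using $V' \in \sem{C}$ and $W \in \sem{B}$) and $E \in \sem{B}$, so soundness of $\sem{B}$ applies; hence $\inp y\cut{w\subst{x}{V'}}{E} \in \sem{B}$. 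For the term side I would invoke Activation (\cref{thm:activation-sound}, part~1) with candidate $\sem{B}$ as well: for every $F \in \sem{B}$ the command $\cut{V'}{\Rec\{\dots\}\With F}$ is safe, since the third hypothesis says $\Rec\{\dots\}\With F \in \sem{C}$ and $V' \in \sem{C}$, so soundness of $\sem{C}$ applies; hence $\outp\alpha\cut{V'}{\Rec\{\dots\}\With\alpha} \in \sem{B}$.

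The main obstacle is precisely this last step: the recursive predecessor term must be placed back into $\sem{B}$, and the only way to do so is to feed it an arbitrary $\sem{B}$\-continuation and observe that the inner command is a recursor tested against a value $V' \in \sem{C}$, which is exactly the content of the third (inductive) hypothesis. Getting the candidate bookkeeping right here---routing the predecessor through $\sem{C}$ while routing the inductive result $w$ through $\sem{B}$---is the crux; everything else is a mechanical combination of Activation, soundness, and expansion. Once both sides are in $\sem{B}$, soundness of $\sem{B}$ closes the successor case, completing the two obligations demanded by \cref{thm:pos-completeness} and hence the lemma.
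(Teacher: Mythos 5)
Your proposal is correct and follows essentially the same route as the paper's own proof: reduce membership in $N(\sem{C})$ to the two orthogonality tests against $\Zero$ and $\Succ V'$ via strong positive completeness, discharge the base case by soundness of $\sem{B}$ plus expansion, and in the successor case place the $\tmu$-side in $\sem{B}$ by Activation with the second hypothesis and the $\mu$-wrapped recursive predecessor in $\sem{B}$ by Activation with the third hypothesis and soundness of $\sem{C}$. The candidate bookkeeping you identify as the crux is exactly the sequence of facts the paper enumerates, so no gap remains.
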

\begin{proof}
  Note that the accumulator continuation $E$ changes during the successor step,
  so we will need to generalize over it.  As shorthand, let
  $E^{\Rec}_{E} \defeq \Rec \{\Zero \to v \mid \Succ x \to y.w\} \With E$ where
  $E'$ stands for the given continuation accumulator.  It suffices to show (via
  \cref{thm:pos-completeness}) that for all $E \in \sem{B}$,
  $\cut{\Zero}{E^{\Rec}_{E}}$ and $\cut{\Succ V}{E^{\Rec}_{E}}$ (for all
  $V \in \sem{C}$).  Observe that, given any $E \in \sem{B}$ and
  $V \in \sem{C}$, we have these two possible reductions:
  \begin{align*}
    \cut{\Zero}{E^{\Rec}_{E}}
    &\srd
    \cut{v}{E}
    &
    \cut{\Succ V}{E^{\Rec}_{E}}
    &\srd
    \cut
    {\outp\alpha \cut{V}{E^{\Rec}_{\alpha}}}
    {\inp y \cut{w\subst{x}{V}}{E}}
  \end{align*}
  Now, we note the following series facts:
  \begin{enumerate}
  \item $\cut{v}{E} \in \Bot$ because $v, E \in \sem{B}$ by assumption.
  \item $w\subs{\asub{x}{V},\asub{y}{W}} \in \sem{B}$ for all $W \in \sem{B}$
    because $V \in \sem{C}$.
  \item $\cut{w\subs{\asub{x}{V},\asub{y}{W}}}{E} \in \Bot$ for all
    $W \in \sem{B}$.
  \item $\inp y \cut{w\subs{\asub{x}{V}}}{E} \in \sem{B}$ by activation
    (\cref{thm:activation-sound}).
  \item $\cut{V}{E^{\Rec}_{E}} \in \Bot$ for all $E \in \sem{B}$ by the
    assumption $V, E^{\Rec}_{E} \in \sem{C}$.
  \item $\outp\alpha \cut{V}{E^{\Rec}_{\alpha}} \in \sem{B}$ by activation
    (\cref{thm:activation-sound}).
  \item
    \begin{math}
      \cut
      {\outp\alpha \cut{V}{E^{\Rec}_{\alpha}}}
      {\inp y \cut{w\subst{x}{V}}{E}}
      \in
      \Bot
      .
    \end{math}
  \end{enumerate}
  Therefore, both $\cut{V}{E^{\Rec}_{E}}$ reduces to to a command in $\Bot$ for
  any $V \in N(\sem{C})$.  It follows from
  \cref{thm:safety-expansion,thm:pos-completeness} that
  \begin{math}
    E^{\Rec}_{E}
    \in
    N(\sem{C})
    .
  \end{math}
  %
\end{proof}

\begin{lemma}[$\Nat$ Recursion]
\label{thm:nat-recursion}

For any reducibility candidate $\sem{B}$, if
\begin{itemize}
\item $v \in \sem{B}$,
\item $w\subs{\asub{x}{V},\asub{y}{W}} \in \sem{B}$ for all
  $V \in \bigcurlyvee_{i=0}^\infty\den{\Nat}_i$ and $W \in \sem{B}$, and
\item $E \in \sem{B}$,
\end{itemize}
then
\begin{math}
  \Rec \{\Zero \to v \mid \Succ x \to y.w\} \With E
  \in
  \bigcurlyvee_{i=0}^\infty\den{\Nat}_i
  .
\end{math}
\end{lemma}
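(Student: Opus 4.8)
The plan is to show that the recursor co\-value belongs to \emph{every} finite approximation $\den{\Nat}_i$ by induction on $i$, and then transfer this membership to the subtype\-union limit $\bigcurlyvee_{i=0}^\infty\den{\Nat}_i$. Throughout I abbreviate $E^{\Rec}_{E} \defeq \Rec \{\Zero \to v \mid \Succ x \to y.w\} \With E$ as in the previous lemma, and note that $E^{\Rec}_{E}$ is a co\-value whenever $E$ is. The goal is thus to establish that $E^{\Rec}_{E} \in \den{\Nat}_i$ for every $i$ and every $E \in \sem{B}$.

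For the base case $i = 0$, recall that $\den{\Nat}_0 = \PosCand\{\}$ is the least reducibility candidate. Applying strong positive completeness (\cref{thm:pos-completeness}) to the trivially sound empty candidate $(\{\},\{\})$, a co\-value $E$ lies in $\PosCand\{\}$ exactly when $\cut{V}{E} \in \Bot$ for all $V \in \{\}$, which holds vacuously. Hence every co\-value, and in particular $E^{\Rec}_{E}$, already inhabits $\den{\Nat}_0$.

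For the inductive step, assume $E^{\Rec}_{E} \in \den{\Nat}_i$ for all $E \in \sem{B}$, and apply the Recursion Step lemma (\cref{thm:nat-recursion-step}) with $\sem{C} \defeq \den{\Nat}_i$. Its three hypotheses are discharged as follows: the first, $v \in \sem{B}$, is given directly; the third is exactly the induction hypothesis; and the second, namely $w\subs{\asub{x}{V},\asub{y}{W}} \in \sem{B}$ for all $V \in \den{\Nat}_i$ and $W \in \sem{B}$, follows because the overall hypothesis supplies this for all $V \in \bigcurlyvee_{j=0}^\infty\den{\Nat}_j$, and since $\den{\Nat}_i \leq \bigcurlyvee_{j=0}^\infty\den{\Nat}_j$ (\cref{thm:subtype-lattice}) we have $\den{\Nat}_i^+ \subseteq (\bigcurlyvee_{j=0}^\infty\den{\Nat}_j)^+$, so each such $V$ qualifies. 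The lemma then delivers $E^{\Rec}_{E} \in N(\den{\Nat}_i) = \den{\Nat}_{i+1}$ for all $E \in \sem{B}$, closing the induction.

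Finally I lift the result to the limit. Since $\bigcurlyvee_{i=0}^\infty\den{\Nat}_i = \PosCand(\bigvee_{i=0}^\infty\den{\Nat}_i)$ (\cref{thm:subtype-lattice}) and $\bigvee_{i=0}^\infty\den{\Nat}_i$ is sound (\cref{thm:sound-subtype-lattice}), strong positive completeness (\cref{thm:pos-completeness}) reduces membership of $E^{\Rec}_{E}$ to verifying $\cut{V}{E^{\Rec}_{E}} \in \Bot$ for every $V \in (\bigvee_{i=0}^\infty\den{\Nat}_i)^+ = \bigcup_{i=0}^\infty\den{\Nat}_i^+$. Any such $V$ lies in some $\den{\Nat}_j^+$, and the induction has placed the co\-value $E^{\Rec}_{E}$ in $\den{\Nat}_j^-$, so soundness of the reducibility candidate $\den{\Nat}_j$ yields $\cut{V}{E^{\Rec}_{E}} \in \Bot$; therefore $E^{\Rec}_{E} \in \bigcurlyvee_{i=0}^\infty\den{\Nat}_i$. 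The main obstacle is precisely this passage from the finite approximations to the limit: naive set membership does not transfer across a subtype union, because the co\-term component of $\bigcurlyvee$ is an orthogonal closure rather than a plain intersection. The argument must therefore route through the completeness characterization of $\PosCand$ at both ends, and must relay the hypothesis on $w$ downward to each approximation via $\den{\Nat}_i \leq \bigcurlyvee_{j=0}^\infty\den{\Nat}_j$.
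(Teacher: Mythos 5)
Your proposal is correct and follows essentially the same route as the paper's proof: induction on $i$ via the Recursion Step lemma (\cref{thm:nat-recursion-step}), with the hypothesis on $w$ relayed to each approximation through $\den{\Nat}_i \leq \bigcurlyvee_{j=0}^\infty\den{\Nat}_j$, followed by lifting the co\-value $E^{\Rec}_E$ into the limit. The only cosmetic difference is the final step, where the paper directly invokes the co\-value extension property of $\PosCand$ (\cref{thm:pos-completion}) while you re-derive the same fact from strong positive completeness (\cref{thm:pos-completeness}) together with soundness of each $\den{\Nat}_j$; both are valid.
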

\begin{proof}
  By induction on $i$,
  \begin{math}
    \Rec \{\Zero \to v \mid \Succ x \to y.w\} \With E'
    \in
    \den{\Nat}_i
  \end{math}
  for all $E' \in \sem{B}$:
  \begin{itemize}
  \item ($0$) $\den{\Nat}_0 = \PosCand\{\}$ is the least reducibility candidate
    w.r.t subtyping, \ie with the fewest terms and the most co\-terms, so
    $\Rec \{\Zero \to v \mid \Succ x \to y.w\} \With E' \in \den{\Nat}_0$
    trivially.
  \item ($i+1$) Assume the inductive hypothesis:
    \begin{math}
      \Rec \{\Zero \to v \mid \Succ x \to y.w\} \With E'
      \in
      \den{\Nat}_i
    \end{math}
    for all $E' \in \sem{B}$.  Applying \cref{thm:nat-recursion-step} to
    $\den{\Nat}_i$, we have (for all $E' \in \sem{B}$):
    \begin{align*}
      \Rec \{\Zero \to v \mid \Succ x \to y.w\} \With E'
      \in
      N(\den{\Nat}_{i+1})
      =
      \den{\Nat}_{i+1}
    \end{align*}
  \end{itemize}
  Thus, we know that the least upper bound of all $\den{\Nat}_i$ in the subtype
  lattice (\cref{thm:subtype-lattice}) contains the instance of this recursor
  with $E' = E \in \sem{B}$ because it is a co\-value
  (\cref{thm:pos-completion}):
  \begin{equation*}
    \Rec \{\Zero \to v \mid \Succ x \to y.w\} \With E
    \in
    \bigcurlyvee_{i=0}^\infty \den{\Nat}_i
    \qedhere
  \end{equation*}
\end{proof}



Showing that the union $\bigcurlyvee_{i=0}^\infty\den{\Nat}_i$ is closed under
the $\Succ$ constructor is more difficult.  The simple union
$\bigvee_{i=0}^\infty\den{\Nat}_i$ is clearly closed under $\Succ$: every value
in $\bigvee_{i=0}^\infty\den{\Nat}_i$ must come from some individual
approximation $\den{\Nat}_n$, so its successor is in the next one
$\den{\Nat}_{n+1}$.  However, $\bigcurlyvee_{i=0}^\infty\den{\Nat}_i$ is not
just a simple union; it has been completed by $\PosCand$, so it
might---hypothetically---contain \emph{more} terms which do not come from any
individual $\den{\Nat}_n$.  Thankfully, this does not happen.  It turns out the
two unions are one in the same---the $\PosCand$ completion cannot add anything
more because of infinite recursors which can inspect numbers of any size---which
lets us show that $\bigcurlyvee_{i=0}^\infty\den{\Nat}_i$ is indeed closed under
$\Succ$.
\begin{lemma}[Nat Choice]
\label{thm:nat-choice}

\begin{math}
  \bigcurlyvee_{i=0}^\infty\den{\Nat}_i
  =
  \bigvee_{i=0}^\infty\den{\Nat}_i
  .
\end{math}
\end{lemma}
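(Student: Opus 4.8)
The plan is to prove the two candidates coincide by showing that the uncompleted subtype union $\bigvee_{i=0}^\infty\den{\Nat}_i$ is \emph{already} a reducibility candidate, so that the completed union cannot differ from it. By the definition of the reducibility union we have $\bigcurlyvee_{i=0}^\infty\den{\Nat}_i = \PosCand(\bigvee_{i=0}^\infty\den{\Nat}_i)$, and since $\PosCand$ is the identity on reducibility candidates (idempotency, \cref{thm:pos-idempotency}), it suffices to verify that $\bigvee_{i=0}^\infty\den{\Nat}_i$ is sound and complete. Soundness is immediate from the sound subtype lattice (\cref{thm:sound-subtype-lattice}), as each $\den{\Nat}_i$ is sound. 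By the fixed-point characterization (\cref{thm:fixed-point-candidate}), what remains is to establish the two completeness conditions for $\bigvee_{i=0}^\infty\den{\Nat}_i$, whose positive component is $\bigcup_i \den{\Nat}_i^+$ and whose negative component is $\bigcap_i \den{\Nat}_i^-$.

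First I would dispatch \emph{negative completeness}, which is routine. Suppose a co\-term $e$ satisfies $\cut{V}{e}\in\Bot$ for every $V\in\bigcup_i\den{\Nat}_i^+$. Fixing any index $i$, this holds in particular for every $V\in\den{\Nat}_i^+\subseteq\bigcup_j\den{\Nat}_j^+$, so negative completeness of the reducibility candidate $\den{\Nat}_i$ gives $e\in\den{\Nat}_i^-$. As $i$ was arbitrary, $e\in\bigcap_i\den{\Nat}_i^- = (\bigvee_i\den{\Nat}_i)^-$, as required.

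The crux is \emph{positive completeness}: given a term $v$ with $\cut{v}{E}\in\Bot$ for all $E\in\bigcap_i\den{\Nat}_i^-$, I must show $v\in\bigcup_i\den{\Nat}_i^+$. This cannot be reduced to completeness of any single $\den{\Nat}_i$, because the hypothesis only tests $v$ against the \emph{intersection} $\bigcap_i\den{\Nat}_i^-$, which is strictly weaker than testing against an individual $\den{\Nat}_i^-$. Instead I would exhibit one distinguished co\-value $E^\ast\in\bigcap_i\den{\Nat}_i^-$ that ``inspects numbers of any size''---concretely, a recursor continuation such as $E^\ast = \Rec\{\Zero\to\Zero\mid\Succ x\to y.\Succ y\}\With\alpha$ that faithfully reconstructs whatever numeral it is handed and forwards it to a fresh $\alpha$. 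A short induction using \cref{thm:nat-recursion-step} shows $E^\ast$ lies in each approximation $\den{\Nat}_i$, hence in their intersection. The hypothesis then forces $\cut{v}{E^\ast}\in\Bot$, so this command reduces to a final state $\cut{\Zero}{\alpha}$ or $\cut{\Succ W}{\alpha}$ (\cref{def:termination-pole}). Because $E^\ast$ merely copies its input, a terminating run bounds the ``height'' of the numeral that $v$ produces by some finite $n$; tracing the reduction back via expansion (\cref{thm:safety-expansion}) and strong positive completeness (\cref{thm:pos-completeness}) places $v$ in that finite approximation $\den{\Nat}_n^+\subseteq\bigcup_i\den{\Nat}_i^+$.

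The main obstacle is precisely this last step: converting the operational fact that $\cut{v}{E^\ast}$ halts into the semantic statement $v\in\den{\Nat}_n^+$ for a specific $n$. The difficulty is that $v$ need not be a literal numeral---it may be a $\mu$-abstraction or other non-value that captures the surrounding continuation---so one must analyze the machine run of $\cut{v}{E^\ast}$ carefully, confirm that the copying recursor cannot manufacture a value lying outside every $\den{\Nat}_i$, and read off the finite index from the observed final configuration. Once positive completeness is secured, $\bigvee_{i=0}^\infty\den{\Nat}_i$ is a reducibility candidate, and the equality $\bigcurlyvee_{i=0}^\infty\den{\Nat}_i = \bigvee_{i=0}^\infty\den{\Nat}_i$ follows.
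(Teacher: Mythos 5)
Your overall plan is the same as the paper's: showing that $\bigvee_{i=0}^\infty\den{\Nat}_i$ is already sound and complete is equivalent to the paper's calculation that the $\PosCand$-completion adds nothing, your soundness and negative-completeness steps are correct, and the paper also proves the hard (positive) direction by playing a single recursor continuation lying in every $\den{\Nat}_i$ against the given term. The genuine gap is your choice of witness, which fails in call-by-name. Consider $E^\ast = \Rec\{\Zero\to\Zero\mid\Succ x\to y.\Succ y\}\With\alpha$ and recall that in call-by-name every $\mu$-abstraction is a value. Then for \emph{any} value $V$,
\begin{align*}
  \cut{\Succ V}{E^\ast}
  &\srd
  \cut
  {\outp\gamma\cut{V}{\Rec\{\dots\}\With\gamma}}
  {\inp y \cut{\Succ y}{\alpha}}
  &&(\beta_{\Succ})
  \\
  &\srd
  \cut{\Succ(\outp\gamma\cut{V}{\Rec\{\dots\}\With\gamma})}{\alpha}
  \in \mathit{Final}
  &&(\tmu)
\end{align*}
because the recursive result is a $\mu$-abstraction, hence a call-by-name value, so the $\tmu$ step wraps it---unevaluated---in $\Succ$ and hands it to $\alpha$. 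Your ``copying'' recursor therefore inspects exactly one constructor and stops; it bounds nothing. Concretely, $\Succ(\fn x x)$ is safe against $E^\ast$ yet lies outside every $\den{\Nat}_i$ (it gets stuck against the collapsing recursor $\Rec_\infty = \Rec\{\Zero\to\Zero\mid\Succ\blank\to x.x\}\With\alpha$, which is in every $\den{\Nat}_i^-$), so positive completeness cannot follow from safety against $E^\ast$ alone. This is precisely why the paper's witness is $\Rec_\infty$: its successor branch returns the recursive result itself, so $\cut{\Succ V}{\Rec_\infty}$ re-enters the recursion on $V$ after finitely many steps in \emph{both} strategies, forcing a traversal of the entire input.

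Even after repairing the witness, the step you defer as ``the main obstacle'' is the entire mathematical content of the lemma, and it is strategy-dependent in a way your sketch never acknowledges. In call-by-value the analysis of values is easy (the only values not stuck against $\Rec_\infty$ are the numerals $\Succ^n\Zero \in \den{\Nat}_{n+1}$), but in call-by-name one must chase $\mu$-abstraction values through the run: the paper does this by replacing $\Rec_\infty$ with a fresh co\-variable $\beta$, characterizing the intermediate commands $c'$ with $\cut{V}{\beta} \srds c' \not\srd$, and performing strong induction on the length of the remaining reduction $c'\subst{\beta}{\Rec_\infty} \srds c \in \mathit{Final}$, using expansion (\cref{thm:safety-expansion}) to push membership in some $\den{\Nat}_{n+1}$ back to $V$. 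Without this mechanism (or an equivalent one), ``read off the finite index from the final configuration'' is a restatement of the goal, not a proof step.
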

\begin{proof}
  We already know that
  $\bigvee_{i=0}^\infty\den{\Nat}_i \leq \bigcurlyvee_{i=0}^\infty\den{\Nat}_i$
  by definition (since all reducibility candidates are pre-candidates), so it
  suffices to show the reverse:
  $\bigcurlyvee_{i=0}^\infty\den{\Nat}_i \leq \bigvee_{i=0}^\infty\den{\Nat}_i$.
  
  Note from \cref{thm:subtype-union-orthogonality} that
  \begin{align*}
    \textstyle
    \bigcurlyvee_{i=0}^\infty\den{\Nat}_i
    &=
    \textstyle
    \left(\bigvee_{i=0}^\infty\den{\Nat}_i\right)^{v\Bot v\Bot}
    =
    \left(
      \left(\bigvee_{i=0}^\infty\den{\Nat}_i\right)^{-v\Bot}
      ,
      \left(\bigvee_{i=0}^\infty\den{\Nat}_i\right)^{-v\Bot v\Bot}
    \right)
  \end{align*}
  We will proceed by showing there is an
  $E \in \bigvee_{i=0}^\infty\den{\Nat}_i$ such that $\cut{V}{E} \in \Bot$
  forces $V \in \bigvee_{i=0}^\infty\den{\Nat}_i$.  Since we know that every
  $V \in \bigcurlyvee_{i=0}^\infty\den{\Nat}_i$ and
  $E \in \bigvee_{i=0}^\infty\den{\Nat}_i$ forms a safe command
  $\cut{V}{E} \in \Bot$, this proves the result.
  
  First, observe that $\bigcurlyvee_{i=0}^\infty\den{\Nat}_i$ contains the
  following instance of the recursor (\cref{thm:nat-recursion}):
  \begin{align*}
    \Rec_\infty
    &\defeq
    \Rec\{\Zero \to \Zero \mid \Succ \blank \to x. x\} \With \alpha
    \in
    \bigcurlyvee_{i=0}^\infty\den{\Nat}_i
  \end{align*}
  which has these reductions with $\Zero$ and $\Succ$:
  \begin{align*}
    \cut{\Zero}{\Rec_\infty} &\srd \cut{\Zero}{\alpha}
    &
    \cut{\Succ V}{\Rec_\infty}
    &\srd
    \cut
    {\outp\alpha \cut{V}{\Rec_\infty}}
    {\inp x \cut{x}{\alpha}}
  \end{align*}
  The reason why this co\-value forces values of
  $\bigcurlyvee_{i=0}^\infty\den{\Nat}_i$ into values of
  $\bigvee_{i=0}^\infty\den{\Nat}_i$ depends on the evaluation strategy.

  \emph{In call-by-value}, the first successor reduction proceeds as:
  \begin{align*}
    \cut{\Succ V}{\Rec_\infty}
    &\srds
    \cut{V}{\Rec_\infty\subst{\alpha}{\inp x \cut{x}{\alpha}}}
  \end{align*}
  where the continuation accumulator has been $\tmu$-expanded.  In general, an
  arbitrary step in this reduction sequence looks like:
  \begin{align*}
    \cut{\Succ V}{\Rec_\infty\subst{\alpha}{E}}
    &\srds
    \cut{V}{\Rec_\infty\subst{\alpha}{\inp x \cut{x}{E}}}
  \end{align*}
  The only values (in call-by-value) which do not get stuck with $\Rec_\infty$
  (\ie values $V$ such that $\cut{V}{\Rec_\infty} \srds c \in \mathit{Final}$)
  have the form $\Succ^n \Zero$ which is in
  $\den{\Nat}_{n+1} \leq \bigvee_{i=0}^\infty\den{\Nat}_i$.

  \emph{In call-by-name}, the successor reduction proceeds as:
  \begin{align*}
    \cut{\Succ V}{\Rec_\infty}
    &\srds
    \cut{V}{\Rec_\infty}
  \end{align*}
  Call-by-name includes another form of value, $\outp\beta c$, which is not
  immediately stuck with $\Rec_\infty$.  We now need to show that
  $\cut{V}{\Rec_\infty} \in \Bot$, \ie
  $\cut{V}{\Rec_\infty} \srds c \in \mathit{Final}$, forces $V \in \den{\Nat}_n$
  for some $n$.  Let's examine this reduction more closely and check the
  intermediate results by abstracting out $\Rec_\infty$ with a fresh
  co\-variable $\beta$: $\cut{V}{\Rec_\infty} \srds c \in \mathit{Final}$
  because $\cut{V}{\Rec_\infty} \srds c'\subst{\beta}{\Rec_\infty}$ for some
  $\cut{V}{\beta} \srds c' \not\srd$ and then
  $c'\subst{\beta}{\Rec_\infty} \srds c \in \mathit{Final}$.  We now proceed by
  (strong) induction on the length of the remaining reduction sequence (\ie the
  number of steps in $c'\subst{\beta}{\Rec_\infty} \srds c$) and by cases on the
  shape of the intermediate $c'$:
  \begin{itemize}
  \item $c' \neq \cut{W}{\beta}$.  Then
    \begin{math}
      c'\subst{\beta}{\Rec_\infty}
      \in
      \mathit{Final}
    \end{math}
    already.  In this case, $\cut{W}{E} \in \Bot$ for any $E$ whatsoever by
    expansion (\cref{thm:safety-expansion}), and so
    $W \in \den{\Nat}_0 = \PosCand\{\}$.
  \item $c' = \cut{W}{\beta}$.  Then we know that
    \begin{math}
      c'\subst{\beta}{\Rec_\infty}
      =
      \cut{W\subst{\beta}{\Rec_\infty}}{\Rec_\infty}
      \srds
      c
      \in
      \mathit{Final}
      .
    \end{math}
    Since $\cut{W}{\beta} \not\srd$ we know $W\subst{\beta}{\Rec_\infty}$ is not
    a $\mu$-abstraction.  The only other possibilities for
    $W\subst{\beta}{\Rec_\infty}$, given the known reduction to $c$, are $\Zero$
    or $\Succ V'$ for some $V'$.  $\Zero \in \den{\Nat}_1$ by definition.  In
    the case of $\Succ V'$, we have the (non-reflexive) reduction sequence
    \begin{math}
      c'\subst{\beta}{\Rec_\infty}
      =
      \cut{\Succ V'}{\Rec_\infty}
      \srds
      \cut{V'}{\Rec_\infty}
      \srds
      c
      .
    \end{math}
    The inductive hypothesis on the smaller reduction
    $\cut{V'}{\Rec_\infty} \srds c$ ensures $V' \in \den{\Nat}_n$ for some $n$,
    so that $\Succ V' \in \den{\Nat}_{n+1}$ by definition, and thus
    $V \in \den{\Nat}_{n+1}$ as well by expansion.
  \end{itemize}

  So in both call-by-value and call-by-name, we have
  \begin{math}
    \left(\bigcap_{i=0}^\infty\den{\Nat}_i^-\right)^{v\Bot}
    \subseteq
    \bigcup_{i=0}^\infty\den{\Nat}_i^+
    .
  \end{math}
  De Morgan duality (\cref{thm:fixed-point-candidate}) ensures the reverse, so
  \begin{math}
    \left(\bigcap_{i=0}^\infty\den{\Nat}_i^-\right)^{v\Bot}
    =
    \bigcup_{i=0}^\infty\den{\Nat}_i^+
    .
  \end{math}
  Finally, because all reducibility candidates are fixed points of
  $\blank^{v\Bot}$ (\cref{thm:fixed-point-candidate}), de Morgan duality further
  implies:
  \begin{small}
  \begin{align*}
    \textstyle
    \bigcurlyvee_{i=0}^\infty\den{\Nat}_i
    &=
    \textstyle
    \left(
      \left(\bigcap_{i=0}^\infty\den{\Nat}_i^-\right)^{v\Bot}
      ,
      \left(\bigcap_{i=0}^\infty\den{\Nat}_i^-\right)^{v\Bot v\Bot}
    \right)
    =
    \left(
      \bigcup_{i=0}^\infty\den{\Nat}_i^+
      ,
      \left(\bigcup_{i=0}^\infty\den{\Nat}_i^+\right)^{v\Bot}
    \right)
    \\
    &=
    \textstyle
    \left(
      \bigcup_{i=0}^\infty\den{\Nat}_i^+
      ,
      \bigcap_{i=0}^\infty\den{\Nat}_i^{+v\Bot}
    \right)
    =
    \left(
      \bigcup_{i=0}^\infty\den{\Nat}_i^+
      ,
      \bigcap_{i=0}^\infty\den{\Nat}_i^-
    \right)
    =
    \bigvee_{i=0}^\infty\den{\Nat}_i
    \qedhere
  \end{align*}
  \end{small}
\end{proof}



Due to the symmetry of the model, the story for $\Stream A$ is very much the
same as $\Nat$.  We can show that the intersection of
approximations---$\bigcurlywedge_{i=0}^\infty\den{\Stream A}_i$---contains safe
instances of the co\-recursor because its presence is preserved by the stepping
function $S$.  The task of showing that this intersection is closed under the
$\Tail$ destructor is more challenging in the same way as $\Succ$ closure.  We
solve it with the dual method: the presence of co\-recursors which ``inspect''
stream continuations of any size ensures that there are no new surprises that
are not already found in one of the approximations $\den{\Stream A}_n$.
\begin{lemma}[$\Stream$ Co\-recursion Step]
\label{thm:stream-corecursion-step}

For any reducibility candidates $\sem{A}$, $\sem{B}$ and $\sem{C}$,
\begin{align*}
  \CoRec \{\Head\alpha \to e \mid \Tail\beta \to \gamma.f\} \With V
  \in
  S_{\sem{A}}(\sem{C})
\end{align*}
for all $V \in \sem{B}$ whenever the following conditions hold:
\begin{itemize}
\item $e\subst{\alpha}{E} \in \sem{B}$ for all $E \in \sem{A}$,
\item $f\subs{\asub{\beta}{E},\asub{\gamma}{F}} \in \sem{B}$ for all
  $E \in \sem{C}$ and $F \in \sem{B}$, and
\item
  \begin{math}
    \CoRec \{\Head\alpha \to e \mid \Tail\beta \to \gamma.f\} \With V
    \in
    \sem{C}
  \end{math}
  for all $V \in \sem{B}$.
\end{itemize}
\end{lemma}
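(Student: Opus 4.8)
The plan is to dualize the proof of the $\Nat$ recursion step (\cref{thm:nat-recursion-step}) line for line, since the stream co\-recursor is the exact mirror image of the number recursor. First I would fix the shorthand $v^{\CoRec}_{V} \defeq \CoRec\{\Head\alpha \to e \mid \Tail\beta \to \gamma.f\} \With V$, stressing that it is now the \emph{seed} $V$ that varies across co\-recursive steps, dual to the way the return continuation varies in the recursor. Because the target membership is in $S_{\sem{A}}(\sem{C}) = \NegCand(\Head(\sem{A}) \wedge \Tail(\sem{C}))$, the right tool is strong negative completeness (\cref{thm:neg-completeness}): to show the value $v^{\CoRec}_{V}$ lies in this negative candidate, it suffices to verify $\cut{v^{\CoRec}_{V}}{E'} \in \Bot$ for every co\-value $E'$ in the generating co\-term set, which---unfolding the subtype meet---is exactly $\{\Head E \mid E \in \sem{A}\} \cup \{\Tail E \mid E \in \sem{C}\}$.

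This splits into two cases, one for each of the reduction rules $\beta_{\Head}$ and $\beta_{\Tail}$. In the head case $\cut{v^{\CoRec}_{V}}{\Head E} \srd \cut{V}{e\subst{\alpha}{E}}$; since $V \in \sem{B}$ and $e\subst{\alpha}{E} \in \sem{B}$ (the first hypothesis, with $E \in \sem{A}$), soundness of $\sem{B}$ gives $\cut{V}{e\subst{\alpha}{E}} \in \Bot$, and expansion (\cref{thm:safety-expansion}) lifts this back to the original command. In the tail case $\cut{v^{\CoRec}_{V}}{\Tail E}$ steps to $\cut{\outp\gamma\cut{V}{f\subst{\beta}{E}}}{\inp x \cut{v^{\CoRec}_{x}}{E}}$, and I would assemble its membership in $\Bot$ from a short chain of facts dual to those in the $\Nat$ proof. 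The $\mu$-term $\outp\gamma\cut{V}{f\subst{\beta}{E}}$ lies in $\sem{B}$ by activation (\cref{thm:activation-sound}), because for every $F \in \sem{B}$ the command $\cut{V}{f\subs{\asub{\beta}{E},\asub{\gamma}{F}}}$ is safe (using $V \in \sem{B}$, the second hypothesis with $E \in \sem{C}$, and soundness of $\sem{B}$). The $\tmu$-co\-term $\inp x \cut{v^{\CoRec}_{x}}{E}$ lies in $\sem{B}$ by the dual half of activation, because for every $W \in \sem{B}$ the command $\cut{v^{\CoRec}_{W}}{E}$ is safe (using the third hypothesis $v^{\CoRec}_{W} \in \sem{C}$, together with $E \in \sem{C}$ and soundness of $\sem{C}$). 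Soundness of $\sem{B}$ then makes the whole reduct safe, and expansion once more closes the case.

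With both projections handled, both commands are in $\Bot$, so negative completeness delivers $v^{\CoRec}_{V} \in S_{\sem{A}}(\sem{C})$ for every $V \in \sem{B}$, which is the claim. The one place I expect to slow down---and the only real content of the argument---is the bookkeeping of which candidate justifies each soundness step: the head reduct and the $\mu$-term of the tail reduct are judged safe \emph{inside} $\sem{B}$, whereas the $\tmu$-co\-term of the tail reduct is judged safe inside $\sem{C}$, since $v^{\CoRec}_{x}$ is, by hypothesis, a member of the stream candidate $\sem{C}$ rather than of the seed candidate $\sem{B}$. Keeping the three roles straight---$\sem{A}$ for the element type, $\sem{B}$ for the seed, and $\sem{C}$ for the stream approximation---and correctly reading the meet $\Head(\sem{A}) \wedge \Tail(\sem{C})$ as a union of co\-values, is all that is needed; no fixed-point or induction reasoning appears here, as that work is deferred to the lemma that iterates this step.
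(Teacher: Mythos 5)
Your proposal is correct and follows essentially the same route as the paper's proof: generalize over the seed, reduce membership in $S_{\sem{A}}(\sem{C})$ to safety against the generating co\-values $\Head E$ ($E \in \sem{A}$) and $\Tail E'$ ($E' \in \sem{C}$) via strong negative completeness, handle each projection by one $\beta$-step plus activation and soundness, and close with expansion. Your bookkeeping of which candidate justifies each soundness step is exactly right---indeed it is stated more carefully than the paper's own step (6), which loosely appeals to ``the inductive hypothesis'' and $\den{\Stream A}_i$ where it should cite the lemma's third condition and soundness of $\sem{C}$, as you do.
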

\begin{proof}
  Since the value accumulator $V$ will change in the tail step, we have to
  generalize over it.  Let
  \begin{math}
    V^{\CoRec}_{V}
    \defeq
    \CoRec \{\Head\alpha \to e \mid \Tail\beta \to \gamma.f\} \With V
    ,
  \end{math}
  where $V$ stands for a given value from $\sem{B}$.  It suffices to show (via
  \cref{thm:neg-completeness}) that for all $V \in \sem{B}$,
  $\cut{V^{\CoRec}_{V}}{\Head E}$ (for all $E \in \sem{A}$) and
  $\cut{V^{\CoRec}_{V}}{\Tail E'}$ (for all $E' \in \sem{C}$).  We have these
  two possible reductions:
  \begin{align*}
    \cut{V^{\CoRec}_{V}}{\Head E}
    &\srd
    \cut{V}{e\subst{\alpha}{E}}
    \\
    \cut{V^{\CoRec}_{V}}{\Tail E'}
    &\srd
    \cut
    {\outp\gamma \cut{V}{f\subst{\beta}{E'}}}
    {\inp x \cut{V^{\CoRec}_x}{E'}}
  \end{align*}
  Now, we note the following series of facts
  \begin{enumerate}
  \item $e\subst{\alpha}{E} \in \sem{B}$ by assumption because $E \in \sem{A}$.
  \item $\cut{V}{e\subst{\alpha}{E}} \in \Bot$ because
    $V, e\subst{\alpha}{E} \in \sem{B}$ by assumption.
  \item $f\subs{\asub{\beta}{E'},\asub{\gamma}{F}} \in \sem{B}$ for all
    $F \in \sem{B}$ by because
    \begin{math}
      E'
      \in
      \sem{C}
      .
    \end{math}
  \item $\cut{V}{f\subs{\asub{\beta}{E'},\asub{\gamma}{F}}} \in \Bot$ for all
    $F \in \sem{B}$.
  \item $\outp\gamma\cut{V}{f\subs{\asub{\beta}{E'}}} \in \sem{B}$ by
    activation (\cref{thm:activation-sound}).
  \item $\cut{V^{\CoRec}_{V'}}{E'} \in \Bot$ for all $V' \in \sem{B}$ by the
    inductive hypothesis since $E' \in \den{\Stream A}_i$.
  \item $\inp x \cut{V^{\CoRec}_x}{E'} \in \sem{B}$ by activation
    (\cref{thm:activation-sound}).
  \item
    \begin{math}
      \cut
      {\outp\gamma\cut{V}{f\subs{\asub{\beta}{E'}}}}
      {\inp x \cut{V^{\CoRec}_x}{E'}}
      \in
      \Bot
      .
    \end{math}
  \end{enumerate}
  Therefore, both $\cut{V^{\CoRec}_{V}}{\Head E}$ and
  $\cut{V^{\CoRec}_{V}}{\Tail E'}$ reduce to a command in $\Bot$ for any
  $E \in \sem{A}$ and $E' \in \sem{C}$.  It follows from
  \cref{thm:safety-expansion,thm:neg-completeness} that
  \begin{math}
    V^{\CoRec}_{V'}
    \in
    S_{\sem{A}}(\sem{C})
    .
  \end{math}
\end{proof}



\begin{lemma}[$\Stream$ Co\-recursion]
\label{thm:stream-corecursion}

For any reducibility candidate $\sem{B}$, if
\begin{itemize}
\item $e\subst{\alpha}{E} \in \sem{B}$ for all $E \in \den{A}$,
\item $f\subs{\asub{\beta}{E},\asub{\gamma}{F}} \in \sem{B}$ for all
  $E \in \bigcurlywedge_{i=0}^\infty\den{\Stream A}_i$ and $F \in \sem{B}$, and
\item $V \in \sem{B}$,
\end{itemize}
then
\begin{math}
  \CoRec \{\Head\alpha \to e \mid \Tail\beta \to \gamma.f\} \With V
  \in
  \bigcurlywedge_{i=0}^\infty\den{\Stream A}_i
  .
\end{math}
\end{lemma}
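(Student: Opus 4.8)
The plan is to dualize the proof of $\Nat$ recursion (\cref{thm:nat-recursion}) step for step, replacing the least upper bound of the finite approximations by the greatest lower bound and swapping the roles of terms and co\-terms throughout. The heart of the argument is an induction on $i$ establishing that the corecursor $\CoRec \{\Head\alpha \to e \mid \Tail\beta \to \gamma.f\} \With V$ lies in \emph{every} finite approximation $\den{\Stream A}_i$, for every seed $V \in \sem{B}$ permitted by the hypotheses.

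For the base case $i = 0$, recall that $\den{\Stream A}_0 = \NegCand\{\}$ is the \emph{greatest} reducibility candidate with respect to the subtyping order (\cref{thm:subtype-lattice}), that is, the one containing the most terms and the fewest co\-terms. Since the corecursor is a term, it belongs to $\den{\Stream A}_0$ trivially. For the inductive step I would assume that the corecursor with an arbitrary seed $V \in \sem{B}$ already lies in $\den{\Stream A}_i$, and then apply the Corecursion Step lemma (\cref{thm:stream-corecursion-step}) instantiated at $\sem{A} = \den{A}$, $\sem{C} = \den{\Stream A}_i$, and the ambient $\sem{B}$, which yields membership in $S_{\den{A}}(\den{\Stream A}_i) = \den{\Stream A}_{i+1}$.

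Most premises of \cref{thm:stream-corecursion-step} are immediate: its first premise is exactly the first hypothesis of the present lemma, and its third premise is the inductive hypothesis. The one premise requiring care is the second, namely $f\subs{\asub{\beta}{E},\asub{\gamma}{F}} \in \sem{B}$ for all $E \in \den{\Stream A}_i$ and $F \in \sem{B}$. This transfer is where the direction of the subtype order matters: because $\bigcurlywedge_{j=0}^\infty \den{\Stream A}_j \leq \den{\Stream A}_i$, the co\-terms of $\den{\Stream A}_i$ are contained among the co\-terms of $\bigcurlywedge_{j=0}^\infty \den{\Stream A}_j$, so every $E \in \den{\Stream A}_i$ is also an $E \in \bigcurlywedge_{j=0}^\infty \den{\Stream A}_j$, and the lemma's second hypothesis supplies exactly what is needed. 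This is precisely dual to the $\Nat$ proof, where it was the value direction of subtyping that let the hypothesis on $w$ transfer down to each $\den{\Nat}_i$.

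Finally, having placed the corecursor value into each $\den{\Stream A}_i$, it lies in the intersection of all their positive components, hence it is a value of $\bigwedge_{j=0}^\infty \den{\Stream A}_j$. Since $\bigcurlywedge_{j=0}^\infty \den{\Stream A}_j = \NegCand\left(\bigwedge_{j=0}^\infty \den{\Stream A}_j\right)$ and $\NegCand$ includes all (co)\-values of its argument (\cref{thm:neg-completion}), the corecursor---which is a value in both call-by-name and call-by-value---is a member of $\bigcurlywedge_{j=0}^\infty \den{\Stream A}_j$, completing the proof. I expect the main obstacle to be pure bookkeeping: keeping every subtyping inequality pointing the right way, so that both the intersection of positive parts at the end and the co\-value transfer inside the inductive step are correctly justified, exactly mirroring the dual $\Nat$ argument.
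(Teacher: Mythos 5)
Your proposal is correct and follows essentially the same route as the paper's own proof: induction on $i$ using the Corecursion Step lemma (\cref{thm:stream-corecursion-step}) with $\sem{C} = \den{\Stream A}_i$, followed by the observation that the corecursor, being a value lying in every approximation, belongs to $\bigcurlywedge_{i=0}^\infty\den{\Stream A}_i = \NegCand\bigl(\bigwedge_{i=0}^\infty\den{\Stream A}_i\bigr)$ by the (co)value-extension property of $\NegCand$ (\cref{thm:neg-completion}). In fact, you make explicit a step the paper leaves implicit---that the second hypothesis transfers to each approximation because $\bigcurlywedge_{j=0}^\infty\den{\Stream A}_j \leq \den{\Stream A}_i$ puts the co\-terms of $\den{\Stream A}_i$ among those of the intersection---and your justification of it is exactly right.
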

\begin{proof}
  By induction on $i$,
  \begin{math}
    \CoRec \{\Head\alpha \to e \mid \Tail\beta \to \gamma.f\} \With V'
    \in
    \den{\Stream A}_i
  \end{math}
  for all $V' \in \sem{B}$:
  \begin{itemize}
  \item ($0$) $\den{\Stream A}_0 = \NegCand\{\}$ is the greatest reducibility
    candidate w.r.t subtyping , \ie with the most terms, so
    \begin{math}
      \CoRec \{\Head\alpha \to e \mid \Tail\beta \to \gamma.f\} \With V'
      \in
      \den{\Stream A}_0
    \end{math}
    trivially.
  \item ($i+1$) Assume
    \begin{math}
      \CoRec \{\Head\alpha \to e \mid \Tail\beta \to \gamma.f\} \With V'
      \in
      \den{\Stream A}_i
    \end{math}
    for all $V' \in \sem{B}$.  Applying \cref{thm:stream-corecursion-step} to
    $\sem{\Stream A}_i$, we have (for all $V' \in \sem{B}$):
    \begin{equation*}
      \CoRec \{\Head\alpha \to e \mid \Tail\beta \to \gamma.f\} \With V'
      \in
      S_{\den{A}}(\Tail\den{\Stream A}_i)
      =
      \den{\Stream A}_{i+1}
    \end{equation*}
  \end{itemize}
  Thus, we know that in the greatest lower bound of all $\den{\Stream A}_i$ in
  the subtype lattice (\cref{thm:subtype-lattice}) contains this corecursor with
  $V' = V \in \sem{B}$ because it is a value (\cref{thm:neg-completion}):
  \begin{equation*}
    \CoRec \{\Head\alpha \to e \mid \Tail\beta \to \gamma.f\} \With V
    =
    \bigcurlywedge_{i=0}^\infty\den{\Stream A}_{i}
    \qedhere
  \end{equation*}
\end{proof}

\begin{lemma}[Stream Choice]
\label{thm:stream-choice}

\begin{math}
  \bigcurlywedge_{i=0}^\infty\den{\Stream A}_i
  =
  \bigwedge_{i=0}^\infty\den{\Stream A}_i
\end{math}
\end{lemma}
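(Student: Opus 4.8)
The plan is to prove this as the exact dual, under the term/co\-term symmetry, of Nat Choice (\cref{thm:nat-choice}), exploiting that $\Stream A$ is the mirror image of $\Nat$. As in that proof, one inclusion comes for free: because reducibility candidates form a sub\-collection of pre\-candidates, the greatest reducibility candidate lying below all $\den{\Stream A}_i$ can only be smaller than the greatest pre\-candidate below them, so $\bigcurlywedge_{i=0}^\infty\den{\Stream A}_i \leq \bigwedge_{i=0}^\infty\den{\Stream A}_i$. All the content is in the reverse inclusion, which by \cref{thm:subtype-intersection-orthogonality} reduces to establishing the single non\-trivial containment
\[
  \left(\bigcap_{i=0}^\infty\den{\Stream A}_i^+\right)^{v\Bot}
  \subseteq
  \bigcup_{i=0}^\infty\den{\Stream A}_i^-
\]
dual to the key step proved for $\Nat$.

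To witness this, I would introduce the ``infinite co\-recursor''
\[
  \CoRec_\infty \defeq \CoRec\{\Head\alpha\to\alpha \mid \Tail\beta\to\gamma.\gamma\}\With x
\]
which is dual to $\Rec_\infty$ and acts as the distinguishing \emph{value} that forces a co\-term to be ``finite''. First, \cref{thm:stream-corecursion} places $\CoRec_\infty$ in every approximation $\den{\Stream A}_i$, so it belongs to $\bigcap_{i=0}^\infty\den{\Stream A}_i^+$ and is therefore orthogonal to every co\-term of $\left(\bigcap_i\den{\Stream A}_i^+\right)^{v\Bot}$. Its reductions against the two stream destructors are
\[
  \cut{\CoRec_\infty}{\Head E}\srd\cut{x}{E}
  \qquad
  \cut{\CoRec_\infty}{\Tail E}\srd
  \cut{\outp\gamma\cut{x}{\gamma}}{\inp{x'}\cut{\CoRec\{\dots\}\With x'}{E}}
\]
and the target is to show that $\cut{\CoRec_\infty}{E}\in\Bot$ forces $E\in\den{\Stream A}_n$ for some finite $n$.

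The argument then splits on evaluation strategy, precisely dual to the $\Nat$ case but with the two strategies exchanged. In call\-by\-name (dual to call\-by\-value for $\Nat$) the $\Tail$ step accumulates a delayed seed, and the only co\-values that do not get stuck have the shape $\Tail^n(\Head E)$, sitting in $\den{\Stream A}_{n+1}$; this case is routine. In call\-by\-value (dual to call\-by\-name for $\Nat$) the $\Tail$ step instead strips, $\cut{\CoRec_\infty}{\Tail E}\srds\cut{\CoRec_\infty}{E}$, keeping the seed intact, and here the complication is that co\-values now include $\tmu$\-abstractions $\inp{z}c$ that are not immediately stuck. I expect this to be the main obstacle: exactly as in the $\Nat$ proof it demands a strong induction on the length of the remaining reduction sequence, abstracting $\CoRec_\infty$ by a fresh variable and case\-analyzing the intermediate command to peel off one $\Tail$ per step, each peeled destructor contributing one level of approximation via \cref{thm:safety-expansion,thm:neg-completeness}, and ultimately pinning $E$ down to some finite $\den{\Stream A}_n$.

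Finally, with $\left(\bigcap_i\den{\Stream A}_i^+\right)^{v\Bot} \subseteq \bigcup_i\den{\Stream A}_i^-$ in hand, de Morgan duality (\cref{thm:orthogonal-de-morgan}) supplies the reverse containment, giving equality. Combining this with the fact that each $\den{\Stream A}_i$ is a fixed point of $\blank^{v\Bot}$ (\cref{thm:fixed-point-candidate}) and with \cref{thm:subtype-intersection-orthogonality}, a short computation dual to the closing display of \cref{thm:nat-choice} collapses $\bigcurlywedge_{i=0}^\infty\den{\Stream A}_i$ onto $\bigwedge_{i=0}^\infty\den{\Stream A}_i$, completing the proof.
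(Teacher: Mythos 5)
Your proposal tracks the paper's proof almost step for step: the same trivial inclusion $\bigcurlywedge_{i=0}^\infty\den{\Stream A}_i \leq \bigwedge_{i=0}^\infty\den{\Stream A}_i$, the same reduction via \cref{thm:subtype-intersection-orthogonality} to the containment $\left(\bigcap_{i=0}^\infty\den{\Stream A}_i^+\right)^{v\Bot} \subseteq \bigcup_{i=0}^\infty\den{\Stream A}_i^-$, the same discriminating ``infinite co\-recursor,'' the same strategy split (call-by-name routine, call-by-value requiring a strong induction on the remaining reduction length), and the same de Morgan conclusion. But there is one genuine gap, and it sits exactly where you trusted duality the most: the seed. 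You take $\CoRec_\infty$ to carry a free variable $x$, the literal mirror image of $\Rec_\infty$'s free co\-variable $\alpha$. That dualization fails because $\Bot$ is \emph{not} self-dual: final commands are $\cut{\Zero}{\alpha}$ and $\cut{\Succ V}{\alpha}$, which deliver a value \emph{to a co\-variable}. This asymmetry is what makes $\alpha \in \den{\Nat}$ (legitimizing $\Rec_\infty$'s accumulator), but the dual claim for a free variable is false: $\cut{x}{E}$ is stuck and non-final against the destructor-shaped co\-values that inhabit every $\den{A}$ (co\-variables for $\Nat$, call stacks for function types, projections for streams).

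Concretely this breaks your argument twice. First, \cref{thm:stream-corecursion} cannot place your $\CoRec_\infty$ in the approximations: with head branch $\Head\alpha \to \alpha$, that lemma demands a reducibility candidate $\sem{B}$ containing both the seed $x$ and every $E \in \den{A}$, and soundness of $\sem{B}$ would then force $\cut{x}{E} \in \Bot$, which fails; indeed $\cut{\CoRec_\infty}{\Head E} \srd \cut{x}{E} \notin \Bot$ shows your term is not even in $\den{\Stream A}_1$, so orthogonality gives you no hypothesis about the co\-values you are trying to pin down. Second, in the head case of the call-by-value induction you must conclude $F' \in \den{A}$ from safety of the head reduction, and this needs $\cut{V}{F'} \in \Bot$ for \emph{all} $V \in \den{A}$ so that completeness of $\den{A}$ applies; a single fixed seed (let alone a variable) yields nothing. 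The paper's fix is precisely the one non-dual ingredient: define a \emph{family} $\CoRec_\infty[V] \defeq \CoRec\{\Head\alpha\to\alpha \mid \Tail\blank\to\gamma.\gamma\}\With V$ indexed by values $V \in \den{A}$, show every member lies in every $\den{\Stream A}_i$, and run the strong induction under the hypothesis that $\cut{\CoRec_\infty[V]}{E} \in \Bot$ holds for all $V \in \den{A}$ simultaneously. With that replacement your outline goes through as written.
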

\begin{proof}
  Note from \cref{thm:subtype-union-orthogonality} that
  \begin{small}
  \begin{align*}
    \textstyle
    \bigcurlywedge_{i=0}^\infty\den{\Stream A}_i
    &=
    \textstyle
    \left(\bigwedge_{i=0}^\infty\den{\Stream A}_i\right)^{v\Bot v\Bot}
    =
    \left(
      \left(\bigwedge_{i=0}^\infty\den{\Stream }_i\right)^{+v\Bot v\Bot}
      ,
      \left(\bigwedge_{i=0}^\infty\den{\Stream }_i\right)^{+v\Bot}
    \right)
  \end{align*}
  \end{small}%
  We will proceed by showing there is a
  $V \in \bigwedge_{i=0}^\infty\den{\Stream A}_i$ such that
  $\cut{V}{E} \in \Bot$ forces $E \in \bigwedge_{i=0}^\infty\den{\Stream A}_i$.
  Since we know that every $E \in \bigcurlywedge_{i=0}^\infty\den{\Stream A}_i$
  and $V \in \bigwedge_{i=0}^\infty\den{\Stream A}_i$ forms a safe command
  $\cut{V}{E} \in \Bot$, this proves the result.

  First, we define the following corecursive term:
  \begin{align*}
    \CoRec_\infty[V]
    &\defeq
    \CoRec
    \{\Head \alpha \to \alpha \to \Tail \blank \to \gamma. \gamma\}
    \With V
  \end{align*}
  and observe that
  \begin{math}
    \CoRec_\infty[V]
    \in
    \bigcurlywedge_{i=0}^\infty\den{\Stream A}_i  
  \end{math}
  (\cref{thm:stream-corecursion}) for all $V \in \den{A}$.  In general,
  $\CoRec_\infty[V]$ has these reductions with $\Head$ and $\Tail$:
  \begin{align*}
    \cut{\CoRec_\infty[V]}{\Head E}
    &\srd
    \cut{V}{\alpha}
    &
    \cut{\CoRec_\infty[V]}{\Tail E}
    &\srd
    \cut
    {\outp\gamma \cut{V}{\gamma}}
    {\inp x \cut{\CoRec_\infty[x]}{E}}
  \end{align*}
  The reason why this value forces co\-values of
  $\bigcurlywedge_{i=0}^\infty\den{\Stream A}_i$ into co\-values of
  $\bigwedge_{i=0}^\infty\den{\Stream A}_i$ depends on the evaluation strategy.

  \emph{In call-by-name}, the tail reduction proceeds as:
  \begin{align*}
    \cut{\CoRec_\infty[V]}{\Tail E}
    &\srds
    \cut{\CoRec_\infty[\outp\gamma \cut{V}{\gamma}]}{E}
  \end{align*}
  where the value accumulator has been $\mu$-expanded.  The only co\-values (in
  call-by-name) which do not get stuck with $\CoRec_\infty$ (\ie co\-values $E$
  such that $\cut{\CoRec_\infty[V]}{E} \srds c \in \mathit{Final}$) have the
  form $\Tail^n (\Head E)$ with $E \in \den{A}$, which is in
  $\den{\Stream A}_{n+1} \geq \bigcurlywedge_{i=0}^\infty\den{\Stream A}_i$.

  \emph{In call-by-value}, the tail reduction proceeds as:
  \begin{align*}
    \cut{\CoRec_\infty[V]}{\Tail E}
    &\srds
    \cut{\CoRec_\infty[V]}{E}
  \end{align*}
  Call-by-value includes another form of co\-value, $\inp y c$, which is not
  immediately stuck with $\CoRec_\infty$.  We now need to show that if
  $V \in \den{A}$ then $\cut{\CoRec_\infty[V]}{E} \in \Bot$, \ie
  $\cut{\CoRec_\infty[V]}{E} \srds c \in \mathit{Final}$, forces
  $E \in \den{\Stream A}_n$ for some $n$.  Let's look at the intermediate
  results of this reduction sequence by abstracting out $\CoRec_\infty$ with a
  fresh variable $y$: $\cut{\CoRec_\infty[V]}{E} \srds c$ because
  $\cut{\CoRec_\infty[V]}{E} \srds c'\subst{y}{\CoRec_\infty[V]}$ for some
  $\cut{y}{E} \srds c' \not\srd$ and then
  $c'\subst{y}{\CoRec_\infty[V]} \srds c \in \mathit{Final}$.  We now proceed by
  (strong) induction on the length of the remain reduction sequence (\ie the
  number of steps in $c'\subst{y}{\CoRec_\infty[V]} \srds c$) and by cases on
  the shape of the intermediate $c'$:
  \begin{itemize}
  \item $c' \neq \cut{y}{F}$.  Then
    $c'\subst{y}{\CoRec_\infty[V]} \in \mathit{Final}$ already.  In this case,
    $\cut{W}{F} \in \Bot$ for any $W$ whatsoever by expansion
    (\cref{thm:safety-expansion}), and so
    $F \in \den{\Stream A}_0 = \NegCand\{\}$.
  \item $c' = \cut{y}{F}$.  Then
    \begin{math}
      c'\subst{y}{\CoRec_\infty[V]}
      =
      \cut{\CoRec_\infty[V]}{F\subst{y}{\CoRec_\infty[V]}}
      \srds
      c
      \in
      \mathit{Final}
      .
    \end{math}
    Since $\cut{y}{F} \not\srd$, we know $F\subst{y}{\CoRec_\infty[V]}$ is not a
    $\tmu$-abstraction.  The only other possibilities for
    $F\subst{y}{\CoRec_\infty[V]}$, given the known reduction to $c$, are
    $\Head F'$ or $\Tail E'$.  In the first case, we have
    \begin{math}
      \cut{\CoRec_\infty[V]}{\Head F'}
      \srd
      \cut{V}{F'}
      \in
      \Bot
    \end{math}
    for all $V \in \den{A}$; so $F' \in \den{A}$ by completion of $\den{A}$ and
    thus $\Head F' \in \den{\Stream A}_1$.  In the second case, we have the
    (non-reflexive) reduction sequence
    \begin{math}
      c'\subst{y}{\CoRec_\infty[V]}
      =
      \cut{\CoRec_\infty[V]}{\Tail E'}
      \srds
      \cut{\CoRec_\infty[V]}{E'}
      \srds
      c
    \end{math}
    The inductive hypothesis on the smaller reduction
    $\cut{\CoRec_\infty[V]}{E'} \srds c$ ensures $E' \in \den{\Stream A}_n$ for
    some $n$, so that $\Head E' \in \den{\Stream A}_{n+1}$ by definition, and
    thus $E \in \den{\Stream A}_{n+1}$ as well by expansion.
  \end{itemize}

  So in both call-by-name and -value, we have
  \begin{math}
    \left(\bigcap_{i=0}^\infty\den{\Stream A}_i^+\right)^{v\Bot}
    \subseteq
    \bigcup_{i=0}^\infty\den{\Stream A}_i^-
    .
  \end{math}
  De Morgan duality (\cref{thm:fixed-point-candidate}) ensures
  \begin{math}
    \left(\bigcap_{i=0}^\infty\den{\Stream A}_i^+\right)^{v\Bot}
    =
    \bigcup_{i=0}^\infty\den{\Stream A}_i^-
    .
  \end{math}
  Finally, because all reducibility candidates are fixed points of
  $\blank^{v\Bot}$ (\cref{thm:fixed-point-candidate}), de Morgan duality further
  implies:
  \begin{align*}
    \textstyle
    \bigcurlywedge_{i=0}^\infty\den{\Stream A}_i
    &=
    \textstyle
    \left(
      \left(\bigcap_{i=0}^\infty\den{\Stream A}_i^+\right)^{v\Bot v\Bot}
      ,
      \left(\bigcap_{i=0}^\infty\den{\Stream A}_i^+\right)^{v\Bot}
    \right)
    \\
    &=
    \textstyle
    \left(
      \left(\bigcup_{i=0}^\infty\den{\Stream A}_i^-\right)^{v\Bot}
      ,
      \bigcup_{i=0}^\infty\den{\Stream A}_i^-
    \right)
    \\
    &=
    \textstyle
    \left(
      \bigcap_{i=0}^\infty\den{\Stream A}_i^{-v\Bot}
      ,
      \bigcup_{i=0}^\infty\den{\Stream A}_i^-
    \right)
    \\
    &=
    \textstyle
    \left(
      \bigcap_{i=0}^\infty\den{\Stream A}_i^+
      ,
      \bigcup_{i=0}^\infty\den{\Stream A}_i^-
    \right)
    =
    \bigwedge_{i=0}^\infty\den{\Stream A}_i
    \qedhere
  \end{align*}
\end{proof}

Now that we know that the iterative interpretations of $\Nat$ and $\Stream A$
contain all the expected parts---the (de)\-constructors and
(co)\-recursors---we are ready to show that they are the same as the all-at-once
definition given in \cref{fig:termination-model}.  More specifically, the
iterative $\bigcurlyvee_{i=0}^\infty\den{\Nat}_i$ and
$\bigcurlywedge_{i=0}^\infty\den{\Stream A}_i$ correspond to the Kleene notion
of (least and greatest, respectively) fixed points.  Instead, the all-at-once
$\den{\Nat}$ and $\den{\Stream A}$ correspond to the Knaster-Tarski fixed point
definitions.  These two correspond because the generating functions $N$ and $S$
are monotonic, and due to the fact that we can choose which approximation each
value of $\bigcurlyvee_{i=0}^\infty\den{\Nat}_i$ and co\-value of
$\bigcurlywedge_{i=0}^\infty\den{\Stream A}_i$ comes from
(\cref{thm:nat-choice,thm:stream-choice}).
\begin{lemma}[Monotonicity]
\label{thm:nat-monotonicity}
\label{thm:stream-monotonicity}

Given reducibility candidates $\sem{A}$, $\sem{B}$, and $\sem{C}$, if
$\sem{B} \leq \sem{C}$ then $N(\sem{B}) \leq N(\sem{C})$ and
$S_{\sem{A}}(\sem{B}) \leq S_{\sem{A}}(\sem{C})$.
\end{lemma}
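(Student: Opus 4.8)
The plan is to observe that both $N$ and $S_{\sem{A}}$ are built by composing three monotone ingredients: the lifting of a (de)constructor onto candidates ($\Succ(\blank)$ for $N$, $\Tail(\blank)$ for $S$), a binary lattice operation with a fixed candidate ($\{\Zero\} \vee \blank$ for $N$, $\Head(\sem{A}) \wedge \blank$ for $S$), and finally the completion $\PosCand$ or $\NegCand$. Since $\PosCand$ and $\NegCand$ are already known to be monotone on sound candidates (\cref{thm:pos-monotonicity,thm:neg-monotonicity}), and the subtype meet and join are monotone by the lattice structure (\cref{def:pre-subtype-lattice}), the whole argument reduces to checking that $\Succ(\blank)$ and $\Tail(\blank)$ are monotone for the subtype order, and that every intermediate pre-candidate is sound so the completion lemmas apply.

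First I would handle $N$. Unfolding \cref{def:subtype-order}, the hypothesis $\sem{B} \leq \sem{C}$ gives $\sem{B}^+ \subseteq \sem{C}^+$, hence $\Succ(\sem{B}) = \{\Succ V \mid V \in \sem{B}\} \subseteq \{\Succ V \mid V \in \sem{C}\} = \Succ(\sem{C})$ as sets of terms. Viewing each of these as a pre-candidate with an empty co-term component, this inclusion is exactly $\Succ(\sem{B}) \leq \Succ(\sem{C})$ in subtype order. Joining on the constant $\{\Zero\}$ and using monotonicity of $\vee$ then yields $\{\Zero\} \vee \Succ(\sem{B}) \leq \{\Zero\} \vee \Succ(\sem{C})$. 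Both of these pre-candidates have an empty co-term component, so they are vacuously sound, and $\PosCand$ monotonicity (\cref{thm:pos-monotonicity}) delivers $N(\sem{B}) \leq N(\sem{C})$.

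The case of $S_{\sem{A}}$ is dual, and this is where the one genuine subtlety appears. The relation $\sem{B} \leq \sem{C}$ unfolds to the reversed co-term containment $\sem{B}^- \supseteq \sem{C}^-$, so $\Tail(\sem{C}) \subseteq \Tail(\sem{B})$ as sets of co-terms. Viewing each as a pre-candidate with an empty term component, this reversed inclusion is precisely $\Tail(\sem{B}) \leq \Tail(\sem{C})$ in subtype order --- the contravariance of the co-term side of subtyping turns the apparently backwards inclusion into the forward direction we want. Meeting with the constant $\Head(\sem{A})$ and using monotonicity of $\wedge$ gives $\Head(\sem{A}) \wedge \Tail(\sem{B}) \leq \Head(\sem{A}) \wedge \Tail(\sem{C})$, both vacuously sound since their term components are empty, and $\NegCand$ monotonicity (\cref{thm:neg-monotonicity}) delivers $S_{\sem{A}}(\sem{B}) \leq S_{\sem{A}}(\sem{C})$.

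I do not expect any real obstacle here; the result is an assembly of previously established monotonicity facts. The only point requiring care --- and the one I would highlight as the crux --- is the direction-flipping just noted: although $\Succ$ operates on the positive (term) side and $\Head,\Tail$ operate on the negative (co-term) side, the antitone treatment of co-terms in the subtype order makes \emph{both} $\Succ(\blank)$ and $\Tail(\blank)$ covariant for $\leq$, which is exactly what lets the two halves of the lemma go through by the same template.
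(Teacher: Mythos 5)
Your proposal is correct and takes essentially the same route as the paper's own proof, which likewise reduces the claim to monotonicity of the lifted (de)constructors with respect to subtyping, of the lattice meet and join, and of $\PosCand$/$\NegCand$ (\cref{thm:pos-monotonicity,thm:neg-monotonicity}). You merely spell out what the paper leaves implicit --- the vacuous soundness of the intermediate pre-candidates and the fact that contravariance of the co-term side of $\leq$ makes $\Tail(\blank)$ covariant --- so the two arguments coincide.
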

\begin{proof}
  Because each of the (de)constructors, $\{\Zero\}$, $\Succ(\sem{C})$,
  $\Head(\sem{A})$, and $\Succ(\sem{C})$ are monotonic w.r.t subtyping, as are
  unions, intersections, $\PosCand$, and $\NegCand$
  (\cref{thm:pos-monotonicity,thm:neg-monotonicity}).
\end{proof}

\thmsubtypecoinduction*
\begin{proof}
  First note that the values of $\bigcurlyvee_{i=0}^\infty\den{\Nat}_i$ is
  closed under $\Zero$ a and $\Succ$:
  \begin{itemize}
  \item $\Zero \in \den{\Nat}_1 \leq \bigcurlyvee_{i=0}^\infty\den{\Nat}_i$ by
    definition.
  \item Given $V \in \bigcurlyvee_{i=0}^\infty\den{\Nat}_i$, we know
    $V \in \bigvee_{i=0}^\infty\den{\Nat}_i$ (\cref{thm:nat-choice}), and thus
    $V \in \den{\Nat}_n$ for some $n$.  So
    $\Succ V \in \den{\Nat}_{n+1} \leq \bigcurlyvee_{i=0}^\infty\den{\Nat}_i$ by
    definition.
  \end{itemize}
  Similarly, the co\-values of $\bigcurlywedge_{i=0}^\infty\den{\Stream A}_i$ is
  closed under $\Head$ and $\Tail$:
  \begin{itemize}
  \item For all $E \in \den{A}$,
    \begin{math}
      \Head E
      \in
      \den{\Stream A}_1
      \leq
      \bigcurlywedge_{i=0}^\infty\den{\Stream A}_i
    \end{math}
    by definition.
  \item Given $E \in \bigcurlywedge_{i=0}^\infty\den{\Stream A}_i$, we know
    $E \in \bigwedge_{i=0}^\infty\den{\Stream A}_i$ (\cref{thm:stream-choice}),
    and thus $E \in \den{\Stream A}_n$ for some $n$.  So
    \begin{math}
      \Tail E
      \in
      \den{\Stream A}_{n+1}
      \leq
      \bigcurlywedge_{i=0}^\infty\den{\Stream A}_i
    \end{math}
    by definition.
  \end{itemize}
  Because of these closure facts, we know from the definition of
  $\bigcurlywedge$ and $\bigcurlyvee$, respectively, that
  \begin{align*}
    \bigcurlyvee_{i=0}^\infty\den{\Nat}_i
    &\geq
    \bigcurlywedge
    \{
    \sem{C}
    \mid
    (\Zero \in \sem{C})
    \conj
    (\forall V \in \sem{C}.~ \Succ V \in \sem{C})
    \}
    =
    \den{\Nat}
    \\[-1ex]
    \bigcurlywedge_{i=0}^\infty\den{\Stream A}_i
    &\leq
    \bigcurlyvee
    \{
    \sem{C}
    \mid
    (\forall E \in \den{A}.~ \Head E \in \sem{C})
    \conj
    (\forall E \in \sem{C}.~ \Tail E \in \sem{C})
    \}
    =
    \den{\Stream A}
  \end{align*}

  Going the other way, it we need to show that each approximation $\den{\Nat}_i$
  is a subtype of the $\sem{C}$s that make up $\den{\Nat}$, and dually that each
  approximation $\den{\Stream A}_i$ is a supertype of the $\sem{C}$s that make
  up $\den{\Stream A}$.  Suppose that $\sem{C}$ is any reducibility candidate
  such that $N(\sem{C}) \leq \sem{C}$.  Then $\den{\Nat}_i \leq \sem{C}$ follows
  by induction on $i$:
  \begin{itemize}
  \item ($0$) $\den{\Nat}_0 = \PosCand\{\}$ is the least reducibility candidate
    w.r.t subtyping, so $\den{\Nat}_0 \leq \sem{C}$.
  \item ($i+1$) Assume that $\den{\Nat}_i \leq \sem{C}$.  The next approximation
    is $\den{\Nat}_{i+1} = N(\den{\Nat}_i)$.  Therefore,
    \begin{math}
      \den{\Nat}_{i+1}
      =
      N(\den{\Nat}_i)
      \leq
      N(\sem{C})
      \leq
      \sem{C}
    \end{math}
    by monotonicity of $N$ (\cref{thm:nat-monotonicity}).
  \end{itemize}

  Similarly, suppose that $\sem{C}$ is any reducibility candidate such that
  $S_{\den{A}}(\sem{C}) \geq \sem{C}$.  Then $\den{\Stream A}_i \geq \sem{C}$
  follows by induction on $i$:
  \begin{itemize}
  \item ($0$) $\den{\Stream A}_0 = \NegCand\{\}$ is the greatest reducibility
    candidate w.r.t subtyping, so $\den{\Stream A}_0 \geq \sem{C}$ trivially.
  \item ($i+1$) Assume that $\den{\Stream A}_i \geq \sem{C}$.  The next
    approximation is $\den{\Stream A}_{i+1} = S_{\den{A}}(\den{\Stream A}_i)$.
    Therefore,
    \begin{math}
      \den{\Stream A}_{i+1}
      =
      S_{\den{A}}(\den{\Stream A}_i)
      \geq
      S_{\den{A}}(\sem{C})
      \geq
      \sem{C}
    \end{math}
    by monotonicity of $S_{\den{A}}$ (\cref{thm:stream-monotonicity}).
  \end{itemize}

  In other words, we know that every $\sem{C} \geq N(\sem{C})$ is an upper bound
  of all approximations $\den{\Nat}_i$, and every
  $\sem{C} \leq S_{\den{A}}(\sem{C})$ is a lower bound of all approximations
  $\den{\Stream A}_i$.  So because $\bigcurlyvee$ is the \emph{least} upper
  bound and $\bigcurlywedge$ is the \emph{greatest} lower bound, we have
  \begin{align*}
    \bigcurlyvee_{i=0}^\infty\den{\Nat}_i
    &\leq
    \sem{C}
    &\text{(if $\sem{C} \geq N(\sem{C})$)}
    &&
    \sem{C}
    &\leq
    \bigcurlywedge_{i=0}^\infty\sem{\Stream A}_i
    &\text{(if $\sem{C} \leq S_{\den{A}}(\sem{C})$)}
  \end{align*}
  In other words, $\bigcurlyvee_{i=0}^\infty\den{\Nat}_i$ is a lower bound of
  the $\sem{C}$s that make up $\den{\Nat}$ and
  $\bigcurlywedge_{i=0}^\infty\sem{\Stream A}_i$ is an upper bound of the
  $\sem{C}$s that make up $\den{\Stream A}$
  (\cref{thm:nat-subtype-closure,thm:stream-subtype-closure}).  Again, since
  $\bigcurlywedge$ is the \emph{greatest} lower bound and $\bigcurlyvee$ is the
  \emph{least} upper bound, we have
  \begin{small}
  \begin{align*}
    \bigcurlyvee_{i=0}^\infty\den{\Nat}_i
    &\leq
    \bigcurlywedge
    \{\sem{C} \mid \sem{C} \geq N(\sem{C})\}
    =
    \den{\Nat}
    &
    \bigcurlywedge_{i=0}^\infty\sem{\Stream A}_i
    &\geq
    \bigcurlyvee
    \{\sem{C} \mid \sem{C} \leq S_{\den{A}}(\sem{C})\}
    =
    \den{\Stream A}
    &&
    \qedhere
  \end{align*}
  \end{small}
\end{proof}

\subsection{Adequacy}
\label{sec:adequacy}

To conclude, we give the full proof of adequacy (\cref{thm:adequacy}) here.
With the lemmas that precede in \cref{sec:co-inductive-candidates}, the
remaining details are now totally standard.  Soundness ensures the safety of the
$\mathit{Cut}$ rule and completeness ensures that the terms of each type are
included in their interpretations as reducibility candidates.  The main role of
adequacy is to show that the guarantees given by the premises of each typing
rule are strong enough to prove their conclusion, and that the notion of
substitution corresponds to the interpretation of typing environments.

\thmadequacy*
\begin{proof}
  By (mutual) induction on the given typing derivation for the command or
  (co)\-term:
  \begin{itemize}
  \item ($\mathit{Cut}$) \emph{Inductive Hypothesis}:
    $\den{\Gamma \entails v \givestype A}$ and
    $\den{\Gamma \entails e \takestype A}$.

    Let $\rho \in \den{\Gamma}$, so $v\subs{\rho}\in\den{A}$ and
    $e\subs{\rho}\in\den{A}$ by the inductive hypothesis.  Observe that
    \begin{math}
      \cut{v}{e}\subs{\rho}
      =
      \cut{v\subs{\rho}}{e\subs{\rho}}
      \in
      \Bot
    \end{math}
    because all reducibility candidates are sound.  In other words,
    $\den{\Gamma \entails \cut{v}{e} \contra}$.

  \item ($\mathit{VarR}$ and $\mathit{VarL}$) $x\subs{\rho} \in \den{A}$ for any
    $\rho \in \den{\Gamma, x \givestype A}$ by definition.  Dually,
    $\alpha\subs{\rho} \in \den{A}$ for any
    $\rho \in \den{\Gamma, \alpha \takestype A}$ by definition.  In other words,
    $\den{\Gamma, x \givestype A \entails x \givestype A}$ and
    $\den{\Gamma, \alpha \takestype A \entails \alpha \takestype A}$.

  \item ($\mathit{ActR}$) \emph{Inductive Hypothesis}:
    $\den{\Gamma, \alpha \takestype A \entails c \contra}$.

    Let $\rho \in \den{\Gamma}$, so that for all $E \in \den{A}$,
    $\asub{\alpha}{E},\rho \in \den{\Gamma, \alpha \takestype A}$ by definition
    and
    \begin{math}
      c\subs{\rho}\subst{\alpha}{E}
      =
      c\subs{\rho,\asub{\alpha}{E}}
      \in
      \Bot
    \end{math}
    by the inductive hypothesis.  Thus,
    \begin{math}
      (\outp\alpha c)\subs{\rho}
      =
      \outp\alpha (c\subs{\rho})
      \in
      \den{A}
    \end{math}
    by activation (\cref{thm:activation-sound}).  In other words,
    $\den{\Gamma \entails \outp\alpha c \givestype A}$.

  \item ($\mathit{ActL}$)  Dual to $\mathit{ActR}$ above.

  \item (${\to}R$) \emph{Inductive Hypothesis}:
    $\den{\Gamma, x \givestype A \entails v \givestype B}$.

    Let $\rho \in \den{\Gamma}$, so for all $W \in \den{A}$,
    $\asub{x}{W},\rho \in \den{\Gamma, x \givestype A}$ by definition and
    \begin{math}
      v\subs{\rho}\subst{x}{W}
      =
      v\subs{\rho,\asub{x}{W}}
      \in
      \den{B}
    \end{math}
    by the inductive hypothesis.  Thus,
    \begin{math}
      (\lambda x. v)\subs{\rho}
      =
      \lambda x. (v\subs{\rho})
      \in
      \den{A \to B}
    \end{math}
    by \cref{thm:function-abstraction}.  In other words
    $\den{\Gamma \entails \lambda x. v \givestype A \to B}$.

  \item (${\to}L$) \emph{Inductive Hypothesis}:
    $\den{\Gamma \entails V \givestype A}$ and
    $\den{\Gamma \entails E \takestype B}$.

    Let $\rho \in \den{\Gamma}$, so $V\subs{\rho} \in \den{A}$ and
    $E\subs{\rho} \in \den{B}$ by the inductive hypothesis.  Thus,
    \begin{math}
      (\app V E) \subs \rho
      =
      \app{V\subs{\rho}}{E\subs{\rho}}
      \in
      \den{A \to B}
    \end{math}
    by definition of $\den{A \to B}$ and \cref{thm:neg-completion}.  In other
    words, $\den{\Gamma \entails \app V E \takestype A \to B}$.

  \item (${\Nat}R_{\Zero}$): For any substitution $\rho$,
    \begin{math}
      \Zero\subs{\rho}
      =
      \Zero
      \in
      \den{\Nat}
    \end{math}
    by \cref{thm:pos-completion}.  In other words,
    $\den{\Gamma \entails \Zero \givestype \Nat}$.

  \item (${\Nat}R_{\Succ}$) \emph{Inductive Hypothesis}:
    $\den{\Gamma \entails V : \Nat}$.

    Let $\rho \in \den{\Gamma}$, so that $V\subs{\rho} \in \den{\Nat}$ by the
    inductive hypothesis.  Thus,
    \begin{math}
      (\Succ V)\subs{\rho}
      =
      \Succ (V\subs{\rho})
      \in
      \den{\Nat}
    \end{math}
    by \cref{thm:pos-completion}.  In other words,
    $\den{\Gamma \entails \Succ V \givestype \Nat}$.

  \item (${\Nat}L$) \emph{Inductive Hypothesis}:
    $\den{\Gamma \entails v \givestype A}$,
    $\den{\Gamma, x\givestype\Nat, y \givestype A \entails w \givestype A}$,
    and $\den{\Gamma \entails E \takestype A}$.

    Let $\rho \in \den{\Gamma}$, so that by the inductive hypothesis:
    \begin{itemize}
    \item $E\subs{\rho} \in \den{A}$,
    \item $v \subs{\rho} \in \den{A}$, and
    \item $w\subs{\rho}\subs{\asub{x}{V},\asub{y}{W}} = w\subs{\rho,\asub{x}{V},\asub{y}{W}} \in \den{A}$ for all
      $V \in \den{\Nat}$ and $W \in \den{A}$.
    \end{itemize}
    Thus,
    \begin{align*}
      &
      \Rec
      \{
      \Zero \to v\subs{\rho}
      \mid
      \Succ x \to y. w\subs{\rho}
      \}
      \With E\subs{\rho}
      \\[-2ex]
      &=
      (
      \Rec
      \{
      \Zero \to v
      \mid
      \Succ x \to y. w
      \}
      \With E
      )
      \subs{\rho}
      \in
      \bigcurlyvee_{i=0}^\infty\den{\Nat}_i
      =
      \den{\Nat}
    \end{align*}
    by \cref{thm:nat-inversion,thm:nat-recursion}.  In other words
    \begin{align*}
      \den{
        \Gamma
        \entails
        \Rec
        \{
        \Zero \to v
        \mid
        \Succ x \to y. w
        \}
        \With E
        \takestype
        \Nat
      }
    \end{align*}
    
  \item (${\Stream}R$) \emph{Inductive Hypothesis}:
    $\den{\Gamma \entails E \takestype A}$.

    Let $\rho \in \den{\Gamma}$, so that $E\subs{\rho} \in \den{A}$ by the
    inductive hypothesis.  Thus,
    \begin{math}
      (\Head E)\subs{\rho}
      =
      \Head (E\subs{\rho})
      \in
      \den{\Stream A}
    \end{math}
    by \cref{thm:neg-completion}. In other words,
    $\den{\Gamma \entails \Head E \takestype \Stream A}$.

  \item (${\Stream}L_{\Head}$) \emph{Inductive Hypothesis}:
    $\den{\Gamma \entails E \takestype \Stream A}$.

    Let $\rho \in \den{\Gamma}$, so that $E\subs{\rho} \in \den{\Stream A}$ by
    the inductive hypothesis.  Thus,
    \begin{math}
      (\Tail E)\subs{\rho}
      =
      \Tail (E\subs{\rho})
      \in
      \den{\Stream A}
    \end{math}
    by \cref{thm:neg-completion}. In other words,
    $\den{\Gamma \entails \Tail E \takestype \Stream A}$.

  \item (${\Stream}L_{\Tail}$) \emph{Inductive Hypothesis}:
    $\den{\Gamma, \alpha \takestype A \entails e \takestype B}$,
    $\den{\Gamma, \beta \takestype \Stream A, \gamma \takestype B \entails f \takestype B}$,
    and
    $\den{\Gamma \entails V \givestype B}$.

    Let $\rho \in \den{\Gamma}$, so that by the inductive hypothesis:
    \begin{itemize}
    \item $V\subs{\rho} \in \den{B}$,
    \item $e\subs{\rho}\subst{\alpha}{E} = e\subs{\rho,\asub{\alpha}{E}} \in \den{B}$ for all $E \in \den{A}$, and
    \item $f\subs{\rho}\subs{\asub{\beta}{E},\asub{\gamma}{F}} = f\subs{\rho,\asub{\beta}{E},\asub{\gamma}{F}} \in \den{B}$ for all $E \in \den{\Stream A}$ and $F \in \den{B}$.
    \end{itemize}
    Thus,
    \begin{align*}
      &
      \CoRec
      \{
      \Head\alpha \to e\subs{\rho}
      \mid
      \Tail\beta \to \gamma. f\subs{\rho}
      \}
      \With V\subs{\rho}
      \\[-2ex]
      &=
      (
      \CoRec
      \{
      \Head\alpha \to e
      \mid
      \Tail\beta \to \gamma. f
      \}
      \With V
      )
      \subs{\rho}
      \in
      \bigcurlywedge_{i=0}^\infty\den{\Stream A}_i
      =
      \den{\Stream A}
    \end{align*}
    by \cref{thm:stream-inversion,thm:stream-corecursion}.  In other words
    \begin{equation*}
      \den
      {
        \Gamma
        \entails
        \CoRec
        \{
        \Head\alpha \to e
        \mid
        \Tail\beta \to \gamma. f
        \}
        \With V
        \givestype
        \Stream A
      }
      \qedhere
    \end{equation*}
  \end{itemize}
\end{proof}


\label{lastpage01}

\end{document}